\newcolumntype{C}[1]{>{\centering\arraybackslash}m{#1}}
\tikzset{main node/.style={circle,draw,minimum size=0.3cm,inner sep=0pt},  }
\definecolor{Gray}{gray}{0.5}
\newtheorem{thm}{Theorem}
\newtheorem{lem}[thm]{Lemma}
\newtheorem{defn}{Definition}
\def\id#1{\ensuremath{\mathit{#1}}}
\def\idrm#1{\ensuremath{\mathrm{#1}}}
\begin{document}

\title{MSQ-Index: A Succinct Index for Fast Graph Similarity Search}

\numberofauthors{4}

\author{
\alignauthor
Xiaoyang Chen \\
       \affaddr{Xidian University}\\
       \affaddr{Xi'an 710071, China}\\
       \email{chenxyu1991@gmail.com}
\alignauthor
Hongwei Huo\\
       \affaddr{Xidian University}\\
       \affaddr{Xi'an 710071, China}\\
       \email{hwhuo@mail.xidian.edu.cn}
\alignauthor
Jun Huan \\
       \affaddr{The University of Kansas}\\
       \affaddr{Lawrence, KS 66045, USA}\\
       \email{jhuan@ku.edu}
\and  
\alignauthor
Jeffrey Scott Vitter\\
       \affaddr{The University of Mississippi}\\
       \affaddr{Oxford, MS 38677-1848, USA}\\
       \email{JSV@OleMiss.edu}
}

\maketitle
\begin{abstract}

Graph similarity search has received considerable attention in
many applications, such as bioinformatics, data mining, pattern
recognition, and social networks. Existing methods for this problem
have limited scalability because of the huge amount of memory
they consume when handling very large graph databases with millions
or billions of graphs.

In this paper, we study the problem of graph similarity search
under the graph edit distance constraint. We present a space-efficient
index structure based upon the $q$-$\id{gram}$ tree that incorporates
succinct data structures and hybrid encoding to achieve improved
query time performance with minimal space usage. Specifically,
the space usage of our index requires only 5\%--15\% of the previous
state-of-the-art indexing size on the tested data while at the
same time achieving 2--3 times acceleration in query time with small
data sets. We also boost the query performance by augmenting the
global filter with range search, which allows us to perform a query
in a reduced region. In addition, we propose two effective filters
that combine degree structures and label structures. Extensive
experiments demonstrate that our proposed approach is superior
in space and competitive in filtering to the state-of-the-art
approaches. To the best of our knowledge, our index is the first
in-memory index for this problem that successfully scales to cope
with the large dataset of~25 million chemical structure graphs
from the PubChem dataset.

\end{abstract}

\section{Introduction}

Graphs are widely used to model complicated data objects in many
disciplines, such as bioinformatics, social networks, software
and data engineering. Effective analysis and management of graph
data become increasingly important. Many queries have been investigated
and they can be roughly divided into two broad categories: graph exact
search~\cite{YanYH2004} and graph similarity search~\cite{ShangZLYR2010}.
Compared with exact search, similarity search can provide a robust
solution that permits error-tolerant and supports to search patterns
that are not precisely defined.

Similarity computation between two attributed graphs is a core
operation of graph similarity search and it has been used in
various applications such as pattern recognition, graph classification
and chemistry analysis~\cite{KellyH2005}. There are at least four
metrics being well investigated: graph edit distance~\cite{GoudaH2016,WangWYY2012,ZengTWFZ2009,ZhaoXLLZ2013,ZhengZLWZ2015},
maximal common subgraph distance~\cite{FernandezV2001}, graph
alignment~\cite{FrohlichWS2005} and graph kernel functions~\cite{ShervashidzeB2009,WangSHG2009}.
In this paper, we focus on the graph edit distance since it is
applicable to virtually all types of data graphs and can also
capture precisely structural differences. The graph edit distance~$\id{ged}(g, h)$
between two graphs~$g$ and~$h$ is defined as the minimum number
of edit operations needed to transform one graph to another.

Given a graph database~$G$, a query graph~$h$ and an edit distance
threshold~$\tau$, the graph similarity search problem aims to find
all graphs~$g$ in~$G$ satisfying $ged(g, h)\leq~\tau$. Unfortunately,
computing the graph edit distance is known to be an NP-hard
problem~\cite{ZengTWFZ2009}. Therefore, for a large transaction
database, such as PubChem, which stores information about roughly~50
million chemical compounds, similarity search is very challenging.

Most of the existing methods adopt the filter-and-verify schema
to speed up the search. With such a schema, we first filter data
graphs that are not possible results to generate a candidate set,
and then validate the candidate graphs with the expensive graph edit
distance computations. In general, the existing filters can be divided
into four categories: global filter, $q$-$\id{gram}$ counting filter,
mapping distance-based filter and disjoint partition-based filter.
Specifically, number count filter~\cite{ZengTWFZ2009} and label
count filter~\cite{ZhaoXLW2012} are two global filters. The former
is derived based upon the differences of the number of vertices and
edges of comparing graphs. The later takes labels as well as structures
into account, further improving the former. $\kappa$-AT~\cite{WangWYY2012}
and GSimJoin~\cite{ZhaoXLW2012} are two major~$q$-$\id{gram}$ counting
filters. They considered a $\kappa$-$\id{adjacent}$ subtree and
a simple path of length $p$ as a $q$-$\id{gram}$, respectively.
C-Star~\cite{ZengTWFZ2009} and Mixed~\cite{ZhengZLWZ2013,ZhengZLWZ2015}
are two major mapping distance-based filters. The lower bounds are
derived based on the minimum weighted bipartite graphs between the
star and branch structures of comparing graphs, respectively.
Pars~\cite{ZhaoXLLZ2013} is a disjoint partition-based filter.
It divides each data graph~$g$ into several disjoint substructures
and prunes~$g$ by the subgraph isomorphism.

Even though promising preliminary results have been achieved by
existing methods GSimJoin~\cite{ZhaoXLW2012}, C-Star ~\cite{ZengTWFZ2009}
and Mixed ~\cite{ZhengZLWZ2015}, our empirical evaluation of all
the methods aforementioned showed that they are not scalable to
large graph databases. The critical limitation of existing methods
are: (1) existing filters having a weak filter ability produce large
candidate sets, resulting in an unacceptable computational cost
for verification, (2) the index storage cost of the existing methods
is too expensive to run properly. For example, for a database of~10
million graphs, C-Star on average produces $5 \times 10^5$ number
of candidates for verification when $\tau = 5$. Both GSimJoin and
Mixed produce an index that is too large to fit into the main memory
for large input data. The details of the empirical study are
presented in Section~\ref{sec:experiments}.

To solve the above issues, we propose a space-efficient index structure
for graph similarity search which significantly reduces the storage
space. Our contributions in this paper are summarized below.

\begin{itemize}

\item
  We propose two effective filters, i.e. degree-based~$q$-$\id{gram}$
  counting filter and degree-sequence filter, by using the degree
  structures and label structures.
\item
  We create a $q$-$\id{gram}$ tree to speed up filtering process.
  More importantly, we propose the succinct representation of the
  $q$-$\id{gram}$ tree which combines with hybrid coding, significantly
  reducing the space required for the representation of the $q$-$\id{gram}$
  tree.
\item
  We convert the number count filter to a two-dimensional orthogonal
  range searching, which helps us perform a query at a reduced region
  and hence further improves the filtering performance.
\item
  We have conducted extensive experiments over both real and synthetic
  datasets to evaluate the index storage space, construction time,
  filtering capability, and response time. The result is graph similarity
  search index that we refer to as ``MSQ-Index''. It confirms the effectiveness
  and efficiency of our proposed approaches and show that our method
  can scale well to cope with the large dataset of 25 million chemical
  compounds from the PubChem dataset.

\end{itemize}

The rest of this paper is organized as follows: In Section~\ref{sec:preliminaries},
we introduce the problem definition. In Section~\ref{sec:filters},
we present the degree-based $q$-$\id{gram}$ counting filter and the
degree-sequence filter. In Section~\ref{sec:queryRegion}, we give
a method to reduce the query region. In Section~\ref{sec:Q-gramIndex},
we introduce the index structure. In Section~\ref{sec:queryProcessing},
we give the query algorithm. In Section~\ref{sec:experiments}, we report
the experimental results. We investigate the research work related to
this paper in Section~\ref{sec:relatedWorks}. Finally, we make concluding
remarks in Section~\ref{sec:conclusion}.

\section{Preliminaries}
\label{sec:preliminaries}

In this section, we introduce the basic notations and definitions
of graph edit distance and graph similarity search.

\begin{defn}[Attributed Graph]
\label{def:graph}

A labeled graph is defined as a six-tuple $g = (V_g, E_g, \mu,\zeta,\Sigma_{V_g}, \Sigma_{E_g})$,
where~$V_g$ is the set of vertices, $E_g \subseteq V_g \times V_g$
is the set of edges, $\mu:V_g \rightarrow \Sigma_{V_g}$ is the
vertex labeling function which assigns a label~$\mu(v)$ to the
vertex~$v$, $\zeta:E_g \rightarrow \Sigma_{E_g}$ is the edge
labeling function which assigns a label~$\zeta(e)$ to the
edge~$e$, $\Sigma_{V_g}$ and $\Sigma_{E_g}$ are the label
multisets of~$V_g$ and~$E_g$, respectively.

\end{defn}

In this paper, we only focus on simple undirected graphs
without multi-edge or self-loop. We use $|V_g|$ and $|E_g|$
to denote the number of vertices and edges in $g$, respectively.
The graph size refers to $|V_g|$ in this paper. Although in the
following discussion we only focus on undirected graphs, our
methods can be extended to handle directed graphs.

\begin{defn}[Graph Isomorphism {\cite{YanYH2004}}]
\label{def:isomorphism}

Given two graphs $g$ and $h$, an isomorphism of graphs $g$ and
$h$ is a bijection $f: V_g \rightarrow V_h$, such that (1) for
all $v\in V_g$, $f(v) \in V_h$ and $\mu(v) = \mu(f(v))$. (2) for
all~$e(u, v) \in E_g$, $e(f(u), f(v)) \in E_h$ and $\zeta(e(u, v)) = \zeta(e(f(u), f(v)))$.
If~$g$ is isomorphic to~$h$, we denote~$g \cong h$.

\end{defn}

There are six primitive edit operations that can transform one
graph to another~\cite{JusticeH2006}. These edit operations are
inserting/deleting an isolated vertex, inserting/deleting an edge
between two vertices and substituting the label of a vertex or
an edge. We denote the substitution of two vertices~$u$ and
$v$ by~$(u \rightarrow v)$, the deletion of vertex~$u$
by~$(u \rightarrow \epsilon)$, and the insertion of vertex~$v$
by $(\epsilon \rightarrow v)$. For edges, we use a similar notation.
Given two graphs~$g$ and $h$, an edit path $P = \langle p_1, p_2, \ldots , p_k\rangle$
is a sequence of edit operations that transforms $h$ to $g$,
such as $h = h^0 \xrightarrow{p_1} h^1\xrightarrow{p_2} \ldots \xrightarrow{p_k} h^k \cong g$.
In Figure~\ref{fig:Fig01}, we give an example of an edit path~$P$
between~$g$ and~$h$, where the vertex labels are represented
by different symbols. The length of $P$ is~6, which consists
of two edge deletions, one vertex deletion, one vertex insertion
and two edge insertions. In the following sections, we use~$|P|$
to denote the length of $P$.

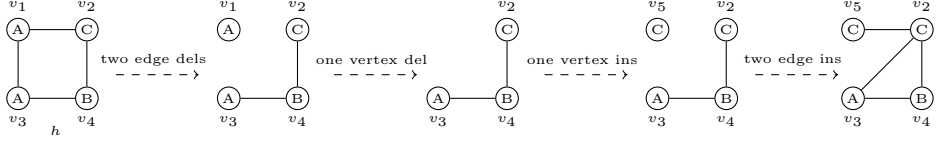
\begin{figure*}[!htbp]
\centering
\begin{tikzpicture}[xshift=-1cm]
\tikzset{main node/.style={circle,draw,minimum size=0.3cm,inner sep=0pt},  }
\tiny{

    \node[main node] (1) [label=above:$v_1$] {A};
    \node[main node] (2) [right = 0.6cm  of 1,label=above:$v_2$] {C};
    \node[main node] (3) [below = 0.6cm  of 1,label=below:$v_3$] {A};
    \node[main node] (4) [right = 0.6cm  of 3,label=below:$v_4$] {B};

    \path[draw,thin]
    (1) edge node {} (2)
    (1) edge node {} (3)
    (2) edge node {} (4)
    (3) edge node {} (4)
    ;

    \draw[->,dashed]  (1.3,-0.6) -- (2.4,-0.6);
    \node at (1.8, -0.4) {\tiny{two edge dels}};

    \begin{scope}[xshift=2.8cm]
    \node[main node] (5) [label=above:$v_1$] {A};
    \node[main node] (6) [right = 0.6cm  of 5,label=above:$v_2$] {C};
    \node[main node] (7) [below = 0.6cm  of 5,label=below:$v_3$] {A};
    \node[main node] (8) [right = 0.6cm  of 7,label=below:$v_4$] {B};

    \path[draw,thin]
    (6) edge node {} (8)
    (7) edge node {} (8)
    ;
    \end{scope}

    \draw[->,dashed]  (4.2,-0.6) -- (5.3,-0.6);
    \node at (4.7, -0.4) {\tiny{one vertex del}};

    \begin{scope}[xshift=6.5cm]
    \node[main node] (1) [label=above:$v_2$] {C};
    \node[main node] (2) [below = 0.6cm  of 1,label=below:$v_4$] {B};
    \node[main node] (3) [left = 0.6cm  of 2,label=below:$v_3$] {A};

    \path[draw,thin]
    (1) edge node {} (2)
    (3) edge node {} (2)
    ;
    \end{scope}

    \draw[->,dashed]  (7.0,-0.6) -- (8.1,-0.6);
    \node at (7.5, -0.4) {\tiny{one vertex ins}};

    \begin{scope}[xshift=8.5cm]
    \node[main node] (1) [label=above:$v_5$] {C};
    \node[main node] (2) [right = 0.6cm  of 1,label=above:$v_2$] {C};
    \node[main node] (3) [below = 0.6cm  of 1,label=below:$v_3$] {A};
    \node[main node] (4) [right = 0.6cm  of 3,label=below:$v_4$] {B};

    \path[draw,thin]
    (2) edge node {} (4)
    (3) edge node {} (4)
    ;
    \end{scope}

   \draw[->,dashed]  (9.8,-0.6) -- (10.9,-0.6);
    \node at (10.3, -0.4) {\tiny{two edge ins}};

    \begin{scope}[xshift=11.1cm]
    \node[main node] (1) [label=above:$v_5$] {C};
    \node[main node] (2) [right = 0.6cm  of 1,label=above:$v_2$] {C};
    \node[main node] (3) [below = 0.6cm  of 1,label=below:$v_3$] {A};
    \node[main node] (4) [right = 0.6cm  of 3,label=below:$v_4$] {B};

    \path[draw,thin]
    (1) edge node {} (2)
    (2) edge node {} (3)
    (2) edge node {} (4)
    (3) edge node {} (4)
    ;
    \end{scope}

}

\tkzText[below](0.5,-1.2){$h$}
\end{tikzpicture}
\caption{\small{An edit path $P$ between graphs $g$ and $h$.}}
\label{fig:Fig01}
\end{figure*}

\begin{defn}[Optimal Edit Path]
\label{def:editPath}

 Given two \\ graphs~$g$ and $h$, an edit path $P$ between~$g$
 and~$h$  is an optimal edit path if and only if there does not
 exist another  edit path~$P'$ such that $|P'| < |P|$. The graph
 edit distance between them, denoted by $\id{ged}(g, h)$, is the
 length of the optimal edit path.

\end{defn}

\noindent \textbf{Problem statement}: Given a graph database $G = \{g_1, g_2,
\\ {\dots}, \id{g_{|G|}}\}$, a query graph $h$, and an edit distance
$\idrm{threshold}\ \tau$, the problem is to find all the graphs~$g$
in~$G$ such that $\id{ged}(g, h) \leq~\tau$, where~$\id{ged}(g, h)$
is the graph edit distance of graphs~$g$ and~$h$ defined in Definition~\ref{def:editPath}.

Figure~\ref{fig:Fig02} shows a query graph $h$ and three data
$\idrm{graphs}$~$g_1, g_2$, and $g_3$. We can obtain that $\id{ged}(g_1, h) = 3$,
$\id{ged}(g_2, h) = 4$, and~$\id{ged}(g_3, h) = 3$. If the edit
distance threshold~$\tau$ = 3, $g_1$ and $g_3$ are the required graphs.

\begin{figure}[!htbp]
\centering
\begin{tikzpicture}[xshift=0.4cm]
\tiny{
    \node[main node] (1) {A};
    \node[main node] (2) [right = 0.6cm  of 1] {A};
    \node[main node] (3) [below = 0.6cm  of 1] {C};
    \node[main node] (4) [right = 0.6cm  of 3] {B};

    \path[draw,thin]
    (1) edge node {} (2)
    (1) edge node {} (3)
    (2) edge node {} (4)
    (3) edge node {} (4)
    ;

    \begin{scope}[xshift=1.6cm]
    \node[main node] (1) {C};
    \node[main node] (2) [below = 0.6cm  of 1] {A};
    \node[main node] (3) [right = 0.6cm  of 2] {A};

    \path[draw,thin]
    (1) edge node {} (2)
    (2) edge node {} (3)
    ;
    \end{scope}

    \begin{scope}[xshift=4.0cm]
    \node[main node] (1) {A};
    \node[main node] (2) [below = 0.6cm  of 1] {C};
    \node[main node] (3) [left = 0.6cm  of 2] {A};
    \node[main node] (4) [right = 0.6cm  of 2] {A};

    \path[draw,thin]
    (1) edge node {} (2)
    (2) edge node {} (3)
    (2) edge node {} (4)
    ;
    \end{scope}

    \begin{scope}[xshift=5.5cm]
    \node[main node] (1) {C};
    \node[main node] (2) [below = 0.6cm  of 1] {A};
    \node[main node] (3) [right = 0.6cm  of 1] {C};
    \node[main node] (4) [right = 0.6cm  of 2] {B};

    \path[draw,thin]
    (1) edge node {} (2)
    (2) edge node {} (3)
    (2) edge node {} (4)
    (3) edge node {} (4)
    ;
    \end{scope}

\tkzText[below](0.5,-1.1){$h$}
\tkzText[below](2.1,-1.1){$g_1$}
\tkzText[below](4.0,-1.1){$g_2$}
\tkzText[below](6.0,-1.1){$g_3$}
}
\end{tikzpicture}
\caption{\small{Query graph $h$ and data graphs $g_1$, $g_2$, and $g_3$.}}
\label{fig:Fig02}
\end{figure}
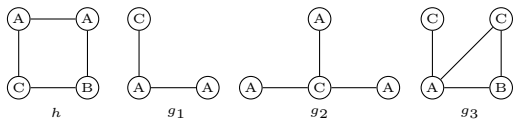

The computation of graph edit distance is an NP-hard problem~\cite{ZengTWFZ2009}.
The state-of-the-art approaches like~\cite{WangWYY2012,ZengTWFZ2009,ZhaoXLLZ2013,ZhaoXLW2012,ZhengZLWZ2013}
for graph similarity search use a filter-and-verify schema to
speed up query process. In the filtering phase, it computes the
candidate set $\id{Cand} = \{g:\xi(g, h) \leq~\tau \ and \ g \in G\}$,
where $\xi(g, h)$ is the lower bound on $\id{ged}(g,h)$. In the
verification phase, for each graph $g$ in~$\id{Cand}$, it needs
to compute $\id{ged}(g,h)$. Obviously, it is good for the size~$|\id{Cand}|$
as small as possible.

In this paper, we propose two filters, i.e., degree-based~$q$-$\id{gram}$
counting filter and degree-sequence filter using the degree structures
and label structures in a graph. Besides, we also use the following
two simple but effective global filters, i.e., number count filter~\cite{ZengTWFZ2009}
and label count filter~\cite{ZhaoXLW2012}. Number count filter is
derived based upon the differences of the number of vertices and
edges of comparing graphs and given by $\id{dist_N}(g, h) = ||\id{V_g}|-|\id{V_h}|| + ||\id{E_g}|-|\id{E_h}||$.
Label count filter improves the number count filter by taking labels
as well as structures into account and is given by $\id{dist_L}(g, h) = \idrm{max}\{|V_g|, |V_h|\}-|\Sigma_{\id{V_g}} \cap \Sigma_{\id{V_h}}|+
\idrm{max}\{|\id{E_g}|, |\id{E_h}|\}-|\Sigma_{\id{E_g}} \cap \Sigma_{\id{E_h}}|$.
By using all of them, we can obtain a candidate set as small as possible.

\section{Multiple Filters}
\label{sec:filters}

\subsection{Optimal Edit Path}

Given two graphs $g$ and $h$, and an optimal edit path~$P$ between
them, we group the operations on $P$ into five sets of edit operations:
vertex deletion group~$\id{P_{VD}}=\{p_i:p_i = (u \rightarrow~\epsilon) \in P\}$,
vertex insertion group $\id{P_{VI}}=\{p_i:p_i=(\epsilon \rightarrow~v) \in P\}$,
vertex substitution group $\id{P_{VS}} = \{p_i:p_i=(u \rightarrow v) \in P\}$,
edge deletion group $\id{P_{ED}}=\{p_i:p_i=(e(u, v) \rightarrow \epsilon) \ and \
((u \rightarrow \epsilon) \in \id{P_{VD}} \ or \ (v \rightarrow~\epsilon) \in \id{P_{VD}}) \}$
consists of the edge deletions performed on the deleted vertices,
and edge operation group $\id{P_O}$ consists of the edit operations
performed on edges except for those in $\id{P_{ED}}$.

For an optimal edit path $P$ between $g$ and $h$ , the insertion/deletion/substitution
edit operation on a vertex~$v$ or an edge $e$ must happen only
once, thus the edit operations in $\id{P_{VD}}$ are independent
of each other. Therefore, we can obtain an edit path by arbitrarily
arranging the edit operations in~$\id{P_{VD}}$. In the rest of this
paper, we use $\id{P_{VD}}$ to denote the edit operation set and the
edit operation sequence interchangeably when there is no ambiguity.
Similarly, $\id{P_{VI}}$, $\id{P_{VS}}$, $\id{P_{ED}}$ and $\id{P_O}$
could be also considered as the edit operation sets or paths. For
an optimal edit path $P$, we can always obtain an optimal edit
path $P'= \id{P_{ED}} \cdot \id{P_{VD}} \cdot \id{P_{VI}} \cdot \id{P_{VS}} \cdot \id{P_O}$
by arranging the edit operations in $P$. In the following section,
we consider $P = \id{P_{ED}} \cdot \id{P_{VD}} \cdot \id{P_{VI}} \cdot \id{P_{VS}} \cdot \id{P_O}$
as the default optimal edit path for two given graphs.

\begin{lem}
\label{lem:lem1}

Given graphs $g$ and $h$, and an optimal edit path $P$ that
transforms~$h$ to~$g$, then we have $|\id{P_{VD}}| = \idrm{max}\{|\id{V_h}|-|\id{V_g}|,0\}$
and $|\id{P_{VI}}| = \idrm{max}\{|\id{V_g}|-|\id{V_h}|,0\}$.

\end{lem}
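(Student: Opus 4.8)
The plan is to argue by a two-sided inequality on $|\id{P_{VD}}|$, using the canonical decomposition $P = \id{P_{ED}} \cdot \id{P_{VD}} \cdot \id{P_{VI}} \cdot \id{P_{VS}} \cdot \id{P_O}$ and tracking how the vertex count of the intermediate graph changes along $P$. First I would observe that the only operations on $P$ that change the number of vertices are those in $\id{P_{VD}}$ (each decreases it by one) and those in $\id{P_{VI}}$ (each increases it by one); operations in $\id{P_{ED}}$, $\id{P_{VS}}$, and $\id{P_O}$ leave $|V|$ unchanged. Since $P$ transforms $h$ into a graph isomorphic to $g$, conservation of vertex count gives $|V_h| - |\id{P_{VD}}| + |\id{P_{VI}}| = |V_g|$, i.e. $|\id{P_{VI}}| - |\id{P_{VD}}| = |V_g| - |V_h|$. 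This single equation does not pin down both quantities, so the real content is showing that $P$ does not do "wasteful" work: it never both deletes a vertex and inserts a vertex in a way that could be avoided.

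The key step is the minimality argument. Suppose, for contradiction, that $|\id{P_{VD}}| > \max\{|V_h| - |V_g|, 0\}$. Combined with the conservation identity, this forces $|\id{P_{VI}}| > \max\{|V_g| - |V_h|, 0\}$ as well, so $P$ contains at least one vertex deletion $p = (u \to \epsilon)$ and at least one vertex insertion $p' = (\epsilon \to v)$. I would then construct a strictly shorter edit path $P'$ by "merging" such a pair into a single vertex substitution $(u \to v)$ (relabelling $u$ to $\mu(v)$ rather than deleting it and inserting $v$ fresh), and adjusting the incident-edge operations accordingly: the edge deletions in $\id{P_{ED}}$ that were applied to $u$ and the edge insertions used to attach $v$ can be replaced by the corresponding edge substitutions/insertions/deletions on the retained vertex, with no net increase in count. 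This removes the two vertex operations (saving $2$) while adding at most one substitution (costing $1$), so $|P'| < |P|$, contradicting the optimality of $P$. Hence $|\id{P_{VD}}| \le \max\{|V_h| - |V_g|, 0\}$, and by the conservation identity $|\id{P_{VI}}| \le \max\{|V_g| - |V_h|, 0\}$. Conversely, $|\id{P_{VD}}| \ge |V_h| - |V_g|$ whenever $|V_h| > |V_g|$ (you cannot decrease the vertex count by more than the deletions do, net of insertions, and insertions only help the other direction) — more cleanly, $|\id{P_{VD}}| \ge 0$ always and $|\id{P_{VD}}| = |\id{P_{VI}}| + |V_h| - |V_g| \ge |V_h| - |V_g|$ — giving $|\id{P_{VD}}| \ge \max\{|V_h| - |V_g|, 0\}$. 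The two bounds together yield the claimed equality, and symmetrically for $|\id{P_{VI}}|$.

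The main obstacle is making the "merge a deletion–insertion pair into a substitution" surgery rigorous at the level of the edit path: one must check that after the replacement the resulting sequence is still a valid edit path transforming $h$ into a graph isomorphic to $g$ (in particular that the edges formerly incident to the deleted $u$ and those that would have been attached to the inserted $v$ can be reconciled via edge substitutions without exceeding the original operation budget), and that the bookkeeping genuinely nets a strict decrease of at least one. I would handle this by choosing the pair so that the new substitution vertex plays the role $v$ was going to play in $g$, then arguing that every edge operation in the old path touching $u$ or $v$ maps to an edge operation of the same or lower cost in the new path, so the strict saving of $2 - 1 = 1$ on the vertex operations survives. Everything else — the conservation identity and the final arithmetic with $\max$ — is routine.
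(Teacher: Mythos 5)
Your proof is correct and its core is identical to the paper's: the paper also argues by contradiction that an optimal path cannot contain both a vertex deletion $(u\rightarrow\epsilon)$ and a vertex insertion $(\epsilon\rightarrow v)$, merging such a pair into a single substitution $(u\rightarrow v)$ (re-routing the edge operations formerly applied to $v$ onto $u$) to obtain a strictly shorter path. The only difference is organizational --- the paper splits into three cases on the sign of $|V_h|-|V_g|$ and reduces the unequal-size cases to the equal-size one, whereas you unify them with the vertex-count conservation identity $|\id{P_{VI}}|-|\id{P_{VD}}|=|V_g|-|V_h|$, which is arguably cleaner but not a different argument.
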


\begin{proof}

Let $P = \id{P_{ED}} \cdot \id{P_{VD}} \cdot \id{P_{VI}} \cdot \id{P_{VS}} \cdot \id{P_O}$
be an edit optimal path that transforms $h$ to $g$. Then we
discuss the following three cases.

Case I.  When $|\id{V_h}| = |\id{V_g}|$. To transform $h$ to $g$,
the number of vertex deletions must be equal to that of vertex
insertions, i.e., $|\id{P_{VD}}| = |\id{P_{VI}}|$. We prove $|\id{P_{VD}}|
= |\id{P_{VI}}| = 0$ by contradiction. Assuming that $|\id{P_{VD}}|=|\id{P_{VI}}| = l \geq 1$,
thus there must exist at least one vertex insertion and one deletion.
Let $u$ be a deleted vertex and $v$ be a inserted vertex. We construct
another edit path $P'$ by $P$ as follows. First, we substitute
the label of~$u$ with $\mu(v)$, and then perform these edit operations
on $u$, which were performed on~$v$ before in~$\id{P_O}$. Finally,
we maintain the rest edit operations in $P$. In other words, we
replace $(u \rightarrow \epsilon)$ and $(\epsilon \rightarrow v)$
by $(u \rightarrow v)$. The length of~$P'$ is $|P'|=|P|-1<|P|$,
which contradicts the hypothesis that~$P$ is an optimal edit path.
Therefore, there exists no vertex deletions and insertions in~$P$,
i.e., $|P_{VD}|=|P_{VI}|=0$.

Case II. When $|\id{V_h}|<|\id{V_g}|$. There exists at least $|\id{V_g}|-|\id{V_h}|$
vertex insertions in~$P$. Let~$h_1$ be the graph obtained by
inserting $|\id{V_g}|-|\id{V_h}|$ vertices into~$h$. According
to the analysis in case I, no vertex deletions and insertions are
needed in an optimal edit path that transforms $h_1$ to~$g$. Thus,
$\idrm{only}$ $|\id{V_g}|-|\id{V_h}|$ vertex insertions are needed
in~$P$, i.e., $|\id{P_{VI}}|=|\id{V_g}|-|\id{V_h}|$ and $|\id{P_{VD}}|=0$.

Case III. When $|\id{V_h}|>|\id{V_g}|$. The proof is similar to the
proof of case II. We omit it here.
\end{proof}

\subsection{Q-gram Counting Filters}

\begin{defn}[Degree-based $q$-$\id{gram}$]
\label{def:degreeQgram}

Let $D_v =\\ (\mu(v), \id{adj}(v), d_v)$ be the degree structure
of vertex~$v$ in graph~$g$, where $\mu(v)$ is the label of~$v$,
$\id{adj}(v)$ is the multiset of labels for edges adjacent to~$v$
in~$g$, and $d_v$ is the degree of~$v$. The degree-based $q$-$\id{gram}$
set of graph~$g$ is defined as $D(g) = \{D_v: v \in V_g\}$.

\end{defn}

\begin{lem}
\label{lem:lem2}

Given two graphs $g$ and $h$, if $ged(g, h) \leq \tau$, then we
have $|D(g) \cap D(h)| \geq 2\idrm{max}\{|V_g|, |V_h|\}-|\Sigma_{V_g} \cap \Sigma_{V_h}| -2\tau$.
\end{lem}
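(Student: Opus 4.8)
The plan is to bound the number of degree structures in $D(h)$ that are \emph{destroyed} along the optimal edit path $P = \id{P_{ED}} \cdot \id{P_{VD}} \cdot \id{P_{VI}} \cdot \id{P_{VS}} \cdot \id{P_O}$ that transforms $h$ to $g$, and symmetrically the number of degree structures that must be \emph{created} to build $g$. A degree structure $D_v$ is determined by the vertex label $\mu(v)$, the multiset $\id{adj}(v)$ of incident edge labels, and the degree $d_v$; so $D_v$ is preserved exactly when no edit operation touches $v$ or any edge incident to $v$ and no edge-label substitution changes a neighbor edge. The key observation is that each edit operation in $P$ can be ``charged'' to only a bounded number of vertices whose degree structure it alters: a vertex deletion destroys the structure of that one vertex; an edge deletion/insertion or edge-label substitution alters the structures of the (at most) two endpoints; a vertex-label substitution alters that one vertex; a vertex insertion alters one newly created vertex.

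First I would make precise, using Lemma~\ref{lem:lem1}, that $|\id{P_{VD}}| = \max\{|V_h|-|V_g|,0\}$ and $|\id{P_{VI}}| = \max\{|V_g|-|V_h|,0\}$, and that $|\id{P_{VS}}| \le \min\{|V_h|,|V_g|\}$. The vertices of $h$ whose degree structure is \emph{not} in $D(g)$ (after the transformation) are exactly those touched by some operation in $\id{P_{VD}} \cup \id{P_{VS}} \cup \id{P_{ED}} \cup \id{P_O}$; I would count these. A cleaner route: let $P_S \subseteq P_{VS}$ be the vertex substitutions that genuinely change the label (the ones with $\mu(u)\neq\mu(v)$ can be separated from the ``identity'' substitutions that pair equal-label vertices), and observe that the $q$-gram overlap lost from $h$'s side is at most $|P_{VD}| + |P_S| + (\text{vertices incident to an edge operation in } P_{ED}\cup P_O)$. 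Since $P_{ED}$ consists only of deletions of edges incident to deleted vertices, those endpoints are already counted in $P_{VD}$; and each operation in $P_O$ hits at most $2$ vertices, so it contributes at most $2|P_O|$. Doing the same on $g$'s side and using $|P_{VI}|$ in place of $|P_{VD}|$, then combining via $|D(g)\cap D(h)| \ge |D(h)| - (\text{lost from } h) $ together with the matching bound $|D(g)\cap D(h)| \ge |D(g)| - (\text{lost from }g)$, and noting $|D(g)| = |V_g|$, $|D(h)| = |V_h|$, yields the claim once one checks that $|V_h| + |V_g| - |\Sigma_{V_g}\cap\Sigma_{V_h}| \le 2\max\{|V_g|,|V_h|\} - |\Sigma_{V_g}\cap\Sigma_{V_h}|$ can be tightened — in fact the label-count term appears because the identity-label substitutions are ``free'' and need not be charged, so the bound should come out as $2\max\{|V_g|,|V_h|\} - |\Sigma_{V_g}\cap\Sigma_{V_h}| - 2\tau$ after using $|P| \le \tau$.

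The main obstacle I anticipate is the careful double-counting to avoid losing a factor of $2$ in the wrong place: an edge operation can simultaneously be the ``reason'' a $q$-gram on the $h$ side and the corresponding one on the $g$ side are missing, and I must be sure not to charge the same operation twice when I finally write $2|P| \ge (\text{lost from }h) + (\text{lost from }g)$. The clean way around this is to fix the canonical ordering $P = P_{ED}\cdot P_{VD}\cdot P_{VI}\cdot P_{VS}\cdot P_O$ and argue that: (i) the $\max\{|V_h|-|V_g|,0\}$ deletions and $\max\{|V_g|-|V_h|,0\}$ insertions together contribute exactly $|V_h| + |V_g| - 2\min\{|V_h|,|V_g|\} = \big||V_h|-|V_g|\big|$ to the count, which is at most the corresponding part of $\tau$; (ii) among the $\min\{|V_h|,|V_g|\}$ substitution slots, those that pair equal-label vertices cost nothing and number at least $|\Sigma_{V_g}\cap\Sigma_{V_h}|$ (bounding the number of nontrivial ones by $\min\{|V_h|,|V_g|\} - |\Sigma_{V_g}\cap\Sigma_{V_h}|$ is not needed; instead bound the nontrivial ones by $|P_{VS}|\le\tau$); (iii) each operation in $P_O$ is charged to at most its two endpoints. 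Summing and bounding the total operation count by $\tau$ gives $|D(g)\cap D(h)| \ge 2\max\{|V_g|,|V_h|\} - |\Sigma_{V_g}\cap\Sigma_{V_h}| - 2\tau$; the symmetry between $h$ and $g$ plus the $\min$/$\max$ identity is what upgrades a naive one-sided $|V_h|$ into the symmetric $2\max\{|V_g|,|V_h|\}$ term.
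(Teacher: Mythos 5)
Your overall charging scheme --- vertex operations destroy one degree-based $q$-gram, edge operations destroy two, invoke Lemma~\ref{lem:lem1} for $|\id{P_{VD}}|$ and $|\id{P_{VI}}|$, and spend the budget $|P|\le\tau$ --- is the same skeleton as the paper's proof. But step (ii) of your final accounting is where the argument breaks. You assert that the equal-label (``free'') substitutions number \emph{at least} $|\Sigma_{V_g}\cap\Sigma_{V_h}|$ and that the lower bound on the number of nontrivial substitutions ``is not needed.'' Both claims are wrong, and the second discards exactly the inequality the lemma hinges on. The multiset intersection $|\Sigma_{V_g}\cap\Sigma_{V_h}|$ is an \emph{upper} bound on how many matched pairs can carry equal labels (an optimal path need not even achieve it, since mismatching labels may save edge operations), so what is true is that the number of genuine, costly substitutions satisfies $|\id{P_{VS}}|\ge \min\{|V_g|,|V_h|\}-|\Sigma_{V_g}\cap\Sigma_{V_h}|$ --- the direction you dismissed, not its reverse.

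To see why it is indispensable, run your own accounting to the end (WLOG $|V_h|\le|V_g|$, so $\id{P_{VD}}$ and $\id{P_{ED}}$ are empty by Lemma~\ref{lem:lem1}): counting untouched vertices gives $|D(g)\cap D(h)| \ge |V_g| - |\id{P_{VI}}| - |\id{P_{VS}}| - 2|\id{P_O}| = |V_g| - 2|P| + |\id{P_{VI}}| + |\id{P_{VS}}| \ge |V_g| - 2\tau + |\id{P_{VI}}| + |\id{P_{VS}}|$. The surplus $+|\id{P_{VI}}|+|\id{P_{VS}}|$ is the whole point: each of these operations destroys only one $q$-gram instead of two, so every one of them that is \emph{forced} to occur buys back one unit against the $2\tau$. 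Lemma~\ref{lem:lem1} forces $|\id{P_{VI}}| = |V_g|-|V_h|$, and the label-count argument forces $|\id{P_{VS}}| \ge |V_h|-|\Sigma_{V_g}\cap\Sigma_{V_h}|$; summing gives $|\id{P_{VI}}|+|\id{P_{VS}}| \ge |V_g|-|\Sigma_{V_g}\cap\Sigma_{V_h}|$ and hence the claimed $2|V_g|-|\Sigma_{V_g}\cap\Sigma_{V_h}|-2\tau$. With only $|\id{P_{VS}}|\ge 0$, or with your proposed upper bound $|\id{P_{VS}}|\le\tau$, you can get no further than $2\max\{|V_g|,|V_h|\}-\min\{|V_g|,|V_h|\}-2\tau$, which is strictly weaker whenever $|\Sigma_{V_g}\cap\Sigma_{V_h}|<\min\{|V_g|,|V_h|\}$. (Your two-sided symmetrization and the worry about double counting are also unnecessary: the paper's proof is one-sided from $g$'s side after the WLOG, and the $2\max$ term comes from the forced-operation count just described, not from combining a loss bound on $h$'s side with one on $g$'s side.)
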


\begin{proof}

First, we enumerate the effect of various edit operations on~$D(g)$:
(1) vertex insertion/deletion/substitution will affect one
degree-based $q$-$\id{gram}$. (2) edge insertion/deletion/\\substitution
will affect two degree-based $q$-$\id{grams}$. Then, without
loss of generality we assume that~$|V_h| \leq |V_g|$ and prove
Lemma~\ref{lem:lem2} as follows.

Let $P=\id{P_{ED}} \cdot \id{P_{VD}} \cdot \id{P_{VI}} \cdot \id{P_{VS}} \cdot \id{P_O}$
be an optimal edit path that transforms~$h$ to~$g$, such that:
$h \rightarrow h_1 \rightarrow g$, where~$h_1$
is obtained by performing $\id{P_{ED}} \cdot \id{P_{VD}} \cdot \id{P_{VI}} \cdot \id{P_{VS}}$
on~$h$, and $g$ is obtained by performing $\id{P_O}$ on $h_1$. By Lemma~\ref{lem:lem1},
we know that $|\id{P_{VD}}|=0$ and $|\id{P_{VI}}|=|V_g|-|V_h|$. Since
$\id{P_{ED}}$ consists of the edge deletions performed on the deleted
vertices, we have $|\id{P_{ED}}|=0$. To transform~$h$ to~$g$, $|\id{P_{VS}}|$ vertex
substitutions are needed, thus $|\id{P_{VS}}| \geq |V_g|-(|\Sigma_{V_g} \cap \Sigma_{V_h}|+ |\id{P_{VI}}|)=|V_h|-|\Sigma_{V_g} \cap \Sigma_{V_h}|$.
Since vertex insertion/substitution only affects one degree-based $q$-$\id{gram}$,
we have $|D(g) \cap D(h)| \geq |D(g) \cap D(h_1)| - (|\id{P_{VI}}| + |\id{P_{VS}}|)$.
Since~$\id{P_O}$ only consists of the edit operations performed
on edges and each of them affects two degree-based $q$-$\id{grams}$,
we have $|D(g) \cap D(h_1)| \geq |V_g|-2|\id{P_O}|$. Thus we have
$|D(g)\cap D(h)| \geq |V_g|-2|\id{P_O}|-(|\id{P_{VI}}| + |\id{P_{VS}}|)
\geq 2|V_g|-|\Sigma_{V_g} \cap \Sigma_{V_h}| -2\tau$.
\end{proof}

\begin{defn}[Label-based $q$-$\id{gram}$]
\label{def:labelQGram}

The label-based\\ $q$-$\id{gram}$ set of graph~$g$ is defined
as $L(g)=\Sigma_{V_g} \cup~\Sigma_{E_g}$, where~$\Sigma_{V_g}$
and $\Sigma_{E_g}$ are the label multisets of~$V_g$ and~$E_g$,
respectively.

\end{defn}

For the label-based $q$-$\id{gram}$, each edit operation $\idrm{affects}$
one $q$-$\id{gram}$, thus we can obtain the label-based $q$-$\id{gram}$
counting filter as follows. If $\id{ged}(g, h) \leq \tau$, then we have
$|L(g) \cap L(h)| \geq  \idrm{max}\{|V_g|, |V_h|\} + \idrm{max}\{|E_g|, |E_h|\} - \tau$.
It is a rewritten form of the label count filter~\cite{ZhaoXLW2012}.

Figure~\ref{fig:Fig03} shows the degree-based $q$-$\id{gram}$ and
label-based $q$-$\id{gram}$ sets of graphs shown in Figure~\ref{fig:Fig02}.
Note that the number on the left of each subgraph is the times of
the~$q$-$\id{gram}$ occurring in the graph and we omit the degree
value of each degree-based $q$-$\id{gram}$.

\begin{figure}[!htbp]
\centering
\begin{tikzpicture}[xshift=-1cm]
\tikzset{main node/.style={circle,draw,minimum size=0.3cm,inner sep=0pt},  }
\tiny{
   \node[main node, left] (1) at (0,0) [label=left:1] {A};
    \draw[line width=0.1mm, black]  (0,0) -- (0.3,0);
    \node[main node] (2) [below = 0.4cm  of 1,label=left:1] {C};
    \draw[line width=0.1mm, black]  (0,-0.7) -- (0.3,-0.7);
    \node[main node] (3) [below = 1.2cm  of 1,label=left:1] {A};
    \draw[line width=0.1mm, black]  (0,-1.5) -- (0.3,-1.4);
    \draw[line width=0.1mm, black]  (0,-1.5) -- (0.3,-1.6);

    \begin{scope}[xshift=1.1cm]
    \node[main node, left] (1) [label=left:3] {A};
    \draw[line width=0.1mm, black]  (0,0) -- (0.3,0);
    \node[main node] (2) [below = 1.2cm  of 1,label=left:1] {C};
    \draw[line width=0.1mm, black]  (0,-1.5) -- (0.3,-1.3);
    \draw[line width=0.1mm, black]  (0,-1.5) -- (0.3,-1.5);
    \draw[line width=0.1mm, black]  (0,-1.5) -- (0.3,-1.7);
    \end{scope}

    \begin{scope}[xshift=2.2cm]
    \node[main node, left] (1) [label=left:1] {C};
    \draw[line width=0.1mm, black]  (0,0) -- (0.3,0);
    \node[main node] (2) [below = 0.2cm  of 1,label=left:1] {B};
    \draw[line width=0.1mm, black]  (0,-0.5) -- (0.3,-0.4);
    \draw[line width=0.1mm, black]  (0,-0.5) -- (0.3,-0.6);
    \node[main node] (3) [below = 0.7cm  of 1,label=left:1] {C};
    \draw[line width=0.1mm, black]  (0,-1.0) -- (0.3,-0.9);
    \draw[line width=0.1mm, black]  (0,-1.0) -- (0.3,-1.1);
    \node[main node] (4) [below = 1.2cm  of 1,label=left:1] {A};
    \draw[line width=0.1mm, black]  (0,-1.5) -- (0.3,-1.3);
    \draw[line width=0.1mm, black]  (0,-1.5) -- (0.3,-1.5);
    \draw[line width=0.1mm, black]  (0,-1.5) -- (0.3,-1.7);
    \end{scope}

    \begin{scope}[xshift=3.3cm]
    \node[main node, left] (1) [label=left:2] {A};
    \draw[line width=0.1mm, black]  (0,0) -- (0.3,-0.1);
    \draw[line width=0.1mm, black]  (0,0) -- (0.3,0.1);
    \node[main node] (2) [below = 0.4cm  of 1,label=left:1] {B};
    \draw[line width=0.1mm, black]  (0,-0.7) -- (0.3,-0.6);
    \draw[line width=0.1mm, black]  (0,-0.7) -- (0.3,-0.8);
    \node[main node] (3) [below = 1.2cm  of 1,label=left:1] {C};
    \draw[line width=0.1mm, black]  (0,-1.5) -- (0.3,-1.4);
    \draw[line width=0.1mm, black]  (0,-1.5) -- (0.3,-1.6);
    \end{scope}

    \tkzText[below](0,-1.7){$g_1$}
    \tkzText[below](1.2,-1.7){$g_2$}
    \tkzText[below](2.3,-1.7){$g_3$}
    \tkzText[below](3.3,-1.7){$h$}

    \begin{scope}[xshift=5cm]
    \node[main node, left] (1) [label=left:2] {A};
    \node[main node] (2) [below = 0.4cm  of 1,label=left:1] {C};
    \draw [line width=0.1mm, black](-0.3,-1.5) -- (0.0,-1.5);
    \node[left] at (-0.3,-1.5) {2};
    \end{scope}

    \begin{scope}[xshift=5.7cm]
    \node[main node, left] (1) [label=left:3] {A};
    \node[main node] (2) [below = 0.4cm  of 1,label=left:1] {C};
    \draw [line width=0.1mm, black](-0.3,-1.5) -- (0.0,-1.5);
    \node[left] at (-0.3,-1.5) {3};
    \end{scope}

    \begin{scope}[xshift=6.4cm]
    \node[main node, left] (1) [label=left:1] {A};
    \node[main node] (2) [below = 0.2cm  of 1,label=left:1] {B};
     \node[main node] (3) [below = 0.7cm  of 1,label=left:2] {C};
    \draw [line width=0.1mm, black](-0.3,-1.5) -- (0.0,-1.5);
    \node[left] at (-0.3,-1.5) {4};
    \end{scope}

    \begin{scope}[xshift=7.1cm]
    \node[main node, left] (1) [label=left:2] {A};
    \node[main node] (2) [below = 0.2cm  of 1,label=left:1] {B};
    \node[main node] (3) [below = 0.7cm  of 1,label=left:1] {C};
    \draw [line width=0.1mm, black](-0.3,-1.5) -- (0.0,-1.5);
    \node[left] at (-0.3,-1.5) {4};
    \end{scope}
}

    \tkzText[below](4.9,-1.7){$g_1$}
    \tkzText[below](5.5,-1.7){$g_2$}
    \tkzText[below](6.2,-1.7){$g_3$}
    \tkzText[below](6.9,-1.7){$h$}
\end{tikzpicture}

\caption{\small{Degree-based $q$-$\id{gram}$ (left) and label-based $q$-$\id{gram}$ (right) sets.}}
\label{fig:Fig03}
\end{figure}
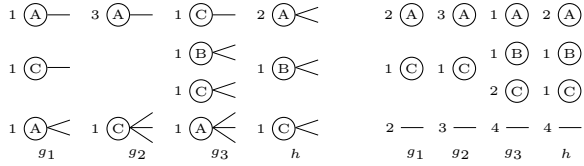

We use an example to illustrate the degree-based $q$-$\id{gram}$
and label-based $q$-$\id{gram}$ counting filters. For the graphs~$g_2$
and~$h$ shown in Figure~\ref{fig:Fig02}, if~$\tau$ = 2, by Lemma~\ref{lem:lem2}
we have $|D(g_2) \cap D(h)|= 0 < 2 \times \idrm{max}\{4, 4\} -|\{A, A, B, C\} \cap \{A, A, A, C\}|-2 \times 2 = 1$.
Thus, $g_2$ will be filtered out. However, for the graph $g_1$,
we have $|D(g_1) \cap D(h)|= 1 \geq  2 \times \idrm{max}\{4, 3\}-|\{A, A, B, C\} \cap \{A, A, C\}|-2 \times 2=1$
and hence $g_1$ will pass the filter. Similarly, only $g_1$ will
be filtered out by the label-based $q$-$\id{gram}$ counting filter
Therefore, we can filter~$g_1$ and~$g_2$ out using the degree-based
$q$-$\id{gram}$ and label-based~$q$-$\id{gram}$ counting filters.
However, for the graph~$g_3$ shown in Figure~\ref{fig:Fig02},
none of the above filters can filter it out. So, we propose another
filter, called degree-sequence filter, which utilizes the degrees
of vertices.

\subsection{Degree-Sequence Filter}
Let $\pi_g=[d_1, d_2,\ldots,d_{|V_g|}]$ be the degree vector
of graph~$g$, where~$d_i$ is the degree of vertex~$v_i$ in~$g$.
The degree sequence~$\sigma_g$ of~$g$ is a permutation of
$d_1, d_2,\ldots,d_{|V_g|}$ satisfying $\sigma_g[i] \geq \sigma_g[j]$
for~$i < j$. If~$g$ is isomorphic to~$h$, then we have~$\sigma_g=\sigma_h$.
Therefore, we can compute the lower bound on~$ged(g, h)$ using~$\sigma_g$
and $\sigma_h$.

\begin{defn}[Degree vector distance]
\label{def:degreeDistance} Given\\ two degree vectors~$\pi_g$
and $\pi_h$ such that $|\pi_g|$ = $|\pi_h|$. The distance between
them is defined as $\Delta(\pi_g, \pi_h) =\\
\lceil \sum_{\pi_h[i] \leq \pi_g[i]}(\pi_g[i]-\pi_h[i])/2 \rceil
+ \lceil \sum_{\pi_h[i] > \pi_g[i]}(\pi_h[i]-\pi_g[i])/2 \rceil$.

\end{defn}

\begin{lem}
\label{lem:lem3}

Let $x$ and $y$ be two degree vectors such that $|x|=|y|=n$ in
non-increasing order. For any bijection function $f:\{1, \ldots, n\} \rightarrow \{1, \ldots, n\}$,
we have $\Delta(x, z) \geq \Delta(x, y)$, where $z[i] = y[f(i)]$
for $1 \leq i \leq n$.

\end{lem}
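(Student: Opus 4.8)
The plan is to show that sorting both vectors into non-increasing order minimizes the distance $\Delta$, so any permutation $z$ of $y$ can only increase it. The natural strategy is an exchange argument: start from $z$, and show that whenever $z$ is not sorted in non-increasing order, there is an adjacent transposition (or a swap of two out-of-order coordinates) that does not increase $\Delta(x,z)$; iterating such swaps transforms $z$ into the sorted vector $y$ without ever increasing the distance, which gives $\Delta(x,z)\ge\Delta(x,y)$.

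First I would reduce to the case of a single swap of two coordinates. Fix indices $i<j$ with $z[i]<z[j]$ (an inversion relative to non-increasing order, which must exist if $z\ne y$), and let $z'$ be $z$ with entries at positions $i$ and $j$ exchanged. It suffices to prove $\Delta(x,z')\le\Delta(x,z)$, since then a finite sequence of such un-inversions sorts $z$ into $y$. Because $x$ is itself non-increasing, $x[i]\ge x[j]$. Only the two terms indexed by $i$ and $j$ differ between $\Delta(x,z)$ and $\Delta(x,z')$, so I would isolate the contribution of these two positions. The subtlety is that $\Delta$ is not simply $\tfrac12\sum|x[k]-z[k]|$: it splits the sum into the part where $z[k]\le x[k]$ and the part where $z[k]>x[k]$, takes the ceiling of each half separately, and adds them. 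So I would first analyze the "inner" quantity $S(a,b)=\sum_{z[k]\le x[k]}(x[k]-z[k]) + \sum_{z[k]>x[k]}(z[k]-x[k])$ restricted to the two coordinates, writing $a_1=x[i],a_2=x[j]$ and the $z$-values $b_1,b_2$ with $b_1<b_2$; swapping gives $z'$-values $b_2,b_1$. A short case analysis on the relative order of $a_1\ge a_2$ versus $b_1<b_2$ (there are essentially three interleavings: both $b$'s below both $a$'s, both above, or straddling) shows that pairing the larger $a$ with the larger $b$ — i.e., the sorted configuration — never increases the sum of the two per-coordinate absolute deviations, and moreover does not move mass from the "$\le$" side to the "$>$" side in a way that could hurt after taking ceilings. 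Concretely, one checks that for each of the two ceiling-ed half-sums, the value after the swap is $\le$ the value before, coordinate-wise on the relevant partial sums, which survives the ceiling since $\lceil\cdot\rceil$ is monotone.

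The main obstacle I expect is precisely the interaction between the swap and the two separate ceilings: monotonicity of $\Delta$ under a coordinatewise decrease of $|x[k]-z[k]|$ is not automatic because decreasing one term on the "$\le$" side and another on the "$>$" side could, in principle, let rounding go the wrong way. To handle this cleanly I would argue that the rearrangement-type inequality holds already at the level of the two \emph{unrounded} partial sums $\sum_{k:\,z[k]\le x[k]}(x[k]-z[k])$ and $\sum_{k:\,z[k]> x[k]}(z[k]-x[k])$ — that is, sorting $z$ non-increasingly simultaneously does not increase either partial sum (a consequence of the standard rearrangement inequality applied to the positive and negative parts, using that $x$ is non-increasing) — and then apply monotonicity of $\lceil\cdot\rceil$ to each separately before summing. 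Once the single-swap statement is established, the general bijection $f$ follows since any permutation is a product of such inversion-reducing transpositions leading to the identity ordering, completing the proof.
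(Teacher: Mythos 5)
Your proposal is correct, but it takes a genuinely different route from the paper. The paper proves the two inequalities $s_1(x,z)\ge s_1(x,y)$ and $s_2(x,z)\ge s_2(x,y)$ by induction on the vector length $n$, appending one coordinate at a time and running a three-case analysis on how the interleaving of $x[i],x[k+1],y[i],y[k+1]$ affects each partial sum, and only then applies monotonicity of the ceiling. You instead use an exchange argument: reduce to a single transposition that removes an inversion of $z$, and verify that the two-coordinate contribution to each unrounded half-sum does not increase. That verification amounts to the submodularity of $(a,b)\mapsto (b-a)^{+}$ and $(a,b)\mapsto (a-b)^{+}$, i.e., for $a_1\ge a_2$ and $b_2\ge b_1$ one has $(b_2-a_1)^{+}+(b_1-a_2)^{+}\le (b_1-a_1)^{+}+(b_2-a_2)^{+}$ and likewise with the arguments reversed; both follow because $(b_2-a)^{+}-(b_1-a)^{+}$ and $(a-b_2)^{+}-(a-b_1)^{+}$ are non-increasing in $a$, and these checks are exactly your three interleavings. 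Iterating the swaps sorts $z$ into $y$ (since $z$ is a permutation of the entries of $y$ and $y$ is the non-increasing arrangement), and monotonicity of $\lceil\cdot\rceil$ applied to each half-sum separately finishes the argument. Both proofs rest on the same two pillars --- the half-sums are separately minimized by the identity matching, and the ceilings preserve the inequality --- but your transposition-based reduction is more modular and avoids the bookkeeping of the paper's induction (which, as written, is also somewhat delicate about what the inductive hypothesis applies to when $f$ restricted to $\{1,\dots,k\}$ is not a bijection of that set). The one point to make fully rigorous in a final write-up is the explicit two-coordinate case check; your appeal to the rearrangement inequality is in the right spirit but the classical product form does not apply verbatim, so the submodularity computation above should be spelled out.
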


\begin{proof}

For degree vectors~$x$ and~$y$, let $s_1(x,y)=\sum_{x[i] \leq y[i]}(y[i]-x[i])$
and $s_2(x,y)=\sum_{x[i] > y[i]} (x[i]-y[i])$. We have $\Delta(x,y)
=\lceil s_1(x,y)/2 \rceil+\lceil s_2(x,y)/2 \rceil$. We want to
prove $s_1(x, z) \geq s_1(x, y)$ and $s_2(x, z) \geq s_2(x, y)$,
where $z[i] = y[f(i)]$ for $1 \leq i \leq n$. We prove this claim
for~$x$ and~$z$ by induction on the vector length~$n$. And similar
claim holds for~$y$ and~$z$.

For the base case $n = 1$, it is trivial that $s_1(x, z) \geq s_1(x, y)$
and $s_2(x, z) \geq s_2(x, y)$. For the inductive step, we assume
that $s_1(x^k, z^k) \geq s_1(x^k, y^k)$ and $s_2(x^k, z^k) \geq s_2(x^k, y^k)$
for~$n \leq k$ where $x^k = [x[1], \ldots, x[k]]$.

We then prove the claim holds for $n = k + 1$. First, without
loss of generality, we assume that $f(k+1) = i\ (i < k + 1)$,
and $x[i] \geq y[i]$, thus we have $s_1(x^k, z^k) = s_1(x^{k-1}, z^{k-1})$
and $s_2(x^k, z^k) = s_2(x^{k-1}, z^{k-1}) + x[i] - y[i]$.
Then we consider the following three cases.

Case I. When $x[i] \geq x[k+1] \geq y[i] \geq y[k+1]$.

\vspace{-10pt}
\[
\begin{aligned}
s_1(x^{k+1}, z^{k+1}) & =   s_1(x^k, z^k)\\
                      &\geq s_1(x^k, y^k) = s_1(x^{k+1}, y^{k+1})\\
s_2(x^{k+1}, z^{k+1}) & =   s_2(x^{k-1}, z^{k-1}) + x[i]-y[i] + x[k+1]-y[k+1] \\
                      & = s_2(x^k, z^k) + x[k+1]-y[k+1] \\
                      &\geq s_2(x^k, y^k) + x[k+1]-y[k+1] \\
                      & = s_2(x^{k+1}, y^{k+1}).
\end{aligned}
\]

Case II. When $x[i] \geq y[i] \geq x[k+1] \geq y[k+1]$.

\vspace{-8pt}
\[
\begin{aligned}
s_1(x^{k+1}, z^{k+1}) &= s_1(x^{k-1}, z^{k-1}) + y[i] - x[k+1]\\
                      &= s_1(x^k, z^k) + y[i]-x[k+1]\\
                      &\geq s_1(x^k, y^k) = s_1(x^{k+1}, y^{k+1}).\\
s_2(x^{k+1}, z^{k+1}) &= s_2(x^{k-1}, z^{k-1}) + x[i]-y[k+1] \\
                      &= s_2(x^k, z^k) - (x[i] - y[i]) + x[i]-y[k+1] \\
                      &= s_2(x^k, z^k) + y[i] - y[k+1] \\
                      &\geq s_2(x^k, y^k) + x[k+1] - y[k+1] \\
                      &= s_2(x^{k+1}, y^{k+1}).
\end{aligned}
\]

Case III. When $x[i] \geq y[i] \geq y[k+1] \geq x[k+1]$.

\vspace{-8pt}
\[
\begin{aligned}
s_1(x^{k+1}, z^{k+1}) &= s_1(x^{k-1}, z^{k-1}) + y[i] - x[k+1]\\
                      &= s_1(x^k, z^k) + y[i] - x[k+1]\\
                      &\geq s_1(x^k, y^k) + y[k+1] - x[k+1] \\
                      &= s_1(x^{k+1}, y^{k+1}).\\
s_2(x^{k+1}, z^{k+1}) &= s_2(x^{k-1}, z^{k-1}) + x[i] - y[k+1]\\
                      &= s_2(x^k, z^k) - (x[i] - y[i]) + x[i] - y[k+1]\\
                      &= s_2(x^k, z^k) + y[i] - y[k+1]\\
                      &\geq s_2(x^k,y^k) = s_2(x^{k+1}, y^{k+1}).
\end{aligned}
\]

Finally, we note that
$s_1(x, z) \geq s_1(x, y) \ \idrm{and} \ s_2(x, z) \geq s_2(x, y)$,
and hence we have $\Delta(x, z) \geq \Delta(x, y)$.
\end{proof}

\begin{lem}
\label{lem:lem4}

Given two graphs~$g$ and~$h$ with~$|V_g|=|V_h|$, then we have $\id{ged}(g,h)
\geq \Delta(\sigma_g,\sigma_h)$.

\end{lem}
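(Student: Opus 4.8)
The plan is to show that any edit path transforming $h$ to $g$ induces a bijection between the vertex sets (since $|V_g| = |V_h|$, by Lemma~\ref{lem:lem1} there are no vertex insertions or deletions, so the edit path is just vertex substitutions plus edge operations), and then to bound the degree vector distance under the corresponding vertex correspondence by the number of edge operations. Concretely, let $P = \id{P_{ED}} \cdot \id{P_{VD}} \cdot \id{P_{VI}} \cdot \id{P_{VS}} \cdot \id{P_O}$ be an optimal edit path. By Lemma~\ref{lem:lem1}, $|\id{P_{VD}}| = |\id{P_{VI}}| = 0$, hence also $|\id{P_{ED}}| = 0$, so $P = \id{P_{VS}} \cdot \id{P_O}$ and $\id{ged}(g,h) = |\id{P_{VS}}| + |\id{P_O}| \geq |\id{P_O}|$. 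The substitutions in $\id{P_{VS}}$ do not change any vertex degree, so they are irrelevant to the degree argument; all that matters is that $\id{P_O}$ consists of $|\id{P_O}|$ edge insertions/deletions/substitutions on a fixed vertex set.

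**Next I would** track the degree vector through the application of $\id{P_O}$. Fix the natural vertex correspondence given by $P$ (each vertex of $h$ maps to the vertex of $g$ it becomes after the substitutions), and let $\pi_h$ and $\pi_g$ be the degree vectors of $h$ and $g$ under this common indexing. An edge insertion or deletion changes the degrees of exactly two vertices by $\pm 1$ each; an edge label substitution changes no degree. So applying $\id{P_O}$ changes the total ``signed degree discrepancy'' by at most $2$ per operation, distributed as $+1/+1$ (insertion), $-1/-1$ (deletion), or $0/0$ (relabel). I would argue that $\Delta(\pi_g, \pi_h) \leq |\id{P_O}|$: writing $\Delta$ as $\lceil s^+/2 \rceil + \lceil s^-/2 \rceil$ where $s^+ = \sum_{\pi_h[i] \le \pi_g[i]}(\pi_g[i]-\pi_h[i])$ and $s^- = \sum_{\pi_h[i] > \pi_g[i]}(\pi_h[i]-\pi_g[i])$, each edge insertion contributes at most $2$ to $s^+$ (and nothing to $s^-$), each deletion at most $2$ to $s^-$, and relabels contribute nothing; so $s^+ \le 2 k^+$ and $s^- \le 2 k^-$ where $k^+, k^-$ count insertions and deletions among $\id{P_O}$, giving $\Delta(\pi_g,\pi_h) \le k^+ + k^- \le |\id{P_O}| \le \id{ged}(g,h)$.

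**The remaining step** is to pass from $\Delta(\pi_g, \pi_h)$ — computed under the $P$-induced correspondence — to $\Delta(\sigma_g, \sigma_h)$, the distance between the sorted degree sequences. This is exactly where Lemma~\ref{lem:lem3} is used: $\sigma_g$ and $\sigma_h$ are the non-increasing rearrangements of $\pi_g$ and $\pi_h$, and $\pi_g$ (or a permutation thereof) is obtained from $\sigma_g$ by a bijection while $\pi_h$ from $\sigma_h$; Lemma~\ref{lem:lem3} tells us the sorted-against-sorted distance is minimal among all rearrangements, i.e. $\Delta(\sigma_g, \sigma_h) \leq \Delta(\pi_g, \pi_h)$. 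Chaining the two inequalities yields $\id{ged}(g,h) \geq \Delta(\pi_g,\pi_h) \geq \Delta(\sigma_g,\sigma_h)$, as required.

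**I expect the main obstacle** to be the careful bookkeeping in the middle step: one must verify that the ceiling operations in the definition of $\Delta$ do not inflate the bound — i.e. that it is legitimate to bound $s^+$ and $s^-$ separately by $2k^+$ and $2k^-$ and then take ceilings, rather than needing a joint bound. The key subtlety is that an edge insertion only ever increases degrees (so it can only grow $s^+$, never $s^-$, once we note moving a coordinate from the ``$>$'' side to the ``$\le$'' side only helps), and symmetrically for deletions; making this monotonicity argument airtight — possibly by processing the operations of $\id{P_O}$ in the order insertions-then-relabels-then-deletions, or by a direct triangle-inequality-style estimate on $\Delta$ — is the one place where care is needed. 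Everything else is a direct appeal to Lemmas~\ref{lem:lem1} and~\ref{lem:lem3}.
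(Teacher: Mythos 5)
Your proposal is correct and takes essentially the same route as the paper's proof: both use the vertex bijection induced by the optimal edit path (no vertex insertions/deletions since $|V_g|=|V_h|$), bound the positive and negative degree discrepancies by twice the number of edge insertions and deletions respectively (since each edge operation touches two vertices), and then invoke Lemma~\ref{lem:lem3} to pass from that bijection to the sorted degree sequences. Your version merely makes the appeal to Lemma~\ref{lem:lem1} and the double-counting bookkeeping more explicit than the paper does.
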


\begin{proof}

Let~$f$ be the bijection from the vertices in $h$ to that in~$g$
to ensure that the induced edit path is an optimal edit path.
Assuming that $\sigma_h[i]$ and $\sigma_g[f(i)]$ be the respective
degrees of a vertex~$v$ in $h$ and the corresponding vertex~$u$
in~$g$. If $\sigma_h[i] \leq \sigma_g[f(i)]$, we must insert at
least $(\sigma_g[f(i)]-~\sigma_h[i])$ edges on~$v$; otherwise,
we must delete at least $(\sigma_h[i]-\sigma_g[f(i)])$ edges.
Since one edge insertion/deletion affects degrees of two vertices,
we must insert at least $\lceil \sum_{\sigma_h[i] \leq \sigma_g[f(i)]} (\sigma_g[f(i)]-~\sigma_h[i])/2 \rceil$
edges. Similarly, we also need to delete at least $\lceil \sum_{\sigma_h[i]>\sigma_g[f(i)]}
(\sigma_h[i]-\sigma_g[f(i)])/2 \rceil$ edges. Thus, we have $\id{ged}(g, h)
\geq \Delta(\sigma_h, \pi_g^{'})$, where $\pi_g^{'}[i] = \sigma_g[f(i)]$
for $1\leq i \leq|V_h|$. By Lemma~\ref{lem:lem3}, we have $ged(g, h) \geq \Delta(\sigma_h, \pi_g^{'})
\geq \Delta(\sigma_g, \sigma_h)$.
\end{proof}

\begin{lem}[Degree-sequence filter]
\label{lem:lem5}

Given two \\ graphs $g$ and $h$, and an edit distance threshold~$\tau$,
if\\ $\id{ged}(g, h) \leq \tau$, then we have $\tau \geq \idrm{max}\{|\id{V_g}|,|\id{V_h}|\}-|\Sigma_{\id{V_g}} \cap \Sigma_{\id{V_h}}|+\lambda_e$,
where $$ \lambda_e = \left \{
\begin {array}{ll}
		\Delta(\sigma_g, \sigma_1) & \idrm{if} \; |\id{V_h}| \leq |\id{V_g}|;\\
        \idrm{min}_{h_1}\{|\id{E_h}|-\sum_j{\sigma_{h_1}[j]} + \Delta(\sigma_g, \sigma_{\id{h_1}})\} & \idrm{otherwise}. \\
\end{array} \right. $$
$\sigma_1=[\sigma_h[1],\ldots, \sigma_h[|\id{V_h}|], 0_1,\dots,0_{|\id{V_g}|-|\id{V_h}|}]$
and $h_1$ is a subgraph of~$h$ obtained by deleting $|\id{V_h}|-|\id{V_g}|$
vertices.

\end{lem}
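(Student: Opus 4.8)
The plan is to split into the two cases according to whether $|V_h| \le |V_g|$ or $|V_h| > |V_g|$, mirroring the case structure that already showed up in Lemma~\ref{lem:lem1} and Lemma~\ref{lem:lem2}. In both cases I would start from an optimal edit path $P = \id{P_{ED}} \cdot \id{P_{VD}} \cdot \id{P_{VI}} \cdot \id{P_{VS}} \cdot \id{P_O}$ transforming $h$ to $g$ and account separately for the vertex edit operations and the edge edit operations. The term $\idrm{max}\{|V_g|,|V_h|\} - |\Sigma_{V_g}\cap\Sigma_{V_h}|$ should come out, just as in Lemma~\ref{lem:lem2} and the label count filter, as a lower bound on the number of vertex insertions/substitutions needed (one operation per $q$-gram there), and $\lambda_e$ should be a lower bound on the number of edge operations in $\id{P_O}$, so that $\tau \ge |P| \ge (\text{vertex part}) + \lambda_e$.

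For the case $|V_h| \le |V_g|$: by Lemma~\ref{lem:lem1} we have $|\id{P_{VD}}| = 0$, $|\id{P_{VI}}| = |V_g| - |V_h|$, hence $|\id{P_{ED}}| = 0$, and $|\id{P_{VS}}| \ge |V_g| - |\Sigma_{V_g}\cap\Sigma_{V_h}| - |\id{P_{VI}}| = |V_h| - |\Sigma_{V_g}\cap\Sigma_{V_h}|$ exactly as in Lemma~\ref{lem:lem2}. It remains to bound $|\id{P_O}|$ from below by $\Delta(\sigma_g,\sigma_1)$. Here I would pad $h$ with $|V_g|-|V_h|$ isolated vertices (degree $0$), so its degree sequence becomes essentially $\sigma_1$, and then apply the argument of Lemma~\ref{lem:lem4}: the edge operations in $\id{P_O}$ must fix up the degree discrepancy between the (padded) $h$ and $g$ along whatever bijection $f$ the optimal path uses, and since each edge operation changes two vertex degrees, $|\id{P_O}| \ge \Delta(\sigma_h^{\,\text{padded}}, \pi_g')$; invoking Lemma~\ref{lem:lem3} replaces $\pi_g'$ by the sorted $\sigma_g$, giving $|\id{P_O}| \ge \Delta(\sigma_g, \sigma_1)$. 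Adding the vertex and edge parts and using $\tau \ge |P|$ finishes this case.

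For the case $|V_h| > |V_g|$: now Lemma~\ref{lem:lem1} gives $|\id{P_{VD}}| = |V_h| - |V_g|$ vertex deletions. The deleted vertices, together with their incident edges removed in $\id{P_{ED}}$, form a subgraph that we throw away; what survives is an intermediate graph $h_1$ obtained from $h$ by deleting those $|V_h|-|V_g|$ vertices. The edge operations still needed, counted by $|\id{P_O}|$, now have two contributions: the edges of $h$ incident only to surviving vertices but that had to be deleted as part of reaching $h_1$'s degree profile — this is where $|E_h| - \sum_j \sigma_{h_1}[j]$ enters, since $\sum_j \sigma_{h_1}[j]$ counts (twice) the edges retained in $h_1$ — plus the degree-repair cost $\Delta(\sigma_g, \sigma_{h_1})$ between $h_1$ and $g$ obtained exactly as in the first case (now $|V_{h_1}| = |V_g|$, so Lemma~\ref{lem:lem4} applies directly). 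Since the actual $h_1$ chosen by the optimal path is one particular subgraph of the right size, taking the minimum over all such $h_1$ only weakens the bound, which is why $\lambda_e = \min_{h_1}\{|E_h| - \sum_j\sigma_{h_1}[j] + \Delta(\sigma_g,\sigma_{h_1})\}$ is still a valid lower bound. Combining with the vertex term (again $|V_h| - |\Sigma_{V_g}\cap\Sigma_{V_h}| = \idrm{max}\{|V_g|,|V_h|\} - |\Sigma_{V_g}\cap\Sigma_{V_h}|$, using a bound on $|\id{P_{VD}}| + |\id{P_{VS}}|$ analogous to Lemma~\ref{lem:lem2}) and $\tau \ge |P|$ yields the claim.

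The main obstacle I expect is the bookkeeping in the second case: making precise how $\id{P_{ED}}$, the edge deletions "charged" to deleted vertices, interacts with the remaining edge operations so that the edges double-counted (once as incident to a deleted vertex, once toward $h_1$) are handled correctly, and verifying that $|E_h| - \sum_j \sigma_{h_1}[j]$ really is a lower bound on the edges of $h$ that must be deleted beyond those absorbed into the vertex deletions. The first case and the degree-vector estimates are essentially a repackaging of Lemmas~\ref{lem:lem1}, \ref{lem:lem3}, and~\ref{lem:lem4}, so I would keep those short and spend the detail budget on the $|V_h| > |V_g|$ accounting.
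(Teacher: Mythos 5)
Your proposal follows essentially the same route as the paper: decompose an optimal edit path as $\id{P_{ED}} \cdot \id{P_{VD}} \cdot \id{P_{VI}} \cdot \id{P_{VS}} \cdot \id{P_O}$, use Lemma~\ref{lem:lem1} and the label-intersection bound on $|\id{P_{VS}}|$ for the vertex term, pad with zero-degree entries and apply Lemmas~\ref{lem:lem3} and~\ref{lem:lem4} for the edge term, and minimize over $h_1$ in the second case. The Case~II bookkeeping you flag is resolved in the paper by taking $h_1$ to be the graph obtained from $h$ by performing $\id{P_{ED}} \cdot \id{P_{VD}}$ (the induced subgraph on the surviving vertices), so the deleted-edge count $|E_h| - |E_{h_1}| = |E_h| - \sum_j \sigma_{h_1}[j]/2$ is exactly $|\id{P_{ED}}|$ rather than a contribution to $|\id{P_O}|$ as in your narrative.
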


\begin{proof}

Let $P=\id{P_{ED}} \cdot \id{P_{VD}} \cdot \id{P_{VI}} \cdot \id{P_{VS}} \cdot \id{P_O}$
be an optimal edit path that converts~$h$ to~$g$, satisfying
$h \rightarrow h_1 \rightarrow h_2 \rightarrow g$,
where~$h_1$ is obtained by performing $\id{P_{E_D}} \cdot \id{P_{V_D}}$
on~$h$, $h_2$ is obtained by performing $\id{P_{V_I}} \cdot \id{P_{V_S}}$
on~$h_1$ and $g$ is obtained by performing~$\id{P_O}$ on~$h_2$.
Then we discuss the following two cases.

Case I. When $|\id{V_h}| \leq |\id{V_g}|$. We have $|\id{P_{VD}}|=0$
and $|\id{P_{VI}}|=|\id{V_g}|-|\id{V_h}|$ and $|\id{P_{ED}}|= 0$
by Lemma~\ref{lem:lem1}. To transform~$h_1$ to~$h_2$, $|\id{P_{VS}}|$
vertex substitutions are needed in~$P$, thus we have
$|\id{P_{VS}}| \geq |\id{V_g}|-(|\id{P_{VI}}|+|\Sigma_{\id{V_g}}
\cap~\Sigma_{\id{V_h}}|) =|\id{V_h}|-|\Sigma_{\id{V_g}} \cap \Sigma_{\id{V_h}}|$.
Since $h_2$ is obtained by performing $\id{P_{ED}} \cdot \id{P_{VD}} \cdot \id{P_{VI}} \cdot \id{P_{VS}}$
on $h$, we have $\sigma_{h_2}=[\sigma_h[1],\ldots, \sigma_h[|\id{V_h}|], 0_1,\dots,0_{|\id{V_g}|-|\id{V_h}|}]$.
By Lemma~\ref{lem:lem4}, we have $|\id{P_O}|=\id{ged}(g, h_2) \geq \Delta(\sigma_g, \sigma_{h_2})$.
Therefore $\id{ged}(g,h)=|P|=|\id{P_{VI}}|+|\id{P_{VS}}|+|\id{P_O}| \geq |\id{V_g}|-|\Sigma_{\id{V_g}} \cap \Sigma_{\id{V_h}}|+ \Delta(\sigma_1, \sigma_g)$.

Case II. When $|\id{V_h}| > |\id{V_g}|$. We have~$|\id{P_{VI}}|=0$ and
$|\id{P_{VD}}|=|\id{V_h}|-|\id{V_g}|$ by Lemma~\ref{lem:lem1}. To
transform $h$ to $h_1$, the number of edge deletions in $\id{P_{ED}}$
is $|\id{P_{ED}}|=|\id{E_h}|-|E_{h_1}|=|\id{E_h}|-\sum_j\sigma_{h_1}[j]/2$.
Since only $|\id{P_{VS}}|$ vertex substitutions are needed to
transform~$h_1$ to~$h_2$, we have $|\id{P_{VS}}|
\geq |\id{V_h}|-(|\id{P_{VD}}|+|\Sigma_{\id{V_g}} \cap \Sigma_{\id{V_h}}|)
 =|\id{V_g}|-|\Sigma_{\id{V_g}} \cap \Sigma_{\id{V_h}}|$
and $\sigma_{h_1}=\sigma_{h_2}$. By Lemma~\ref{lem:lem4},
we also have $|\id{P_O}| \geq \Delta(\sigma_g, \sigma_{h_2})
=\Delta(\sigma_g, \sigma_{h_1})$.
Therefore $|P|=|\id{P_{ED}}|+|\id{P_{VD}}|+|\id{P_{VS}}|+|\id{P_O}|
\geq \idrm{min}_{h_1} \{|\id{V_h}|-|\Sigma_{\id{V_g}} \cap \Sigma_{\id{V_h}}|+ |\id{E_h}|-\sum_j\sigma_{h_1}[j]/2
+\Delta(\sigma_g, \sigma_{h_1})\}
=|\id{V_h}|-|\Sigma_{\id{V_g}} \cap \Sigma_{\id{V_h}}|+\idrm{min}_{h_1}\{|\id{E_h}|-
\sum_j\sigma_{h_1}[j]/2+\Delta(\sigma_g, \sigma_{h_1})\}$.
\end{proof}

We use an example to illustrate the degree-sequence filter. For
the graphs~$h$ and~$g_3$ shown in Figure~\ref{fig:Fig02}, we can
compute $\sigma_h=[2, 2, 2, 2]$ and $\sigma_{g_3} =[3, 2, 2, 1]$.
By Lemma~\ref{lem:lem5}, if $\tau = 2$, then we have $\idrm{max}\{4, 4\}-|\{A, A, B, C\} \cap \{A, B, C, C\}|+\Delta(\sigma_h,\sigma_{g_3})
=4-3+ \lceil(3-2)/2 \rceil + \lceil (2-1)/2 \rceil = 3 > 2$, then
we can filter~$g_3$ out.

\section{Reduced Query Region}
\label{sec:queryRegion}

Given a database $G$, we consider each graph~$g$ in~$G$ as a
point in the two-dimensional plane where the x-coordinate and
y-coordinate denote the number of vertices and edges in~$g$,
respectively. Thus the graph database~$G$ can be represented
as a set of points $S=\{(|\id{V_{g_j}}|, |\id{E_{g_j}}|):1 \leq j \leq~|G|\}$.
These points form a rectangle area $A = [x_{min}, x_{max}] \times [y_{min}, y_{max}]$,
where $x_{\id{min}} = \idrm{min}_{j}\{|V_{g_j}|\}, x_{\id{max}} = \idrm{max}_{j}\{|\id{V_{g_j}}|\},
\\ y_{\id{min}} = \idrm{min}_{j}\{|E_{g_j}|\}$ and $y_{\id{max}} = \idrm{max}_{j} \{|\id{E_{g_j}}|\}$
for $1\leq j \leq|G|$. By partitioning~$A$ into subregions,
we can perform a query at a reduced query region.

Given an initial division point $(x_0, y_0)$ and a length~$l$,
we partition~$A$ into disjoint subregions as follows. First,
we construct the initial square subregion $A_{0,0}$ formed by
the point set $\{(x, y):|x-x_0|+|y-y_0| \leq l\}$. Then, we extend
along the surrounding of $A_{0,0}$ to obtain subregions $A_{i,j}$
of the same size with $A_{0,0}$, where $i$ and $j$ denote the relative
offsets with respect to $A_{0,0}$ in lines $y=x$ and $y=-x$,
respectively. Finally, we repeat this process until all points
in~$A$ are exhausted. Then~$A$ is partitioned into some disjoint
subregions such that $A=\cup_{i, j}A_{i,j}$ and $A_{i, j} \cap A_{i',j'}
= \varnothing$ for all $i \neq i'$ and~$j \neq j'$.
Note that $i$ and $j$ can be negative.

\begin{defn}[Query rectangle and region]
\label{label:queryRectangle}

Given\\ a query graph $h$ and an edit distance threshold~$\tau$,
query rectangle~$A_h$ of $h$ is the rectangle formed by the point
set of $\{(x,y):|x-~|V_h||+|y-|E_h|| \leq \tau\}$. The query
region~$Q_h$ of~$h$ is the union of all subregions intersecting
with~$A_h$, i.e., $Q_h = \cup_{i, j}A_{i,j}$ such that~$A_{i,j}
\cap A_h \neq \varnothing $.

\end{defn}

For graphs $g$ and $h$, if $ged(g,h) \leq \tau$, then we have
$||V_g|-|V_h||+||E_g|-|E_h|| \leq \tau$. According to the definition
of~$A_h$, we know that $(|V_g|, |E_g|) \in A_h$. Since $Q_h= \cup_{i, j}A_{i,j}$ and $A_{i,j} \cap A_h \neq \varnothing $,
we have $A_h \subseteq Q_h$. Therefore we have $(|V_g|, |E_g|) \in Q_h$
and hence can reduce the query region from~$A$ to~$Q_h$. In the example
of Figure~\ref{fig:Fig04}, we have $Q_h=\{A_{0,0},A_{1,0},A_{0,-1},A_{1,-1}\}$
and then only need to perform the query at~$Q_h$.

\begin{figure}[htbp]
    \centering
    \includegraphics[width=3.0in]{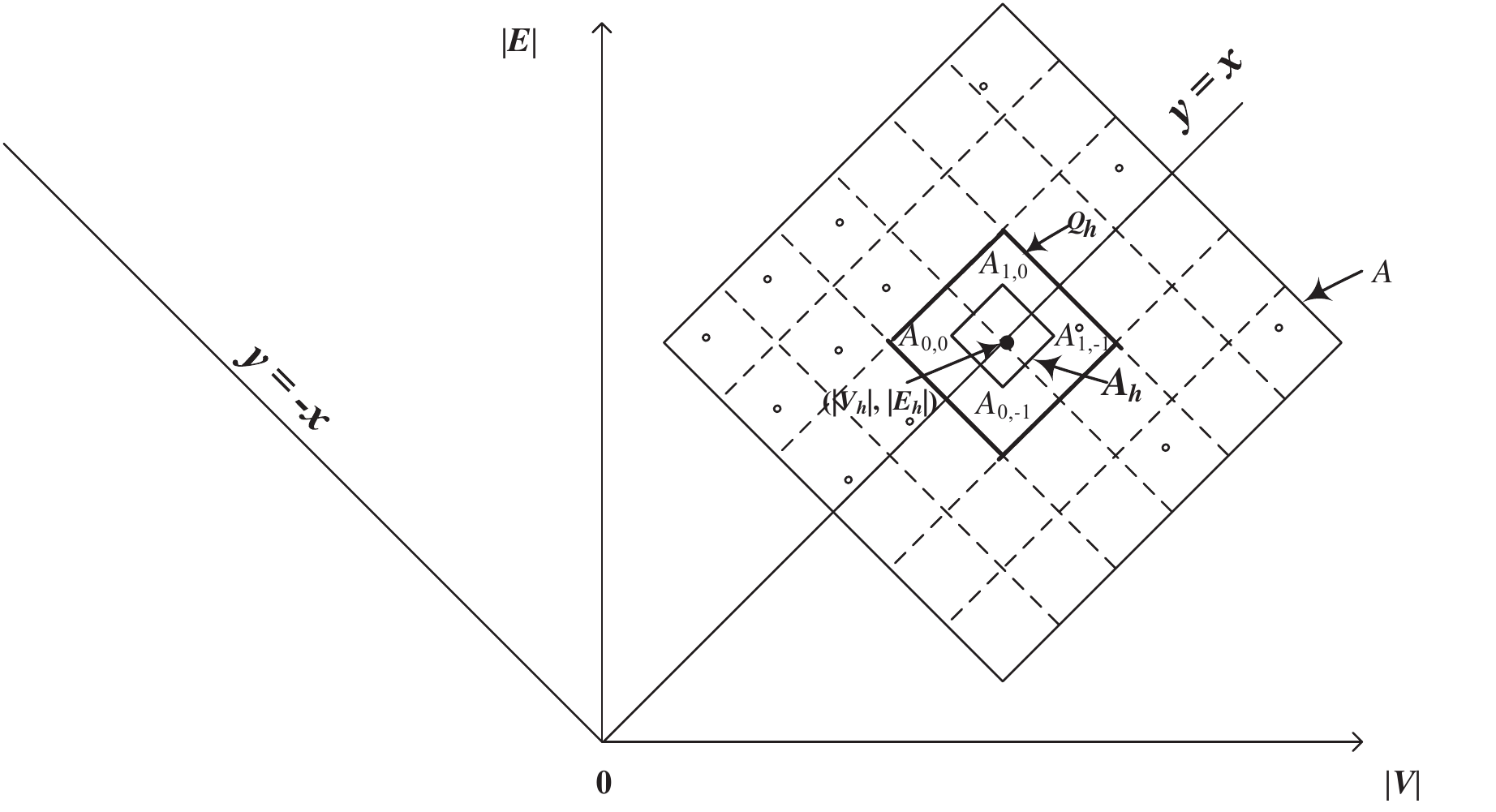}
    \caption{Illustration of $A_h$, $Q_h$ and $A$}
    \label{fig:Fig04}
\end{figure}

For a two-dimensional point $(x, y)$, its coordinates in lines
$y = x$ and $y = -x$ are $\frac{1}{\sqrt{2}}(x+y,y-x)$, thus its
relative offsets with respect to $(x_0, y_0)$ are $d_x=\frac{1}{\sqrt{2}}((x+y)-(x_0+y_0))$
and $d_y=\frac{1}{\sqrt{2}}((y-x)-(y_0-x_0))$ in $y=x$ and $y=-x$,
respectively. Since the side length of a subregion is $\frac{l}{\sqrt{2}}$,
the respective relative offsets with respect to $A_{0,0}$ are $\lfloor \frac{d_x}{l/\sqrt{2}} \rfloor$
and $\lfloor \frac{d_y}{l/\sqrt{2}} \rfloor$ in $y=x$ and $y=-x$.
Since the subregions in $Q_h$ are adjacent, we just need to find
the boundaries of subregions intersecting with $A_h$ using the
following formula.

\vspace{-10pt}
\begin{align}
\label{eq:eq1}
Q_h = \cup_{i, j}A_{i,j} \ \idrm{for \ all \ } i_1 \leq i \leq i_2 \ \idrm{and} \ j_1 \leq j \leq j_2.
\end{align}

\noindent where $i_1=\lfloor(|E_h| -\tau + |V_h|-(x_0 + y_0))/l\rfloor$
and $j_1 = \lfloor(|E_h| - \tau - |V_h|-(y_0-x_0))/l \rfloor$ are
the relative positions of the subregion in the lower left corner
of~$Q_h$ with respect to~$A_{0,0}$ in $y = x$ and $y = -x$,
respectively, $i_2 = \lfloor(|E_h| + \tau + |V_h|-(x_0 + y_0))/l \rfloor$
and $j_2 = \lfloor(|E_h| + \tau - |V_h|-(y_0-x_0))/l \rfloor$
are the respective relative positions of the subregion in the
top right corner of $Q_h$ with respect to $A_{0,0}$ in $y = x$
and $y = -x$.

\section{Succinct q-gram Tree Index}
\label{sec:Q-gramIndex}

Recall that we partitioned the region $A$ into some subregions
and then obtained a reduced query region $Q_h$. In order to
efficiently filter the graphs mapped into $Q_h$, we introduce
a space-efficient index structure via succinct representation of
the $q$-$\id{gram}$ tree as follows.

\subsection{Tree Structure}

Let $\mathcal{U}_D$ and $\mathcal{U}_L$ be the sets of all distinct
degree-based $q$-$\id{grams}$ and label-based $q$-$\id{grams}$
occurring in $G$, respectively, where $\id{\mathcal{U}_D(i)}$ and
$\id{\mathcal{U}_L(i)}$ are the~$i$th most frequently occurring
degree-based $q$-$\id{gram}$ and label-based $q$-$\id{gram}$
in~$G$, respectively. We use a four-tuple $\id{LD}=(\id{F_D}, \id{F_L}, n_v, n_e)$
to represent a graph~$g$, where~$n_v$ and~$n_e$ are the number
of vertices and edges in~$g$, respectively, $\id{F_D}$ and~$\id{F_L}$
are two arrays to store the degree-based $q$-$\id{gram}$ and label-based
$q$-$\id{gram}$ sets $D(g)$ and $L(g)$, respectively, where~$\id{F_D[i]}$
and~$\id{F_L[i]}$ are the respective number of occurrences of the
degree-based $q$-$\id{gram}$ $\id{\mathcal{U}_D(i)}$ in~$D(g)$ and
the label-based $q$-$\id{gram} \ \id{\mathcal{U}_L(i)}$ in~$L(g)$.

\begin{defn}
\label{def:four-union}
Given two four-tuples $\id{LD}$ and $\id{LD'}$, the union operator "$\sqcup$"
of~$\id{LD}$ and~$\id{LD'}$ is defined as: $\id{LD} \sqcup \id{LD'} = (\id{F_D} \oplus \id{F_D'}, \id{F_L} \oplus \id{F_L'}, \idrm{min}\{n_v, n_v'\},
\idrm{min}\{n_e, n_e'\})$, where
$$ (\id{F_D}\oplus \id{F_D'})[i] = \left \{
\begin {array}{ll}
	\idrm{max}\{\id{F_D[i]},\id{F_D'[i]}\}  & \idrm{if} \ i < \idrm{min}\{|\id{F_D}|, |\id{F_D'}|\}; \\
     \id{F_D[i]} 					 & \idrm{if} \ |\id{F_D'}| \leq i < |\id{F_D}|; \\
     \id{F_D'[i]} 					 & \idrm{if} \ |\id{F_D}| \leq i < |\id{F_D'}|.\\
\end{array} \right.\\ $$ and similar definition for $\id{F_L} \oplus \id{F_L'}$.
\end{defn}

Similarly, the union of multiple four-tuples can be defined
recursively.

\begin{defn}

A q-gram tree is a balanced tree such that each leaf node stores
the four-tuple~$\id{LD}$ of the data graph~$g$ and each internal
node is the union of its child nodes.

\end{defn}

Figure~\ref{fig:Fig05} gives an example of a $q$-$\id{gram}$
tree built on~$g_1$, $g_2$, and~$g_3$ shown in Figure~\ref{fig:Fig02}.

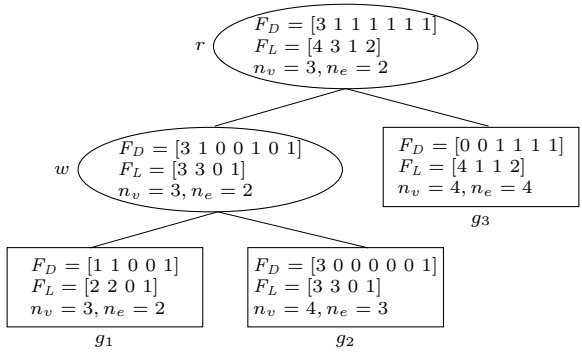
\begin{figure}[htbp]
\centering

\begin{tikzpicture}
\scriptsize{
\node[main node,shape=ellipse,minimum width=100pt,align=left] (1) at (0,0)[label=left:$r$]
     {$\id{F_D} = [3 \ 1 \ 1 \ 1 \ 1 \ 1 \ 1]$ \\ $\id{F_L} = [4 \ 3 \ 1 \ 2]$ \\  $n_v =3, n_e = 2$};
\node[main node, shape=ellipse,minimum width=100pt,align=left] (2) [xshift=-1.8cm, below = 0.5 of 1, label=left:$w$]
     {$\id{F_D} = [3 \ 1 \ 0 \ 0 \ 1 \ 0 \ 1]$ \\ $\id{F_L} = [3 \ 3 \ 0 \ 1]$ \\ $n_v =3, n_e = 2$};
\node[main node,shape=rectangle,align=left,minimum height=30pt,minimum width=74pt] (3) [xshift=1.8cm, below = 0.5 of 1, label=below:$g_3$]
     {$\id{F_D} = [0 \ 0 \ 1 \ 1 \ 1 \ 1]$ \\ $\id{F_L} = [4 \ 1 \ 1 \ 2]$ \\ $n_v =4, n_e = 4$};
\node[main node,shape=rectangle,align=left,minimum height=30pt,minimum width=74pt] (4) [xshift=-3.2cm, below = 2.1 of 1, label=below:$g_1$]
     {$\id{F_D} = [1 \ 1 \ 0 \ 0 \ 1]$ \\ $\id{F_L} = [2 \ 2 \ 0 \ 1]$ \\ $n_v =3, n_e = 2$};
\node[main node,shape=rectangle,align=left,minimum height=30pt,minimum width=74pt] (5) [xshift=0cm, below = 2.1 of 1, label=below:$g_2$]
     {$\id{F_D} = [3 \ 0 \ 0 \ 0 \ 0 \ 0 \ 1]$ \\ $\id{F_L} = [3 \ 3 \ 0 \ 1]$ \\ $n_v =4, n_e = 3$};

\draw [line width=0.1mm, black](0,-0.56) -- (-1.74,-1.06);
\draw [line width=0.1mm, black](0,-0.56) -- (1.9,-1.06);
\draw [line width=0.1mm, black](-1.7,-2.2) -- (-3.4,-2.68);
\draw [line width=0.1mm, black](-1.7,-2.2) -- (0.3,-2.68);
}
\end{tikzpicture}

\caption{\small Example of a $q$-$\id{gram}$ tree}
\label{fig:Fig05}
\end{figure}

\subsection{Succinct Representation}

The arrays $\id{F_D}$ and $\id{F_L}$ may contain lots of zeros,
thus a succinct representation of them is a space-efficient way
to store them. For a $q$-$\id{gram}$ tree, we obtain its succinct
representation by performing the following three steps. In the
following sections, we refer~$X$ to be~$D$ or~$L$.

(1) We use a bit vector $\id{I_X}$ and an array $\id{V_X}$ to
represent~$\id{F_X}$ as follows: if $\id{F_X[j]} = 0$ then we have
$\id{I_X[j]} = 0$; otherwise $\id{I_X[j]} = 1$. $\id{V_X[j]}$
represents the~$j$th nonzero entry in~$\id{F_X}$. For example,
the array~$\id{F_D}$ in the node $w$ shown in Figure~\ref{fig:Fig05}
is $\id{F_D}$ = [3 1 0 0 1 0 1], then we use ($\id{I_D}$, $\id{V_D}$)
= ([1 1 0 0 1 0 1], [3 1 1 1]) to represent $\id{F_D}$.

(2) We concatenate all bit vectors $\id{I_X}$ and arrays~$\id{V_X}$
for all nodes from the root node to leaves in a depth-first
traversal order to obtain a bit vector~$\id{B_X}$ and an array~$\id{\varPsi_X}$,
respectively. In addition, we also store the left and right
boundaries $\id{l_X}$ and $\id{r_X}$ of $\id{I_X}$ for each node,
respectively. For example, for the $q$-$\id{gram}$ tree shown in Figure~\ref{fig:Fig05},
we can obtain $\id{B_D}$ = [1 1 1 1 1 1 1 1 1 0 0 1 0 1 1 1 0 0 1 1 0 0 0 0 0 1 0 0 1 1 1 1]
and $\id{\varPsi_D}$ = [3 1 1 1 1 1 1 3 1 1 1 1 1 1 3 1 1 1 1 1].

(3) We divide $\id{\varPsi_X}$ into fixed-length blocks of size~$b$
and encode each block by choosing one from two different compression
methods so that the encoded bit vector~$\id{S_X}$ has the minimum
space. One compression method uses the fixed-length encoding
of $\lfloor \log b_{\id{max}} \rfloor +1$ bits to encode each entry
in a fixed-length encoding block, where $b_{\id{max}}$ is the
maximum value in this block. The other method uses Elias~$\gamma$
encoding to encode each entry in a $\gamma$-encoding block.
Logarithms in this paper are in base 2 unless otherwise stated.

To support random access to~$\id{\varPsi_X[j]}$, we also need to
store three auxiliary structures~$\id{SB_X}$, $\id{words_X}$,
and $\id{flag_X}$, where~$\id{SB_X}$ stores the starting position
of the encoding of each block in~$\id{S_X}$; the bit vector~$\id{flag_X}$
stores the encoding method used in each block such that $\id{flag_X}[k] = 1$
for the fixed-length encoding and $\id{flag_X}[k] = 0$ for
the Elias~$\gamma$ encoding for the $k$th block; $\id{words_X}$
stores the number of bits required for each entry in a fixed-length
encoding block. We also build rank dictionaries over the bit
vectors $\id{B_X}$ and~$\id{flag_X}$ to obtain $\id{rank}_1(\id{B_X}, j)$
and $\id{rank}_1(\id{flag_X}, j)$ in constant time~\cite{Jacobson1989},
where $\id{rank}_1(\id{B_X}, j)$ and $\id{rank}_1(\id{flag_X}, j)$
are the respective number of 1's up to $j$ in $\id{B_X}$ and $\id{flag_X}$.

Let $\id{B_D}$ and $\id{B_L}$ be the respective degree-based and
label-based $q$-$\id{grams}$ bit vectors, and $\id{\varPsi_D}$
and~$\id{\varPsi_L}$ be the respective degree-based and label-based
$q$-$\id{gram}$ frequency arrays. We use four structures $\id{S_D}$,
$\id{SB_D}$, $\id{flag_D}$, and~$\id{words_D}$ to represent~$\id{\varPsi_D}$.
Similarly, we use four structures $\id{S_L}$, $\id{SB_L}$, $\id{flag_L}$,
and $\id{words_L}$ to represent $\id{\varPsi_L}$. Figure~\ref{fig:Fig06}
shows the succinct representation of the $q$-$\id{gram}$ tree
shown in Figure~\ref{fig:Fig05}.

\begin{figure*}[htbp]
 \centering
	\begin{minipage}[b]{0.49\linewidth}
	\begin{tikzpicture}
        \scriptsize{
        \node[main node,shape=ellipse,minimum width=100pt,align=left] (1) at (0,0)[label=left:$r$]
             {$\id{l_D} =0, \id{r_D} = 6$  \\ $\id{l_L} =0,  \id{r_L} = 3$ \\ $n_v =3, n_e = 2$};
        \node[main node, shape=ellipse,minimum width=100pt, align=left] (2) [xshift=-1.8cm, below = 0.5 of 1, label=left:$w$]
             {$\id{l_D} =7, \id{r_D} = 13$ \\ $\id{l_L} =4,  \id{r_L} = 7$ \\ $n_v =3, n_e = 2$};
        \node[main node,shape=rectangle,align=left,minimum height=30pt,minimum width=74pt] (3) [xshift=1.8cm, below = 0.5 of 1, label=below:$g_3$]
             {$\id{l_D} =26, \id{r_D} = 31$ \\ $\id{l_L} =16, \id{r_L} = 19$ \\ $n_v =4, n_e = 4$};
        \node[main node,shape=rectangle,align=left,minimum height=30pt,minimum width=74pt] (4) [xshift=-3.4cm, below = 2.1 of 1, label=below:$g_1$]
             {$\id{l_D} =14, \id{r_D} = 18$ \\ $\id{l_L} =8,  \id{r_L} = 11$ \\ $n_v =3, n_e = 2$};
        \node[main node,shape=rectangle,align=left,minimum height=30pt,minimum width=74pt] (5) [xshift=-0.2cm, below = 2.1 of 1, label=below:$g_2$]
             {$\id{l_D} =19, \id{r_D} = 25$ \\ $\id{l_L} =12, \id{r_L} = 15$ \\ $n_v =4, n_e = 3$};

        \draw [line width=0.1mm, black](0,-0.56) -- (-1.74,-1.06);
        \draw [line width=0.1mm, black](0,-0.56) -- (1.9,-1.06);
        \draw [line width=0.1mm, black](-1.7,-2.2) -- (-3.2,-2.65);
        \draw [line width=0.1mm, black](-1.7,-2.2) -- (0.3,-2.65);
        }
   \end{tikzpicture}
   \centerline{(a)}
   \end{minipage}
   \begin{minipage}[b]{0.47\linewidth}
   \centering
    	\centerline{\includegraphics[width=1\textwidth]{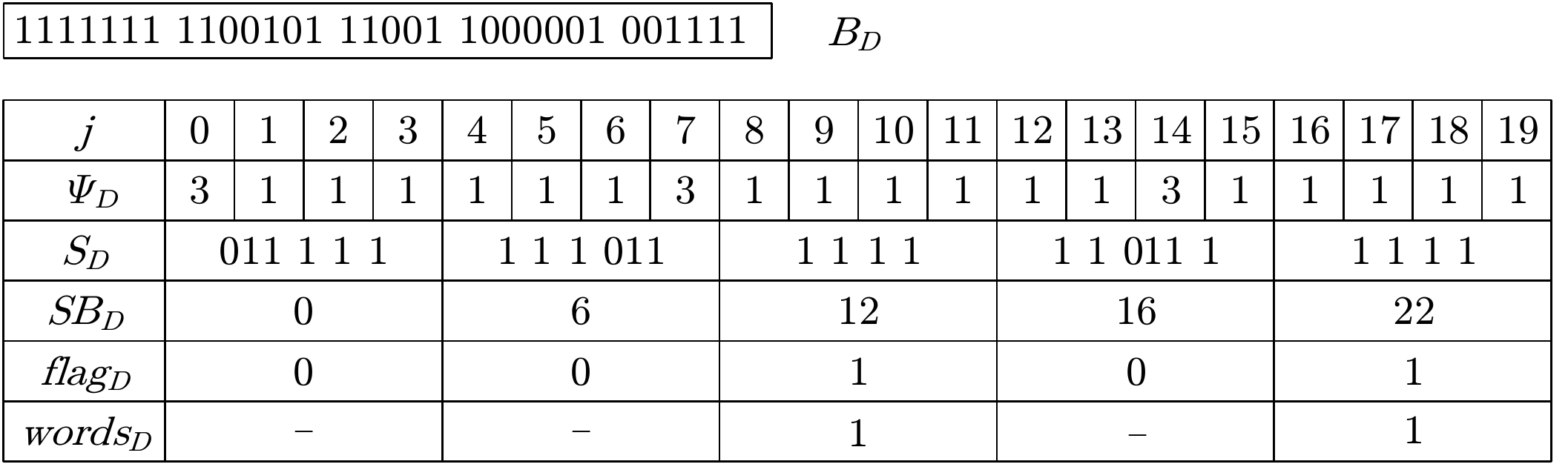}}
        \centerline{(b)}
        \vfill
        \centerline{\includegraphics[width=1\textwidth]{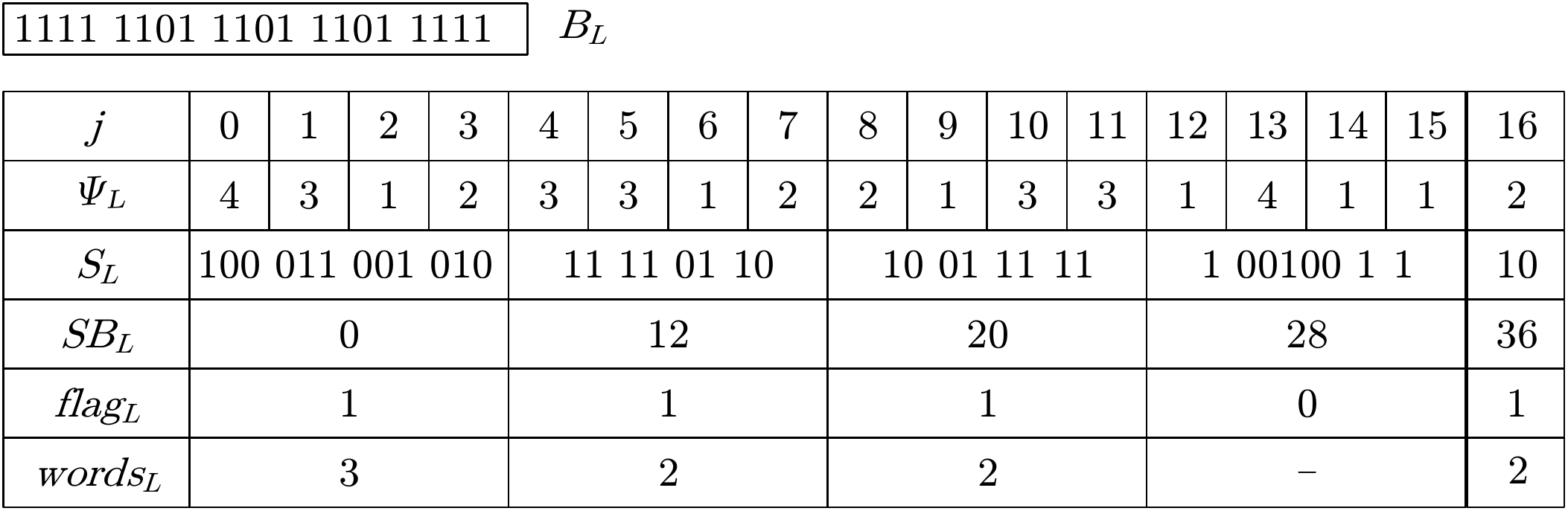}}
        \centerline{(c)}
   \end{minipage}
   \caption{Succinct representation of the $q$-$\id{gram}$ tree}
   \label{fig:Fig06}
\end{figure*}

\subsection{Access to $\varPsi_X$}

To access $\id{\varPsi_X[j]}$, we first query $\id{flag_X}$ and~$\id{SB_X}$
to determine the encoding method used and decoding position,
respectively, and then decode~$\id{S_X}$ from the decoding
position. The last decoded value is $\id{\varPsi_X[j]}$.

\vspace{-16pt}
\begin{multline}
\label{eq:eq2}
\id{\varPsi_X[j]}= decompress(\id{S_X}, \id{flag_X}[\lfloor j / b \rfloor],\ \id{SB_X}[\lfloor j / b \rfloor], {} \\
                (j \ \idrm{mod} \ b) + 1)
\end{multline}
\vspace{-14pt}

\noindent where $b$ is the block size. The operation $\id{decompress}$
performs a decoding on $\id{S_X}$. The encoding method and decoding
position are determined by the second parameter $\id{flag_X}[\lfloor j / b \rfloor]$
and third parameter $\id{SB_X}[\lfloor j / b \rfloor]$ of $\id{decompress}$,
respectively. ($j \mod b$) + 1 is the number of times needed to
be decoded.

For example, if we want to retrieve $\id{\varPsi_D}[14]$ (suppose that
the subscript starts from 0 and $b = 4$) shown in Figure~\ref{fig:Fig06},
we find that $\id{flag_D}[\lfloor 14/b \rfloor]=\id{flag_D}[3]=0$
and $\id{SB_D}[\lfloor 14/b \rfloor]=\id{SB_D}[3]=16$. Thus starting
from the~$16$th bit of $\id{S_D}$, we sequentially decode Elias~$\gamma$
encoding three times and the last decoded value is $\id{\varPsi_D}[14]=3$.

We use the following formula~(\ref{eq:eq3}) to compute the original
entry~$\id{F_X[i]}$ in the node~$w$.

\vspace{-10pt}
\begin{equation}
\id{F_X[i]} = \left \{
\begin {array}{ll}
	      0                              & \idrm{if} \id{B_X}[l_X + i] = 0;\\
    \id{\varPsi_X}[rank_1(B_X, l_X + i)] & \idrm{otherwise.} \\
\end{array} \right.
\label{eq:eq3}
\end{equation}

\noindent where $0 \leq i \leq \id{r_X}-\id{l_X}$, and $\id{l_X}$
and~$\id{r_X}$ are the left and right boundaries of $\id{I_X}$ for~$w$,
respectively. If $\id{B_X}[l_X + i] =~0$, then we have $\id{F_X}[i] = 0$
since $\id{I_X}[i] = \id{B_X}[l_X + i] = 0$; otherwise, we first
compute the position of $\id{F_X}[i]$ in $\id{\varPsi_X}$, i.e.,
$\id{rank}_1(\id{B_X}, \id{l_X} + i)$, and then use formula~(\ref{eq:eq2})
to compute $\id{\varPsi_X}[\id{rank}_1(\id{B_X}, l_X +~i)]$, where
$\id{rank}_1(\id{B_X}, \id{l_X} + i)$ is the number of 1's
up to~$\id{l_X} + i$ in~$\id{B_X}$, which can be computed in constant
time using a dictionary of $o(|\id{B_X}|)$ bits~\cite{Jacobson1989}.

For example, to retrieve $\id{F_D}[0]$ in the node~$g_2$ shown in
Figure~\ref{fig:Fig05}, we first obtain $\id{B_D}[l_D + 0] = \id{B_D}[19] = 1$,
and then compute its position in $\id{\varPsi_D}$ is $\id{rank}_1(\id{B_D}, 19) = 14$,
thus we have $\id{F_D}[0] = \id{\varPsi_D}[14] = 3$ by formula~(\ref{eq:eq2}).

As discussed above, the core operation in a succinct $q$-$\id{gram}$
tree is to calculate $\id{\varPsi_X}[j]$ by formula~(\ref{eq:eq2}),
i.e.,~the $\id{decompress}$ operation. In order to accelerate
the $\id{decompress}$ process, we use the look up table technique
proposed in~\cite{HuoCVY2014} to ensure that the $\id{decompress}$
operation takes a constant time.

\subsection{Space Analysis}
In this section we analyze the space occupied by the succinct
$q$-$\id{gram}$ tree $\id{T_{SQ}}$ built on $G$. $\id{T_{SQ}}$
consists of three parts: the respective index structures for
${\id{\varPsi_D}}$ and ${\id{\varPsi_L}}$, and left and right
boundaries, \#vertices and \#edges in each node of the tree.
The former contains encoded sequence $\id{S_X}$ and corresponding
auxiliary structures $\id{B_X}$, $\id{SB_X}$, $\id{flag_X}$ and
$\id{words_X}$; The latter consists of~$\id{l_X}$, $\id{r_X}$,
$n_v$ and $n_e$ stored in each node of $\id{T_{SQ}}$, where~$X$
denotes~$D$ or $L$. An illustration of these structures is shown
in Figure~\ref{fig:Fig06}.

Let $\id{v_m} = \idrm{max}_j\{\id{|V_{g_j}|}\}$, $\id{e_m} = \idrm{max}_j\{\id{|E_{g_j}|}\}$ for $1 \leq j \leq|G|$,
$\id{n_D} = |\id{B_D}|$, $\id{n_L} = |\id{B_L}|$, $b_D^{m}$ and
$b_L^{m}$ be the maximum value in $\id{\varPsi_D}$ and $\id{\varPsi_L}$,
respectively. For a degree-based $q$-$\id{gram}$, its maximum number
of occurrence in a graph $g$ is $|\id{V_g}|$, thus we have $b_D^{m} \leq \id{v_m}$.
Similarly, $b_L^{m} \leq \idrm{max}\{\id{v_m}, \id{e_m}\}$.
We first consider the space required by~$\id{l_X}$, $\id{r_X}$, $\id{n_v}$
and $\id{n_e}$.

For any node of $\id{T_{SQ}}$, we use $\lfloor \log \id{n_D} \rfloor +1$
bits to store $\id{l_D}$ and $\id{r_D}$, respectively, and $\lfloor \log n_L \rfloor +1$
bits to store $\id{l_L}$ and $\id{r_L}$, respectively, since $\id{l_D} \leq \id{r_D} \leq \id{n_D}$
and $\id{l_L} \leq \id{r_L} \leq \id{n_L}$. We also use respective
$\lfloor \log \id{v_m} \rfloor +1$ and $\lfloor \log \id{e_m} \rfloor +1$
bits to store $\id{n_v}$ and $\id{n_e}$, since $\id{n_v} \leq \id{v_m}$
and $\id{n_e} \leq \id{e_m}$. For an average fan-out of~$d$ for
each node in $\id{T_{SQ}}$ with $|G|$ leaf nodes, the total number
of nodes in~$\id{T_{SQ}}$ is bounded by $\sum_{h = 0}^{\log_{d}\id{|G|}}\frac{|G|}{d^h} \leq \frac{d|G|}{d-1}$.
Thus, we can use $\lfloor \log \frac{d|G|}{d-1} \rfloor +1$ bits
to store each child pointer of a node in~$\id{T_{SQ}}$. Thus, the
total number of bits required by $\id{l_D}$, $\id{r_D}$, $\id{l_L}$,
$\id{r_L}$, $\id{n_v}$, $\id{n_e}$ and pointers for all nodes in
$\id{T_{SQ}}$ is bounded by

\vspace{-10pt}
\[
\begin{aligned}
\frac{d|G|}{d-1}(2(\lfloor \log \id{n_D} \rfloor + 1) + 2(\lfloor \log \id{n_L} \rfloor + 1) + \lfloor \log \id{v_m} \rfloor + 1 +\\
              \lfloor \log \id{e_m} \rfloor + 1 + \lfloor \log \frac{d|G|}{d-1} \rfloor +1)\\
\leq \frac{d|G|}{d-1}(2\log (\id{n_D}\id{n_L}) + \log (\id{v_m}\id{e_m}) + \log \frac{d|G|}{d-1} + 7)&.
\end{aligned}
\]

We then consider the space required by $\id{S_X}$, $\id{B_X}$, $\id{SB_X}$,
$\id{flag_X}$ and $\id{words_X}$.

First we analyze the space needed by the encoded sequence $\id{S_X}$.
Let $N^g$ and $N^f$ be the respective collection of blocks with
$\gamma$ encoding and fixed-length encoding, and $|\gamma(b_i)|$ and
$|f(b_i)|$ be the respective number of bits needed to encode the
$i$th block $b_i$ using $\gamma$ encoding and fixed-length encoding.
By our hybrid encoding scheme, the number of bits required by~$\id{S_X}$
is bounded by

\vspace{-10pt}
\[
\begin{aligned}
   &\sum_{i=1}^{|\id{\varPsi_X}|/b} \idrm{min}\{|\gamma(b_i)|,|f(b_i)|\} \\
  =& \sum_{i\in \id{N^g}}|\gamma(b_i)| + \sum_{i\in\id{N^f}}|f(b_i)| \leq \sum_{i\in\id{N^g}}|f(b_i)| +\sum_{i\in\id{N^f}}|f(b_i)|\\
\leq& \sum_{i\in \id{N^g} \cup \id{N^f}}b(\lfloor \log b_{X}^{m} \rfloor +1) \leq \frac{|\id{\varPsi_X}|}{b}b(\lfloor \log b_{X}^{m} \rfloor +1) \\
\leq&\ |\id{\varPsi_X}|\log b_X^{m} + |\id{\varPsi_X}|.
\end{aligned}
\]
\noindent where the first inequality is due to the fact that
$|\gamma(b_i)|\leq |f(b_i)|$ when $i\in\id{N^g}$. The number
of bits required to encode block $b_i$ of $\id{\varPsi_X}$
using fixed-length encoding is bounded by $b(\lfloor \log b_{X}^{m} \rfloor +1)$.
The third inequality is due to the fact that $|N^g| + |N^f| = |\id{\varPsi_X}|/b$,
where $b$ is the block size.


Second, we analyze the space required by auxiliary structures
$\id{B_X}$, $\id{SB_X}$, $\id{flag_X}$ and $\id{words_X}$.

For bit vector $\id{B_X}$, the total number of bits required to
store it and its rank dictionary is $|\id{B_X}| + o(|\id{B_X}|)$
bits, where~$o(|\id{B_X}|)$ is the space in bits required by the
rank dictionary built on $\id{B_X}$ \cite{Jacobson1989}.

For $\id{SB_X}$, the space needed is $\frac{|\id{\varPsi_X}|}{b}(\log (|\id{\varPsi_X}|\log b_X^{m} + |\id{\varPsi_X}|) + 1)$
in bits in the worst case since each entry needs $\lfloor \log (|\id{\varPsi_X}|\log b_X^{m} + |\id{\varPsi_X}|) \rfloor + 1$
bits and there are $|\id{\varPsi_X}|/{b}$ blocks.

For $\id{flag_X}$, it is trivial that the total number of bits
required is $|\id{\varPsi_X}|/{b}+o(|\id{\varPsi_X}|/{b})$ bits,
since each block takes one bit and there are total $|\id{\varPsi_X}|/{b}$
blocks. The rank dictionary built on $\id{flag_X}$ needs
$o(|\id{\varPsi_X}|/{b})$ bits.

For $\id{words_X}$, the space used is bounded by
$\frac{|\id{\varPsi_X}|}{b} (\lfloor \log b_X^{m} \rfloor +~1)$,
since each entry requires $\lfloor \log b_X^{m} \rfloor +1$ bits
to store and there are~$|\id{\varPsi_X}|/{b}$ entries in the
worst case.

Putting all space needed for auxiliary structures $\id{B_X}$,
$\id{SB_X}$, $\id{flag_X}$ and $\id{words_X}$ together, we then
obtain

\vspace{-10pt}
\[
\begin{aligned}
|\id{B_X}| + o(|\id{B_X}|)+\frac{|\id{\varPsi_X}|}{b} \log(|\id{\varPsi_X}|\log b_X^{m} + |\id{\varPsi_X}|) +\\
 \frac{|\id{\varPsi_X}|}{b} \log b_X^{m} + 3 \frac{|\id{\varPsi_X}|}{b}+ o(\frac{|\id{\varPsi_X}|}{b})\\
= |\id{B_X}| + o(|\id{B_X}|)+o(|\id{\varPsi_X}|),\ \idrm{for}\ b = \log^{2}|\id{\varPsi_X}|.\qquad
\end{aligned}
\]

By adding $|\id{\varPsi_X}|\log b_X^{m} + |\id{\varPsi_X}|$ bits
required by~$\id{S_X}$ to the space required by auxiliary
structures, we obtain that the space is
$|\id{B_X}| + o(|\id{B_X}|)+ |\id{\varPsi_X}|\log b_X^{m} + |\id{\varPsi_X}| + o(|\id{\varPsi_X}|)$
bits.

By summing up all space for~$\id{T_{SQ}}$ and replacing~$X$ with~$D$ or~$L$,
we obtain that the succinct $q$-$\id{gram}$ tree $\id{T_{SQ}}$ takes
$\frac{d|G|}{d-1}(2\log(n_Dn_L) + \log(\id{v_m}\id{e_m}) + \log \frac{d|G|}{d-1} + 7) +
n_D + o(n_D)+ n_L + o(n_L)+|\id{\varPsi_D}|(\log\id{v_m} + 1) + |\id{\varPsi_L}|(\log\idrm{max}\{v_m, e_m\} + 1)
+ o(|\id{\varPsi_D}|) + o(|\id{\varPsi_L}|)$ bits of space.

\section{Query processing}
\label{sec:queryProcessing}

Our query process consists of two phrases. We first compute the
reduced query region~$\id{Q_h}$ by formula~(\ref{eq:eq1}), and then
perform the query on the succinct $q$-$\id{gram}$ trees built
on the graphs mapped into~$\id{Q_h}$.

\subsection{Query on Succinct q-gram Tree}
\label{sec:querySuccinctTree}

We introduce the query method on the succinct $q$-$\id{gram}$
tree~$T$ in this section.

\begin{lem}
\label{lem:lem6}

Let $\id{C_D}$ and $\id{C_L}$ be the respective number of common
degree-based and label-based $q$-$\id{grams}$ between any internal
node~$w$ of~$T$ and the query graph $h$, if $\id{C_D} < \idrm{max}\{n_v, |V_h|\}-2\tau$
or $\id{C_L} < \idrm{max}\{n_v, |V_h|\} +\idrm{max}\{n_e, |E_h|\}-~\tau$,
then we can safely prune all child nodes of $w$, where~$n_v$
and~$n_e$ are the number of vertices and edges in $w$, respectively.

\end{lem}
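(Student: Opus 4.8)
The plan is to show that the pruning condition on an internal node $w$ is a necessary relaxation of the two $q$-gram counting filters (Lemma~\ref{lem:lem2} and the label-based counting filter) applied at the \emph{leaf} level, so that if $w$ violates it, no descendant leaf can pass. The key observation is the monotonicity built into the union operator ``$\sqcup$'': for any data graph $g$ stored at a leaf in the subtree rooted at $w$, we have $F_X[i] \le (F_X^w)[i]$ entrywise for $X \in \{D, L\}$ (this follows by induction on tree depth from Definition~\ref{def:four-union}, since each entry of a parent is the $\max$ of the corresponding child entries, padded by zeros), and similarly $n_v^w \le |V_g|$ and $n_e^w \le |E_g|$ because the parent stores the coordinatewise $\min$. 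Consequently the number of common degree-based $q$-grams between $g$ and $h$, which is $\sum_i \min\{F_D^g[i], F_D^h[i]\}$, is bounded above by $\sum_i \min\{(F_D^w)[i], F_D^h[i]\} = C_D$, and likewise $|L(g)\cap L(h)| \le C_L$.

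First I would make the entrywise domination precise: state and prove (by induction, or simply cite Definition~\ref{def:four-union} and ``defined recursively'') that for every leaf descendant $g$ of $w$, $F_D^g \preceq F_D^w$, $F_L^g \preceq F_L^w$, $n_v^w \le |V_g|$, $n_e^w \le |E_g|$. Second, I would combine this with Lemma~\ref{lem:lem2}: if $ged(g,h)\le \tau$ then $|D(g)\cap D(h)| \ge 2\max\{|V_g|,|V_h|\} - |\Sigma_{V_g}\cap\Sigma_{V_h}| - 2\tau \ge 2\max\{|V_g|,|V_h|\} - |V_h| - 2\tau \ge \max\{|V_g|,|V_h|\} - 2\tau \ge \max\{n_v, |V_h|\} - 2\tau$, where I used $|\Sigma_{V_g}\cap\Sigma_{V_h}| \le \min\{|V_g|,|V_h|\} \le |V_h|$, then $2\max - |V_h| \ge \max \ge |V_h|$ hence $\ge \max\{|V_g|,|V_h|\}$... more carefully, $2\max\{|V_g|,|V_h|\} - |V_h| \ge \max\{|V_g|,|V_h|\}$ since $\max\{|V_g|,|V_h|\}\ge |V_h|$, and $\max\{|V_g|,|V_h|\} \ge n_v$ by the domination bound, so the chain gives $|D(g)\cap D(h)| \ge \max\{n_v,|V_h|\} - 2\tau$. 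Since $C_D \ge |D(g)\cap D(h)|$, the hypothesis $C_D < \max\{n_v,|V_h|\} - 2\tau$ forces $ged(g,h) > \tau$ for every such $g$, so all child nodes of $w$ (and their subtrees) may be discarded. Third, the $C_L$ case is the same argument using the label-based counting filter: $ged(g,h)\le\tau$ implies $|L(g)\cap L(h)| \ge \max\{|V_g|,|V_h|\} + \max\{|E_g|,|E_h|\} - \tau \ge \max\{n_v,|V_h|\} + \max\{n_e,|E_h|\} - \tau$, again contradicted by the stated condition.

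The main obstacle is getting the relaxation of $|\Sigma_{V_g}\cap\Sigma_{V_h}|$ right: the leaf-level bound in Lemma~\ref{lem:lem2} involves $|\Sigma_{V_g}\cap\Sigma_{V_h}|$, which is not directly available at the internal node $w$, so I must bound it crudely by $|V_h|$ (or $\min\{|V_g|,|V_h|\}$) to eliminate it while keeping the inequality in the direction that still yields a valid prune. I should double-check that this coarsening does not overshoot — i.e.\ that $2\max\{|V_g|,|V_h|\} - |V_h| - 2\tau$ is genuinely $\ge \max\{n_v,|V_h|\} - 2\tau$ — which it is, since $2M - |V_h| \ge M$ whenever $M = \max\{|V_g|,|V_h|\} \ge |V_h|$, and $M \ge |V_g| \ge n_v$. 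Everything else is bookkeeping: the entrywise domination is immediate from the $\max$/$\min$ structure of $\sqcup$, and the common-$q$-gram count being $\sum_i\min\{\cdot,\cdot\}$ is monotone in each argument.
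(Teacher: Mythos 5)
Your proof is correct and follows essentially the same route as the paper's: both establish the entrywise domination $F_X^g \preceq F_X^w$ and $n_v \le |V_g|$, $n_e \le |E_g|$ from the $\max$/$\min$ structure of the union operator, deduce $|D(g)\cap D(h)| \le C_D$ and $|L(g)\cap L(h)| \le C_L$ by monotonicity of $\sum_i \min\{\cdot,\cdot\}$, and then relax the leaf-level counting filters by discarding $|\Sigma_{V_g}\cap\Sigma_{V_h}|$ (bounded above by $\max\{|V_g|,|V_h|\}$) and replacing $|V_g|,|E_g|$ with $n_v,n_e$. The only difference is presentational: you argue by contrapositive where the paper chains the inequalities directly.
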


\begin{proof}

Let $\id{x.LD} = (\id{x.F_D}, \id{x.F_L}, x.n_v, x.n_e)$ denote
the four-tuple of~$x$, where $x$ is a node in $T$. For any internal
node~$w$ and a query graph $h$, we have $\id{C_D}=\sum_i \idrm{min}\{\id{w.F_D}[i], \id{h.F_D}[i]\}$.
According to Definition~\ref{def:four-union}, for a child node~$w_j$
of $w$ we have $\id{w_j.F_D}[i] \leq \id{w.F_D}[i]$. Therefore,
for a descendent leaf node (i.e., graph) $g$ of~$w$, we have
$|D(g) \cap D(h)| = \sum_i\idrm{min}\{\id{g.F_D}[i], \id{h.F_D}[i]\}
\leq \cdots \leq \sum_i\idrm{min}\{\id{w_j.F_D}[i], \id{h.F_D}[i]\} \leq \sum_i\idrm{min}\{\id{w.F_D}[i], \id{h.F_D}[i]\} =~\id{C_D}$.
Similarly, we also have $\id{w.n_v} \leq \id{w_j.n_v} \leq \cdots \leq \id{g.n_v}=|V_g|$
and $\id{w.n_e} \leq \id{w_j.n_e} \leq \cdots \leq \id{g.n_e}=|E_g|$.
If~$\id{C_D} < \idrm{max}\{\id{w.n_v}, |V_h|\}-2\tau$, then we have
$|D(g) \cap D(h)| \leq \id{C_D} < \idrm{max}\{\id{w.n_v}, |V_h|\}-2\tau
\leq \idrm{max}\{|V_g|, |V_h|\}-2\tau \leq 2\idrm{max}\{|V_g|, |V_h|\}
-|\Sigma_{V_g} \cap \Sigma_{V_h}|-2\tau$ and can safely prune
graph~$g$ by Lemma~\ref{lem:lem3}. Similarly, if $\id{C_L} < \idrm{max}\{\id{w.n_v}, |V_h|\} +
\idrm{max}\{\id{w.n_e}, |E_h|\}-\tau$, then we have $|L(g) \cap L(h)|
< \idrm{max}\{|V_g|, \\|V_h|\}  + \idrm{max}\{|E_g|, |E_h|\}-\tau$
and then can safely prune $g$ by the label-based $q$-$\id{gram}$
counting filter. So, we can safely prune all child nodes of~$w$.
\end{proof}

Algorithm~\ref{alg:searchQTree} gives the query algorithm on~$T$,
where~$r$ is the root node, $(\id{F_D}, \id{F_L}, n_v, n_e)$ is the
four-tuple of~$w$, $\id{l_D}$ and~$\id{r_D}$ are the left and right
boundaries of~$\id{I_D}$ for~$w$ in~$T$, respectively.

\begin{algorithm}
\caption{$\id{searchQTree}$($T, h, \tau$)}
\label{alg:searchQTree}
\DontPrintSemicolon
\KwIn{$ T, h, \tau $}
\KwOut{$\id{Cand}$}
$Cand \gets \; \varnothing $ \;
Compute the four-tuple $\id{LD'}=(\id{F_D'}, \id{F_L'}, n_v', n_e')$ and degree sequence $\sigma_h$ of $h$ \;
$searchTree$($r, \id{LD'}, \sigma_h, h, \tau$)\;
\Return $\id{Cand}$\;
\noindent \textbf{Procedure} $searchTree$($w, \id{LD'}, \sigma_h, h, \tau$)\;
$\id{C_L} \gets\; \sum_i \idrm{min}\{\id{F_L}[i], \id{F_L'}[i]\}$ \;
\If{$\id{C_L} \geq \idrm{max}\{n_v, |V_h|\} + \idrm{max}\{n_e, |E_h|\}-\tau$}{
     $\id{C_D} \gets\; \sum_i \idrm{min}\{\id{F_D}[i],\id{F_D'}[i]\}$\;
     \If{$\id{C_D} \geq \idrm{max}\{n_v, |V_h|\} - 2\tau$}{
         \If{$w$ is an internal node}{
           \For{each child $w_i$ of $w$}{
                \textbf{$searchTree$}($w_i, \id{LD'}, \sigma_h, h, \tau$)\;
           }
         }
         \If{$\id{C_D} \geq 2\idrm{max}\{|V_h|, |V_w|\}-|\Sigma_{V_w} \cap \Sigma_{V_h}|-2\tau$}{
             \For {$i \gets\; \id{l_D} \ \idrm{to} \ \id{r_D}$}{
                $\id{F_w}[i-l_D] \gets\; \id{F_D}[i-l_D]$\;
             }
             Obtain the degree sequence $\sigma_w$ of $w$\;
             Compute the lower bound $\xi$ \;
             \If {$\xi \leq \tau$}{
               $\id{Cand} \gets\; \id{Cand} \cup \{w\}$ \;
             }
        }
     }
}
\end{algorithm}

In Algorithm~\ref{alg:searchQTree}, we first compute the four-tuple
$LD'$ and degree sequence $\sigma_h$ of~$h$ in line~2, respectively,
and then perform the search processing $\id{searchTree}$ starting
from a node~$w$ initialized to~$r$, the root node of $T$ as follows.
First, we determine whether a node $w$ needs to be pruned based
upon Lemma~\ref{lem:lem6} in lines~6--12. In lines~6 and~8, we
compute the number of common label-based $q$-$\id{grams}$ $C_L$
and degree-based $q$-$\id{grams}$ $\id{C_D}$ between~$w$ and~$h$,
respectively. Note that, each entry $\id{F_D}[i]$ and $\id{F_L}[i]$
are compressed in $\id{\varPsi_D}$ and $\id{\varPsi_L}$, respectively,
thus we need to use formula (3) to compute them.
If $\id{C_L} < \idrm{max}\{n_v, |V_h|\} + \idrm{max}\{n_e, |E_h|\}-\tau$
or $\id{C_D} < \idrm{max}\{n_v, |V_h|\}-~2\tau$, we prune $w$; otherwise
each subtree of $w$ will be accessed in lines 11--12. Then, we
determine whether a node $w$ needs to be pruned based upon
Lemma~\ref{lem:lem2} in line~13, i.e., the degree-based $q$-$\id{gram}$
counting filter. If $\id{C_D} < 2\idrm{max}\{n_v, |V_h|\}-|\Sigma_{V_w} \cap \Sigma_{V_h}|-2\tau$,
we prune~$w$, where $|\Sigma_{V_w} \cap \Sigma_{V_h}|$ is the
number of common vertex labels between $w$ and $h$ obtained while
computing~$\id{C_L}$. Finally, we first obtain the degree sequence
$\sigma_w$ and then determine whether $w$ needs to be pruned
based upon Lemma~\ref{lem:lem5}, i.e., the degree-sequence filter.
In lines~14--15, we first compute the array~$F_w$ storing the
degree-based $q$-$\id{gram}$ set of $w$ and then obtain~$\sigma_w$
using $F_w$ and $\id{T_D}$ in line 16, where $\id{T_D}$ is a table
storing the mapping between a degree-based $q$-$\id{gram}$ and its
identifier. If $\xi > \tau$, then we prune $w$; otherwise, it passes
all filters to become a candidate.

\subsection{Query Algorithm}
\label{sec:whole}

Algorithm~\ref{alg:search} gives the whole query algorithm,
where~($x_0, y_0$) is the initial division point, $l$ is the
subregion length and $\id{T_{i, j}}$ is the succinct $q$-$\id{gram}$
tree built on the graphs mapped into the subregion~$\id{A_{i, j}}$.

\begin{algorithm}
\caption{$search$($h,\tau, x_0, y_0, l$)}
\label{alg:search}

\DontPrintSemicolon
\KwIn{$h,\tau, x_0, y_0, l$}
\KwOut{$\id{Cand}$}
$Cand \gets \; \varnothing $ \;
$Q_h \gets\; \cup_{i,j}\id{A_{i,j}}$ for all $i_1 \leq i \leq i_2$
and $j_1 \leq j \leq j_2$ \;
\ForEach{$\id{A_{i,j}} \subseteq Q_h $}{
        $\id{C_{i,j}} \gets\; \id{searchQTree}(\id{T_{i,j}}, h, \tau)$ \;
        $ \id{Cand} \gets \; \id{Cand} \cup \id{C_{i,j}} $ \;
}
\Return $\id{Cand}$
\end{algorithm}

In Algorithm~\ref{alg:search}, we first compute the query region~$Q_h$
by formula~(\ref{eq:eq1}) in line 2, where $i_1=\lfloor(|E_h| -\tau + |V_h|-(x_0 + y_0))/l\rfloor$,
$i_2 = \lfloor(|E_h| + \tau + |V_h|-(x_0 + y_0))/l \rfloor$,
$j_1 = \lfloor(|E_h| - \tau - |V_h|-(y_0-x_0))/l \rfloor$ and
$j_2 = \lfloor(|E_h| + \tau - |V_h|-(y_0-x_0))/l \rfloor$.
Then we only need to perform the query on the $q$-$\id{gram}$
trees~$\id{T_{i,j}}$ built on these subregions $\id{A_{i, j}}$ satisfying
$\id{A_{i, j}} \subseteq Q_h$ in lines~3--5. For each candidate graph~$g$
in~$\id{Cand}$, we can use the methods in~\cite{RiesenFB2007, RiesenEB2013, ZhaoXLW2012}
to compute the edit distance between~$g$ and~$h$ to seek for the
required graphs.

\section{Experimental results}
\label{sec:experiments}

In this section, we evaluate the performance of our proposed
method and compare it with C-Star~\cite{ZengTWFZ2009},
GSimJoin~\cite{ZhaoXLW2012} and Mixed~\cite{ZhengZLWZ2015}
on the real and synthetic datasets.

\subsection{Datasets and Settings}

We choose several real and synthetic datasets to test the performance
of the above approaches in our experiment, described as follows:

(1) AIDS\footnote{\url{http://dtp.nci.nih.gov/docs/aids/aids data.html}}.
It is a DTP AIDS antivirus screen compound dataset from the Development
and Therapeutics Program in NCI/NIH to discover compounds capable of
inhibiting the HIV virus. It contains 42687 chemical compounds. We
generate the labeled graphs from these chemical compounds and omit
Hydrogen atoms as did in~\cite{WangSHG2009}.

(2) PubChem\footnote{\url{http://pubchem.ncbi.nlm.nih.gov/}}. It
is a NIH funded project to record experimental data of chemical
interactions with biological systems. It contains more than 50
million chemical compounds until today. We randomly select 25
million chemical compounds to make up the large dataset PubChem-25M
used in this experiment.

(3) Synthetic. The synthetic datasets are generated by the synthetic
graph data generator GraphGen\footnote{\url{http://www.cse.ust.hk/graphgen/}}.
The synthetic generator can create a labeled and undirected graph
dataset. It allows us to specify various parameters, including the
dataset size, the average graph density $\rho = \frac{2|E|}{|V|(|V|-1)}$,
the number of edges in a graph, and the number of distinct vertex
and edge labels in the dataset, respectively. In order to evaluate
the performance of the above approaches on the density graphs, we
generate the dataset S100K.E30.D50.L5, which means that this dataset
contains 100000 graphs; the average density of each graph is 50\%;
the number of edges in each graph is 30; and the number of distinct
vertex and edge labels are 5 and 2, respectively.

For each dataset, we randomly select 50 graphs from it as its query
graphs. Table~\ref{tab:Statistics} summarizes some general characteristics
of the three datasets described above.

\begin{table}
\centering
\caption{Statistics of the three data sets}
\label{tab:Statistics}
\scriptsize
\begin{tabular}{|c|c|c|c|c|c|} \hline
DataSet          & $|G|$      & $|V|$ & $|E|$ & $|\Sigma_{V}|$ & $|\Sigma_{E}|$\\ \hline
AIDS             & 42687      & 25.6  & 27.5  & 62             & 3 \\ \hline
S100K.E30.D50.L5 & 100,000    & 11.02 & 30    & 5              & 2 \\ \hline
PubChem-25M      & 25,000,000 & 23.4  & 25.2  & 101            & 3 \\
\hline\end{tabular}
\end{table}

We have conducted all experiments on a HP Z800 PC with a 2.67 GHz CPU
and 24GB memory, running Ubuntu 12.04 operating system. We implemented
our algorithm in C++, with $-O3$ to compile and run. For GSimJoin, we
set $p = 4$ for the sparse graphs in datasets AIDS and PubChem-25M, and $p = 3$
for the density graphs in dataset S100K.E30.D50.L5, which are the recommended
values~\cite{ZhaoXLW2012}. In the following sections, we refer MSQ-Index
to our index structure and set the subregion length $l = 4$ and block size $b = 16$, respectively.

\subsection{Index Construction and Space Usage}

In this section, we introduce extensive experiments to evaluate
index construction performance of C-Star, GSimJoin, Mixed and MSQ-Index.

\subsubsection{Evaluating Our Index}

In order to evaluate the effectiveness of our hybrid encoding, we
compare it with fixed-length encoding, Elias~$\delta$ encoding,
Golomb encoding, and Elias~$\gamma$ encoding.

For each dataset described in Table~\ref{tab:Statistics}, we
show the number of bits on the average required by each entry
in $\id{\varPsi_D}$ and $\id{\varPsi_L}$ in Table~\ref{tab:AverageBits}
when applying fixed-length encoding ($f$), Elias~$\delta$ encoding
($\delta$), Golomb encoding ($g$), Elias~$\gamma$ encoding ($\gamma$)
and hybrid encoding ($h$) to $\id{\varPsi_D}$ and $\id{\varPsi_L}$,
where S100K and Pub-25M stand for S100K.E30.D50.L5 and PubChem-25M,
respectively. Each entry in  $\id{\varPsi_D}$ and $\id{\varPsi_L}$
uses about between~4 and 6 bits on the tested data, which is much
smaller than that used to represent an entry in the previous
state-of-the-art indexing methods compared in this paper.

\begin{table}[ht]
\centering
\caption{Average number of bits}
\label{tab:AverageBits}
\scriptsize
\begin{tabular}{|p{0.06\textwidth}<{\centering}|p{0.02\textwidth}<{\centering}|p{0.02\textwidth}<{\centering}|p{0.02\textwidth}<{\centering}
                |p{0.02\textwidth}<{\centering}|p{0.02\textwidth}<{\centering}|p{0.02\textwidth}<{\centering}|p{0.02\textwidth}<{\centering}
                |p{0.02\textwidth}<{\centering}|p{0.02\textwidth}<{\centering}|p{0.02\textwidth}<{\centering}|p{0.02\textwidth}<{\centering}
                |p{0.02\textwidth}<{\centering}|p{0.02\textwidth}<{\centering}|}

  \hline
  \multirow {2}{*}{$Datasets$} & \multicolumn{5}{c|}{$\varPsi_D$}   &  \multicolumn{5}{c|}{$\varPsi_L$}  \\ \cline{2-11}
                               & $f$  & $g$  & $\delta$ & $\gamma$  & $h$           & $f$   & $g$  & $\delta$ & $\gamma$  & $h$  \\  \hline
  AIDS                         & 4.51 & 4.25 & 3.68     & 3.57      & \textbf{3.51} & 5.97  & 6.23 & 6.17     & 6.13      & \textbf{5.93} \\   \hline
  S100K                        & 3.33 & 4.04 & 3.31     & 3.18      & \textbf{3.08} & 4.01  & 4.75 & 4.5      & 4.27     & \textbf{3.88} \\  \hline
  Pub-25M                      & 4.57 & 4.19 & 3.62     & 3.61      & \textbf{3.36} & 5.73  & 6.07 & 5.85     & 5.79      & \textbf{5.31} \\  \hline
\end{tabular}
\end{table}

Among all the encoding methods shown in Table~\ref{tab:AverageBits},
hybrid encoding gives the minimum space. Compared with fixed-length
encoding, the average number of bits required for hybrid encoding
decreases by about 10\%.

\begin{table}[htbp]
\centering
\caption{Storage space (MByte)}
\label{tab:indexSize}
\scriptsize
\begin{tabular}{|p{0.06\textwidth}<{\centering}|p{0.04\textwidth}<{\centering}|p{0.046\textwidth}<{\centering}|p{0.046\textwidth}<{\centering}
                |p{0.03\textwidth}<{\centering}|p{0.04\textwidth}<{\centering}|p{0.04\textwidth}<{\centering}|}
  \hline
  \multirow {2}{*}{$Datasets$} & \multicolumn{3}{c|}{$T_Q$}& \multicolumn{3}{c|}{$T_{SQ}$}         \\ \cline{2-7}
                               & $S_a$   & $S_b$   & $S_c$   & $S_a'$   & $S_b'$  & $S_c'$    \\ \hline
  AIDS                         & 0.29    & 6.09    & 2.11    & 0.51        & 0.39       & 0.27       \\ \hline
  S100K                        & 0.53    & 5.45    & 7.78    & 1.13        & 0.29       & 0.44       \\ \hline
  Pub-25M                      & 343.58  & 3669.91 & 2014.12 & 598.1       & 303.83     & 237.21      \\ \hline
\end{tabular}
\end{table}

In Table~\ref{tab:indexSize}, we report the storage space of
the $q$-$\id{gram}$ tree $\id{T_Q}$ and its succinct representation
$\id{T_{SQ}}$ built on the above three datasets. For
$\id{T_Q}$, we decompose its storage space into three parts
$S_a$, $S_b$ and $S_c$, where $S_a$ is the storage space of
$n_v, n_e$ and pointers of all nodes, and $S_b$ and $S_c$ are
the storage space of $\id{F_D}$ and $\id{F_L}$ of all nodes,
respectively. Correspondingly, $S_a'$ is the storage space of
$\id{n_v}$, $\id{n_e}$, $\id{l_D}$, $\id{r_D}$, $\id{l_L}$,
$\id{r_L}$ and pointers of all nodes, shown in Figure~\ref{fig:Fig06}(a).
$S_b'$ is total storage space of $\id{B_D}$, $\id{S_D}$, $\id{SB_D}$, $\id{words_D}$
and $\id{flag_D}$, shown in Figure~\ref{fig:Fig06}(b), and
$S_c'$ is total storage space of $\id{B_L}$, $\id{S_L}$,
$\id{SB_L}$, $\id{words_L}$ and $\id{flag_L}$, shown in
Figure~\ref{fig:Fig06}(c).

From Table~\ref{tab:indexSize}, we know that $S_b$ and $S_c$
take up most amount of storage space of $\id{T_Q}$, thus a
succinct representation of $\id{F_D}$ and $\id{F_L}$ of all
nodes is an efficient way to reduce the storage space of $\id{T_Q}$.
Compared with $S_b$ and $S_c$, both $S_b'$ and $S_c'$ can
be reduced by more than 90\%. This is because that (1) only
nonzero entries are needed to encode in the succinct representation;
(2) our hybrid encoding will greatly reduce the number of bits
required for each nonzero entry. Compared with the storage space
of $\id{T_Q}$ (the sum of $S_a$, $S_b$ and $S_c$), the storage
space of $\id{T_{SQ}}$ (the sum of $S_a'$, $S_b'$ and
$S_c'$) can be reduced by more than 80\%. Thus, the succinct
representation of $q$-$\id{gram}$ tree can greatly reduce the
storage space.

\begin{figure*}[ht]
\centering
    \begin{minipage}{0.28\linewidth}
		\centerline{\includegraphics[width=1\textwidth]{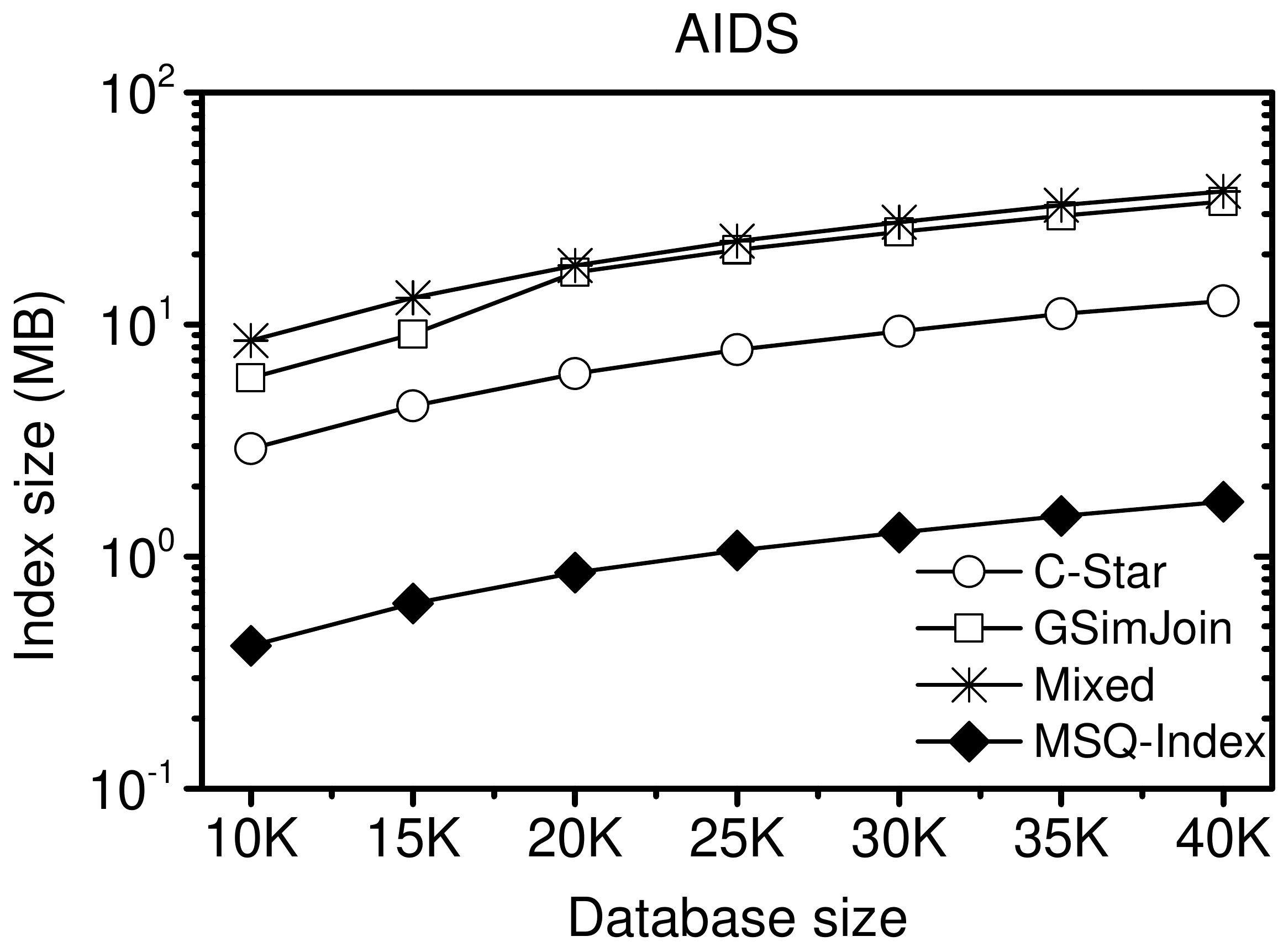}}
	\end{minipage}
	\qquad
	\begin{minipage}{0.28\linewidth}
		\centerline{\includegraphics[width=1\textwidth]{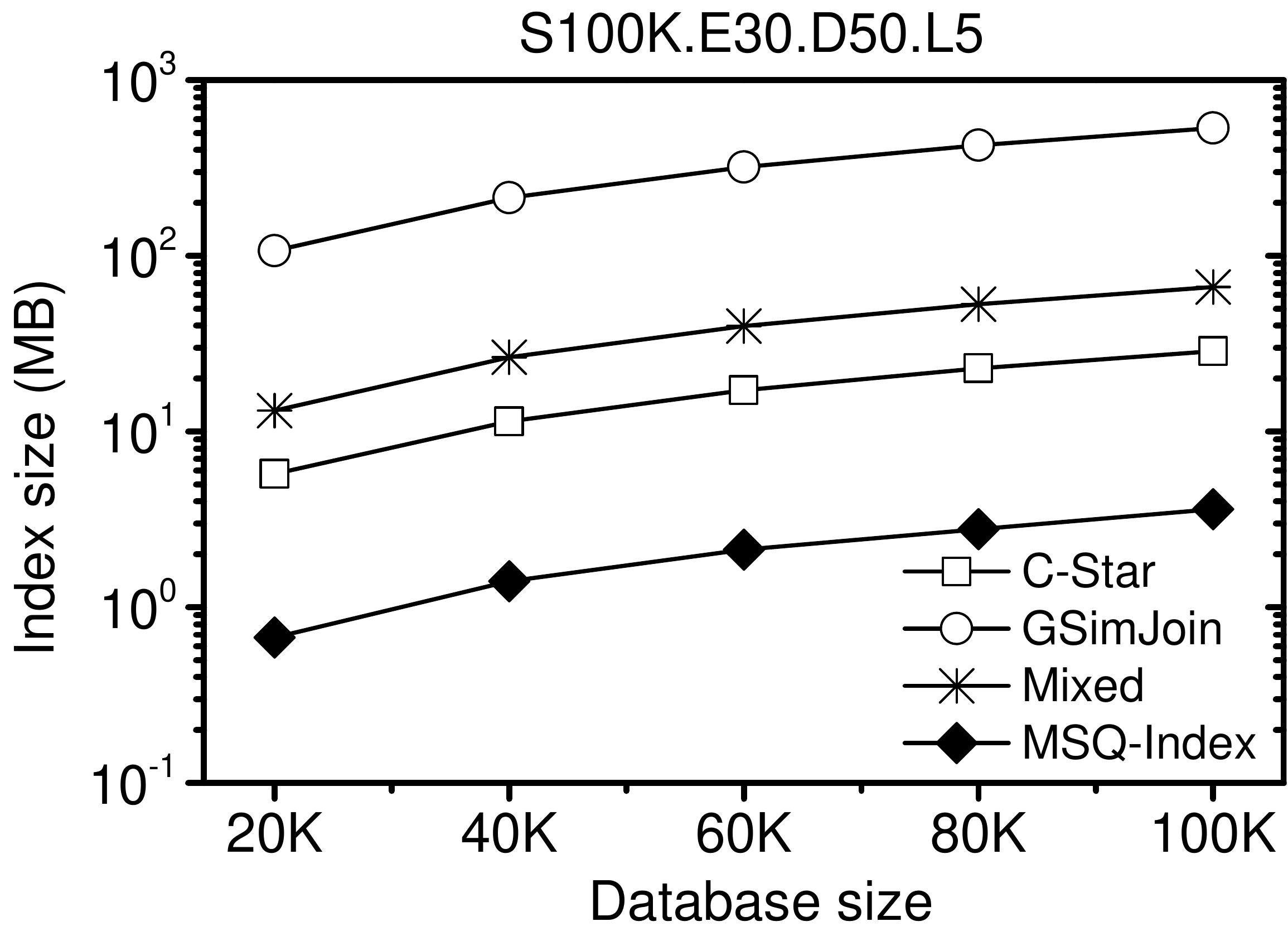}}
	\end{minipage}
	\qquad
	\begin{minipage}{0.28\linewidth}
		\centerline{\includegraphics[width=1\textwidth]{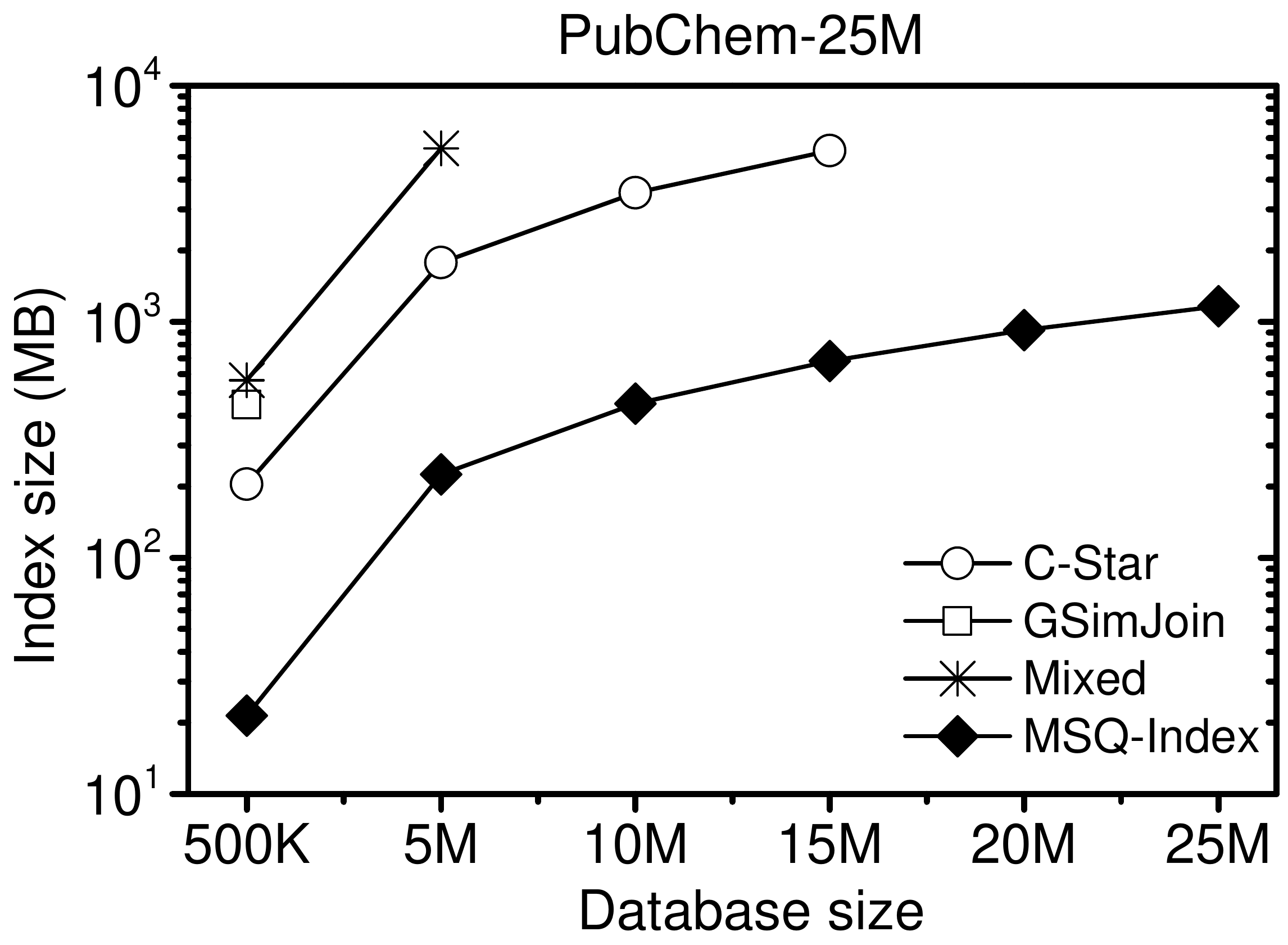}}
    \end{minipage}
    \vfill
    \begin{minipage}{0.28\linewidth}
		\centerline{\includegraphics[width=1\textwidth]{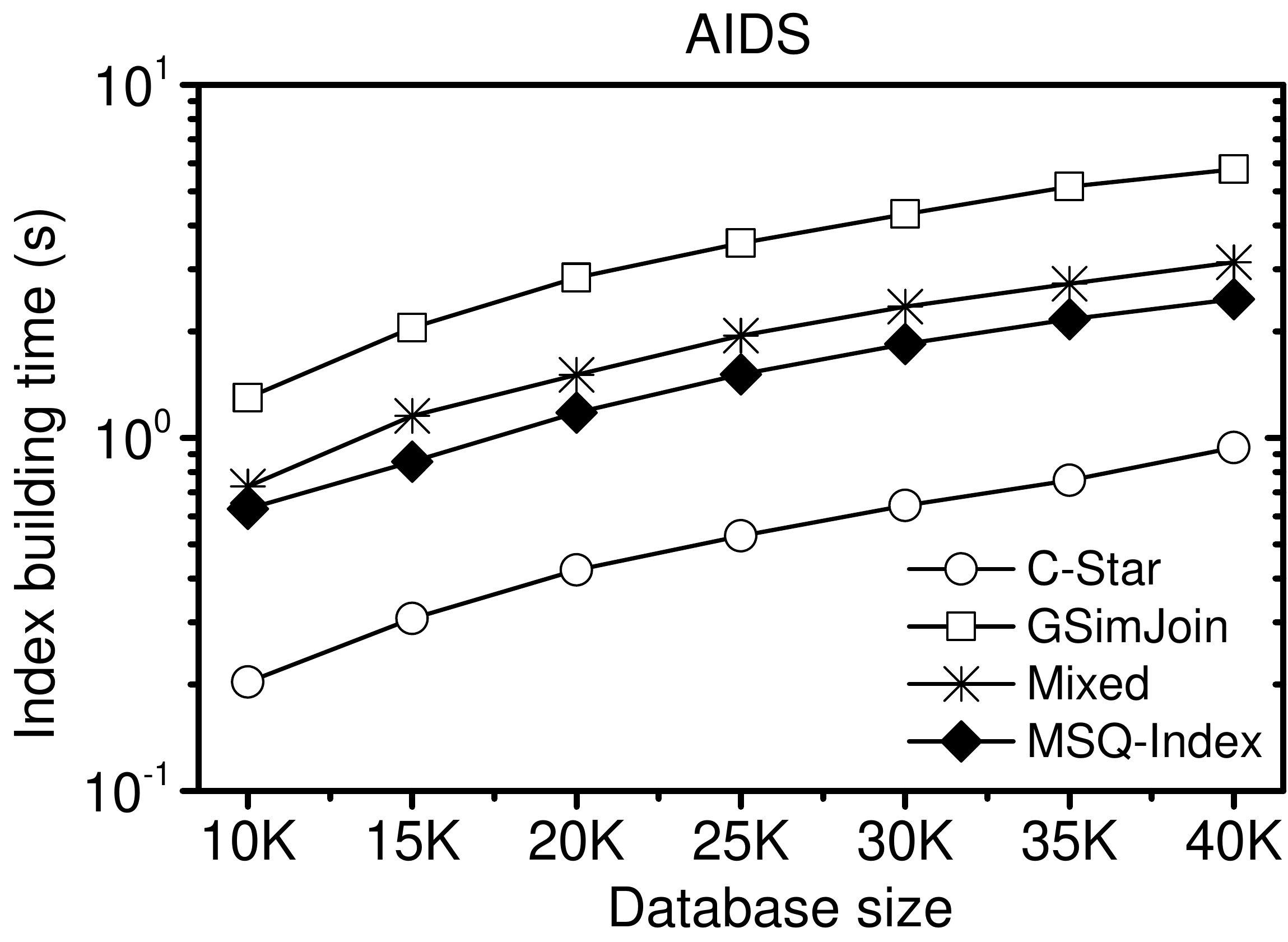}}
	\end{minipage}
	\qquad
	\begin{minipage}{0.28\linewidth}
		\centerline{\includegraphics[width=1\textwidth]{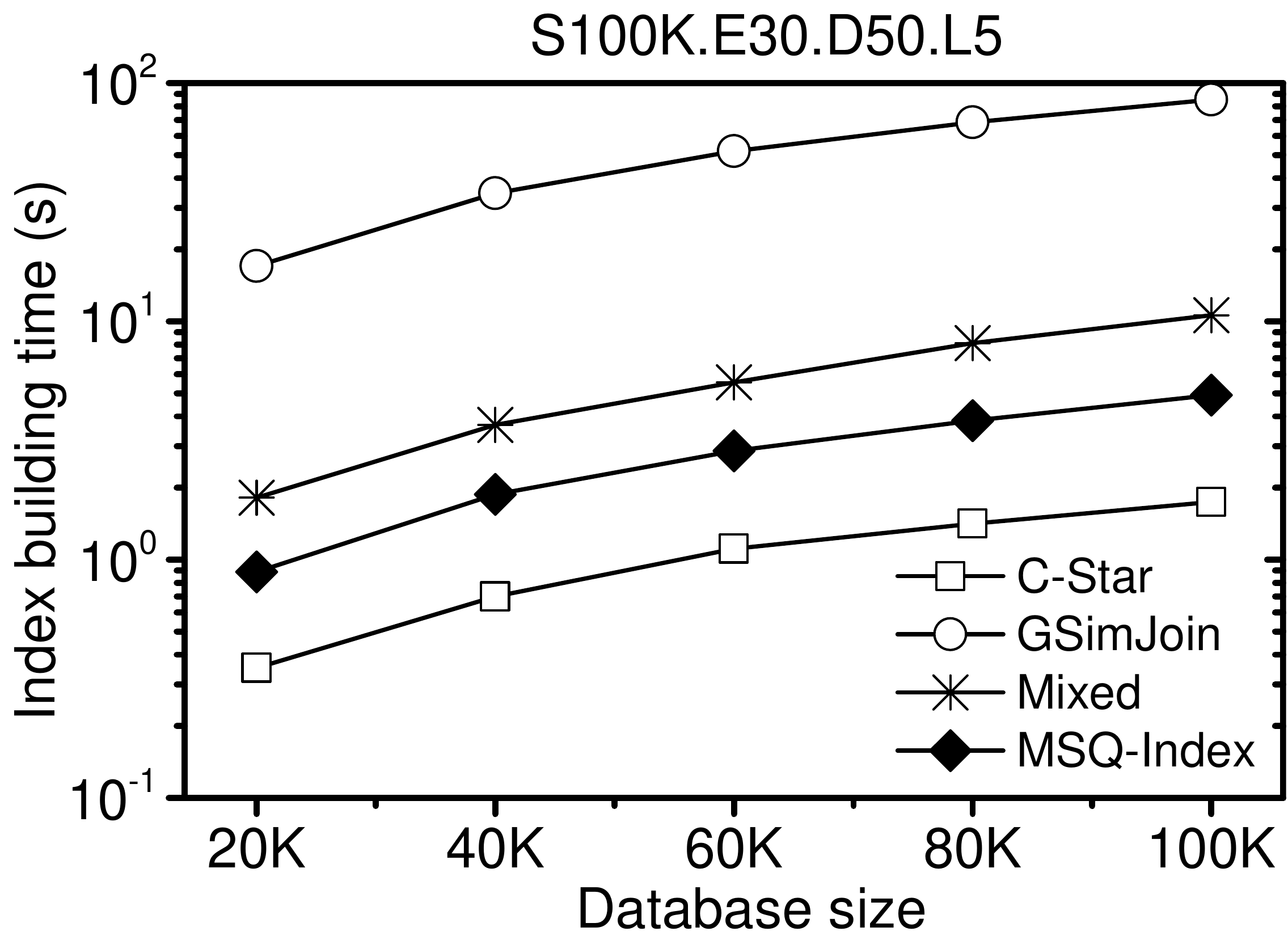}}
	\end{minipage}
	\qquad
	\begin{minipage}{0.28\linewidth}
		\centerline{\includegraphics[width=1\textwidth]{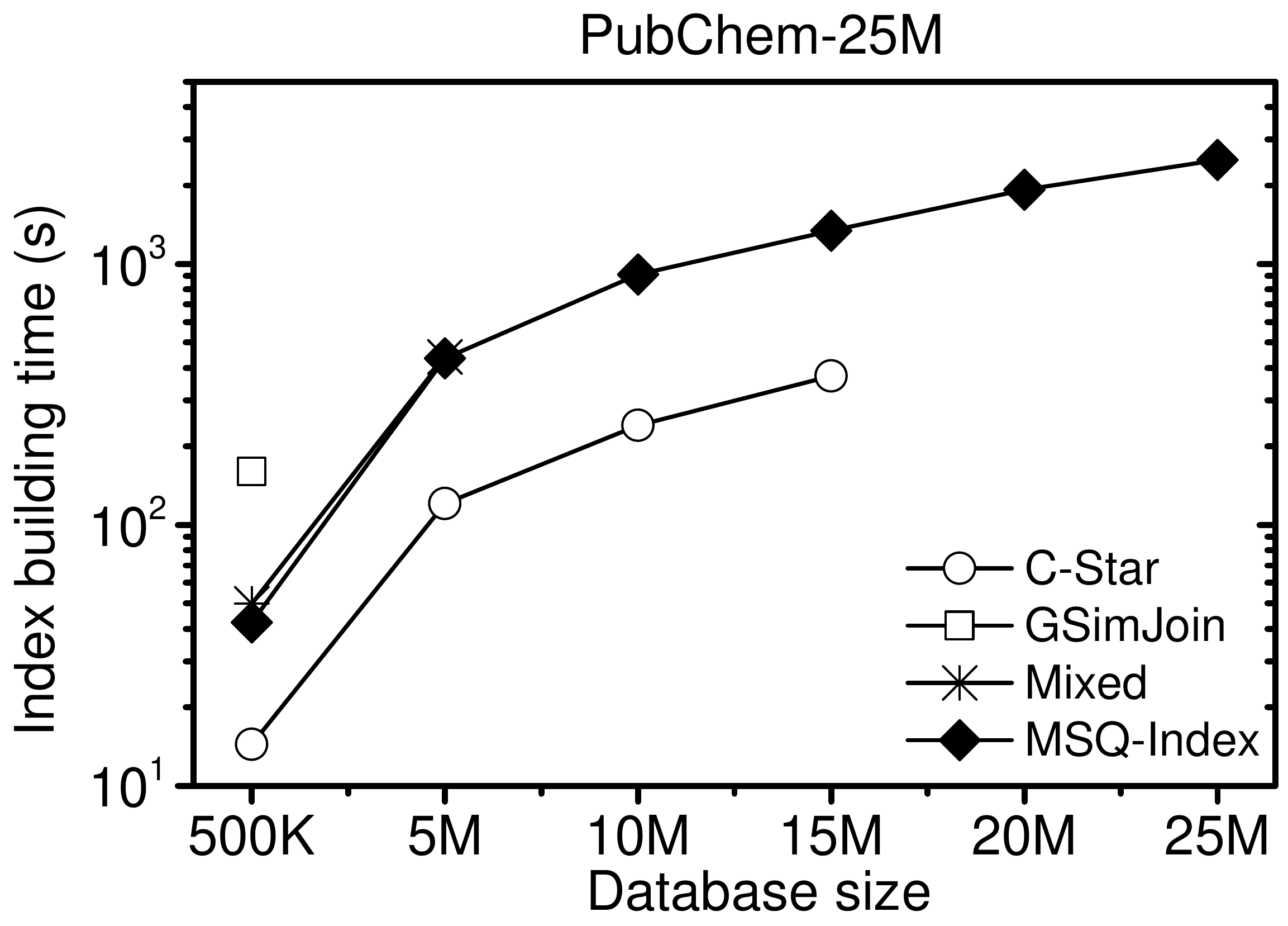}}
	\end{minipage}
\caption{Index size and building time on the three datasets.}
\label{fig:Fig08}
\end{figure*}

\begin{figure*}[htbp]
\centering
    \begin{minipage}{0.28\linewidth}
		\centerline{\includegraphics[width=1\textwidth]{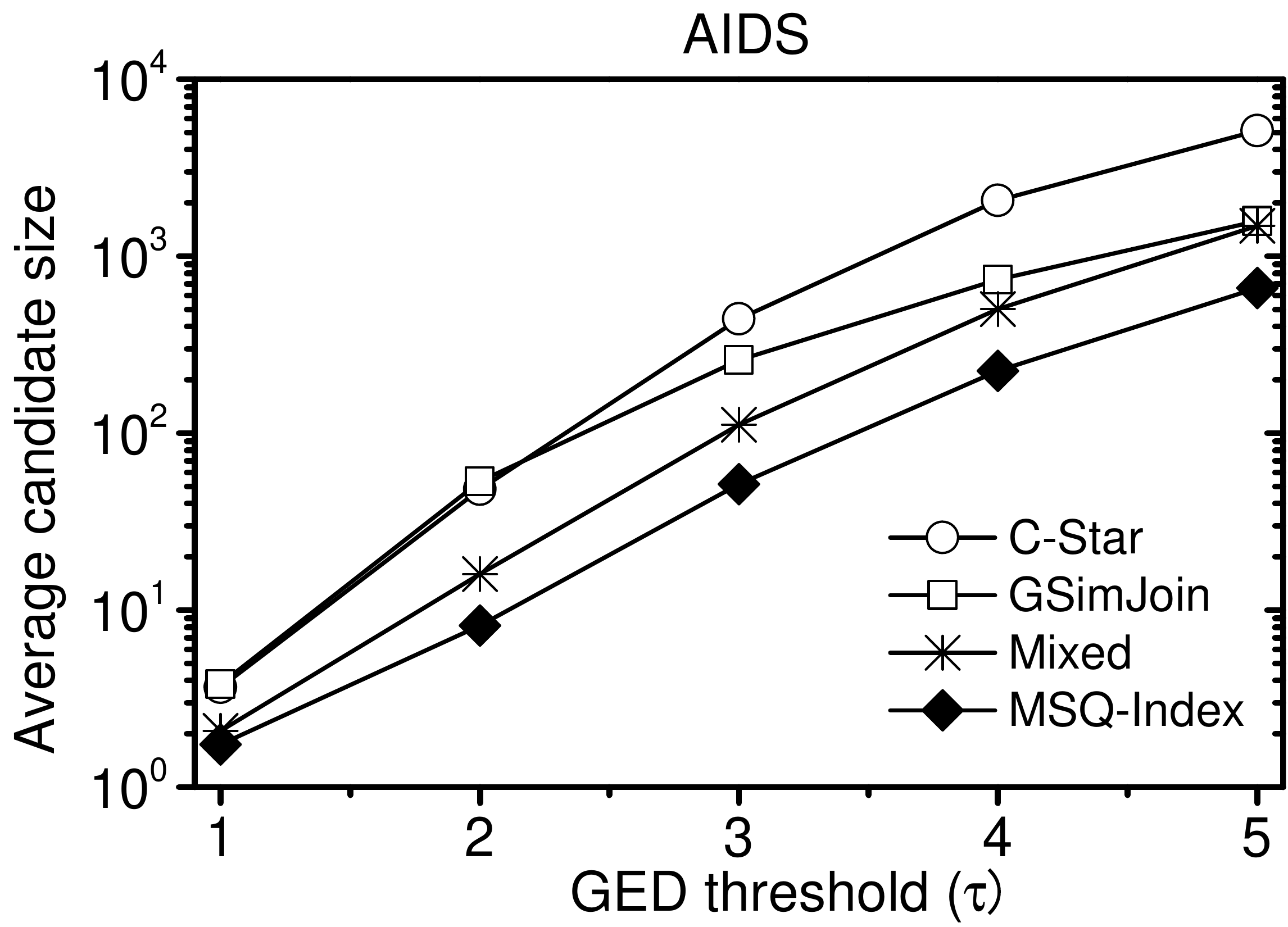}}
	\end{minipage}
	\qquad
	\begin{minipage}{0.28\linewidth}
		\centerline{\includegraphics[width=1\textwidth]{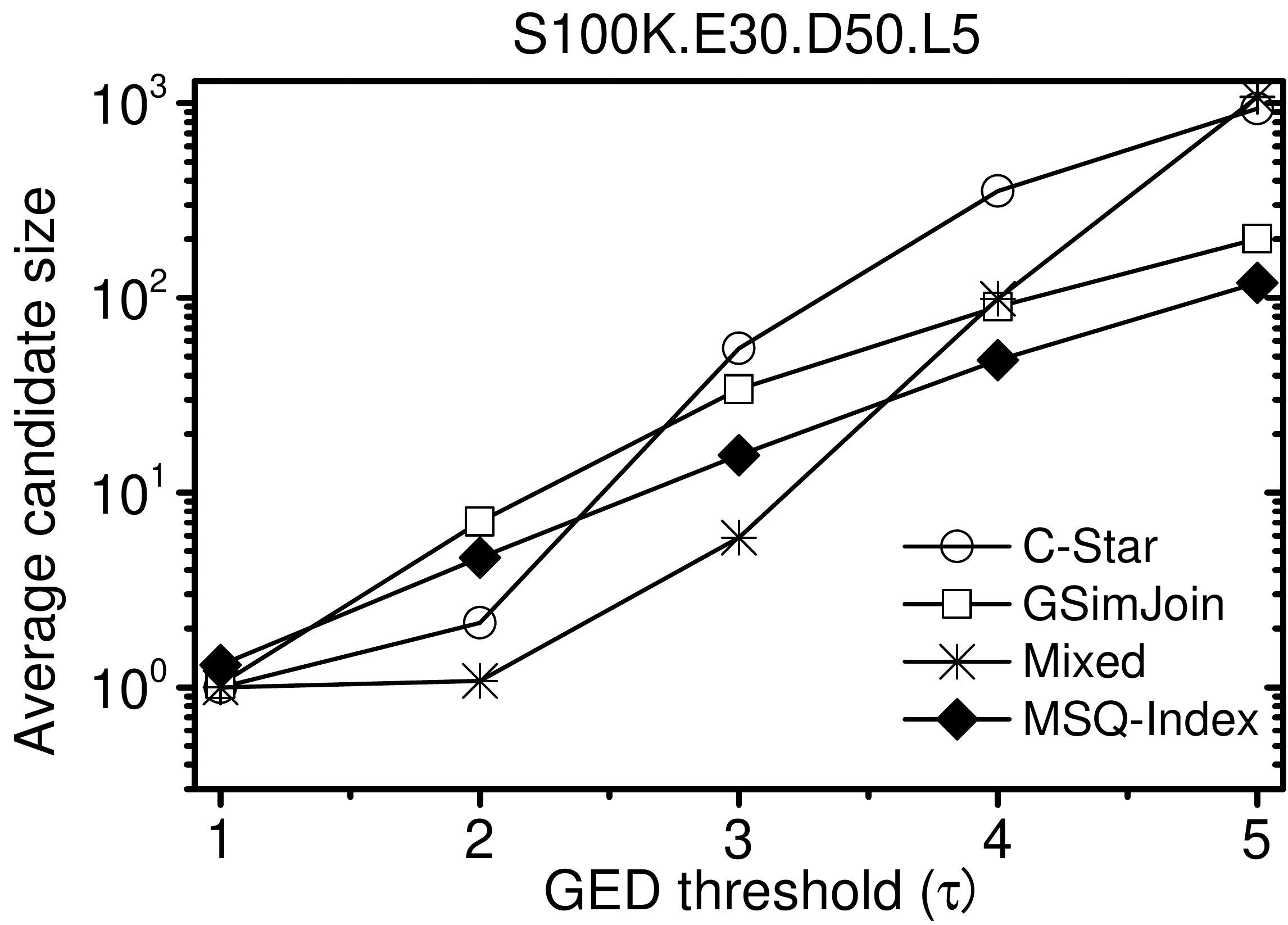}}
	\end{minipage}
	\qquad
	\begin{minipage}{0.28\linewidth}
		\centerline{\includegraphics[width=1\textwidth]{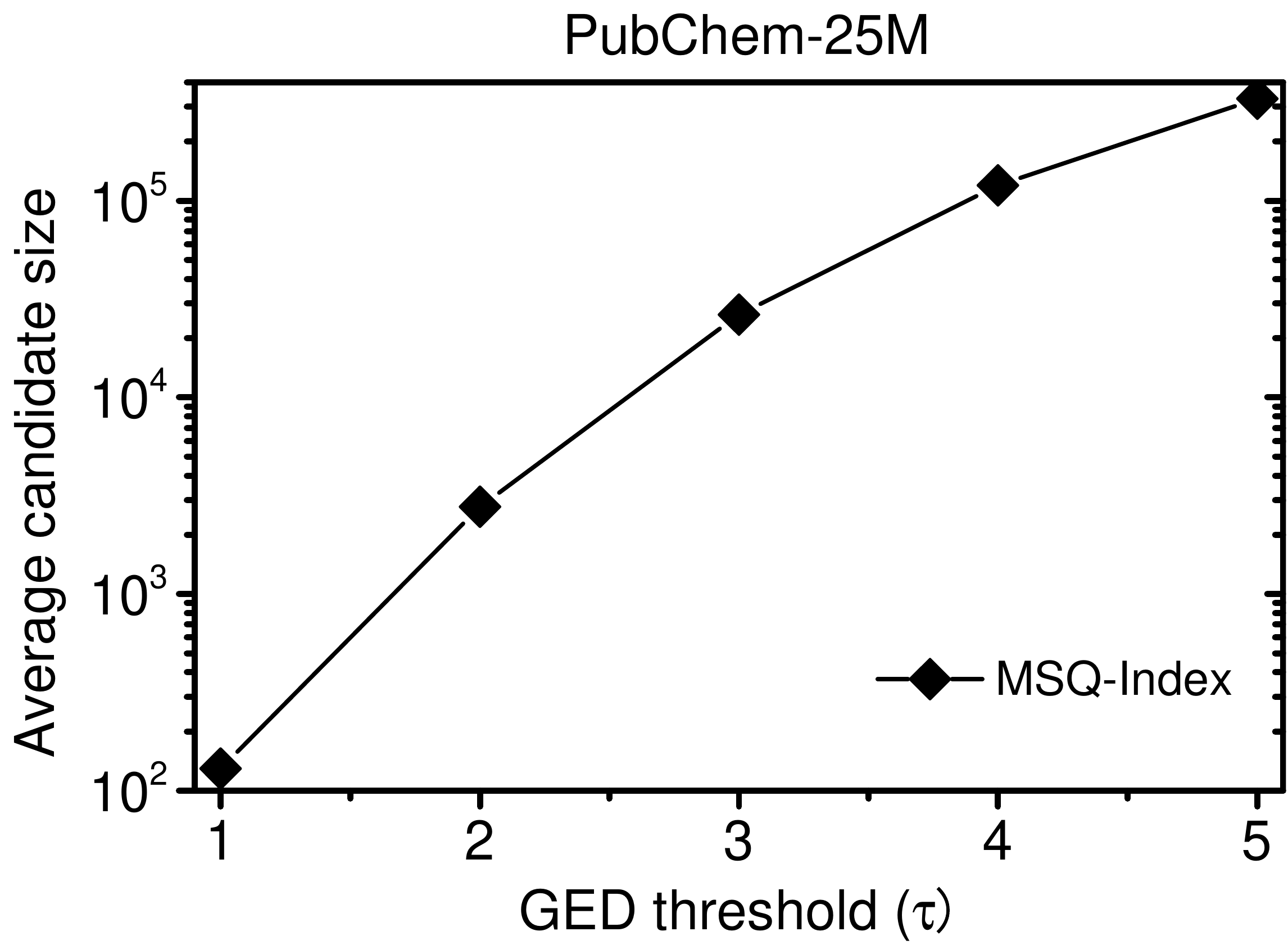}}
    \end{minipage}
    \vfill
    \begin{minipage}{0.28\linewidth}
		\centerline{\includegraphics[width=1\textwidth]{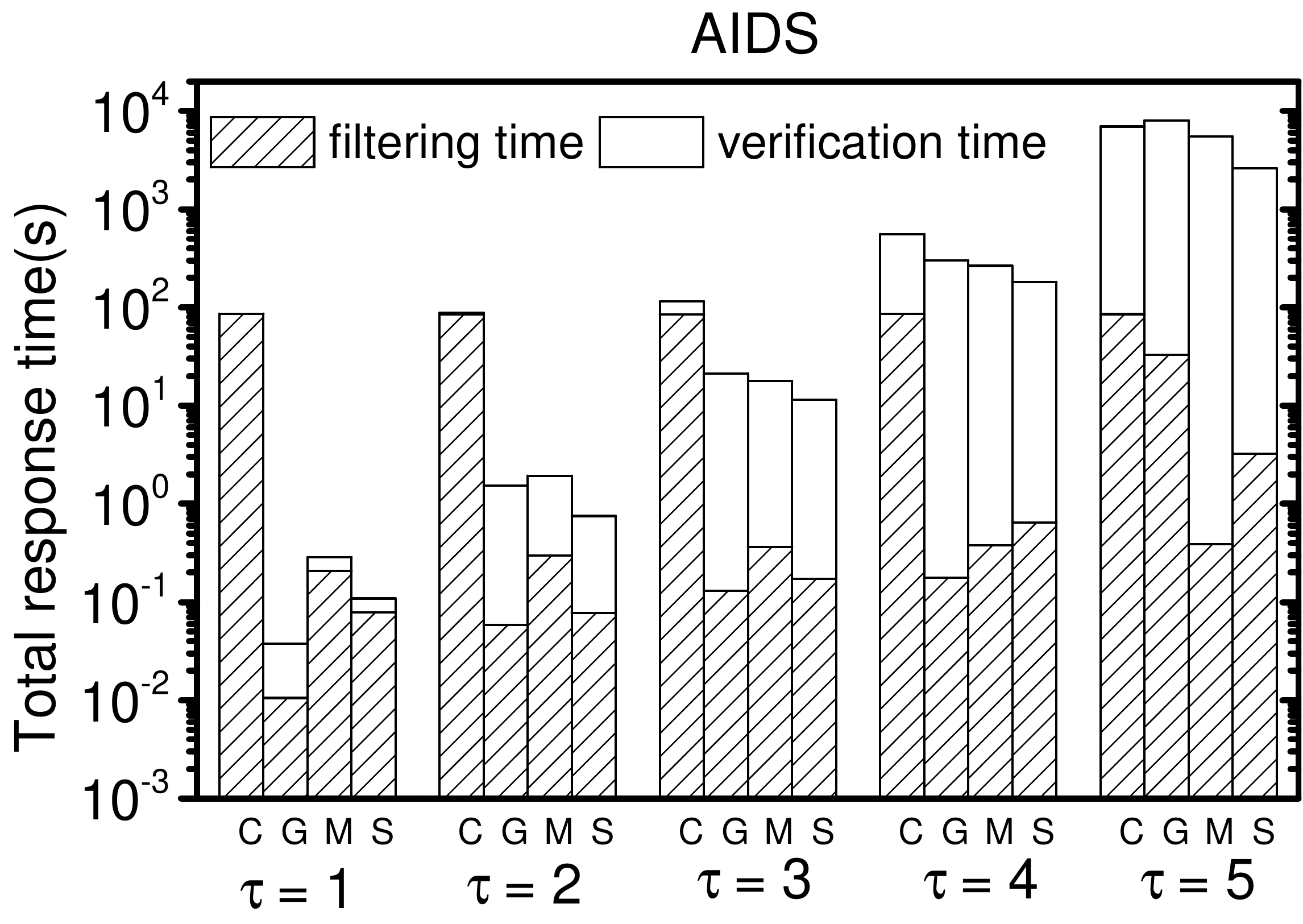}}
	\end{minipage}
	\qquad
	\begin{minipage}{0.28\linewidth}
		\centerline{\includegraphics[width=1\textwidth]{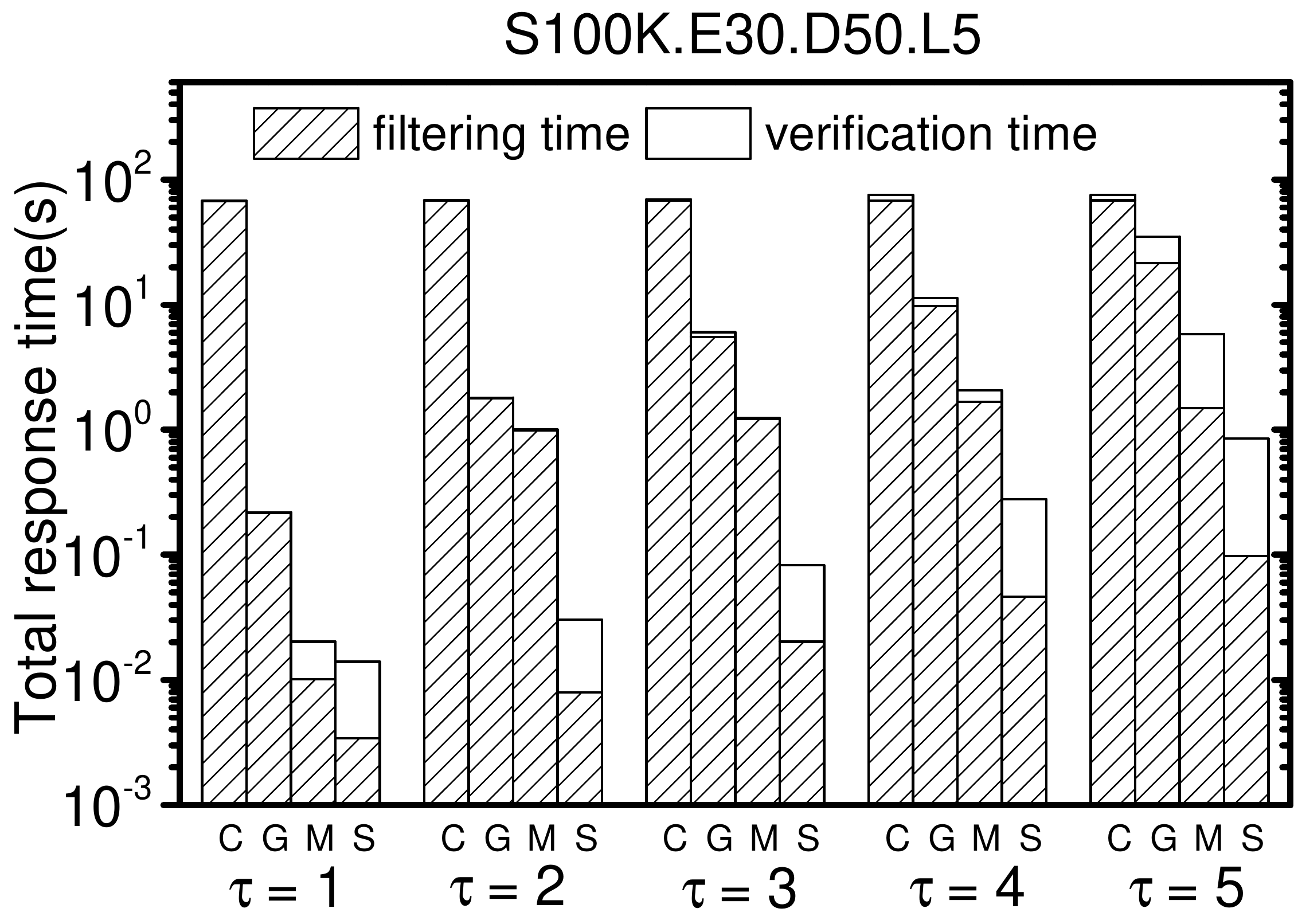}}
	\end{minipage}
	\qquad
	\begin{minipage}{0.28\linewidth}
		\centerline{\includegraphics[width=1\textwidth]{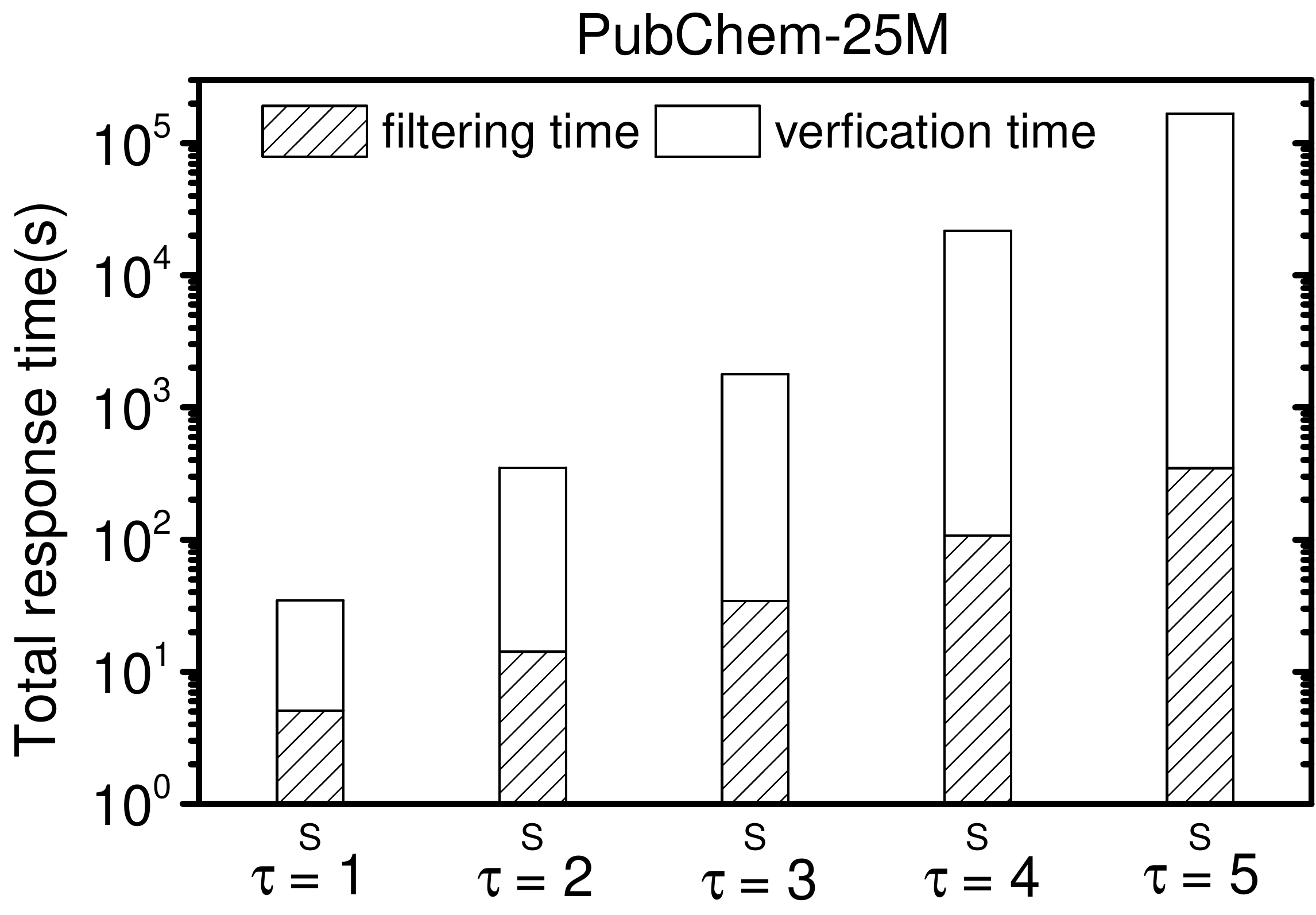}}
	\end{minipage}
    \caption{Average candidate size and total response time on the three datasets.}
    \label{fig:Fig09}
\end{figure*}

\subsubsection{Comparing with Existing Indexes}

We vary the size of datasets to evaluate the index storage space
and construction time, and show the results in Figure~\ref{fig:Fig08}.
Regarding the index size, Mixed consumes the most amount of space
in AIDS and PubChem-25M since it has to store all branch and disjoint
structures. However, GSimJoin does not perform well in S100K.E30.D50.L5
since the number of paths increases exponentially in the dense
graphs. MSQ-Index performs the best and its index size is only
5\% of that of Mixed and 15\% of that of C-Star. This is because
that (1) the total number of degree structures and label structures
is less than the number of tree structures and paths; (2) the
entries in the succinct $q$-$\id{gram}$ tree are compressed for
efficient storage. For the large dataset PubChem-25M, Mixed,
GSimJoin and C-Star cannot properly run for the memory error when
the dataset size is more than 15M, while the index size of MSQ-Index
is about 1.2GB, which achieves an excellent performance.

By Figure~\ref{fig:Fig08}, we know that C-Star has the shortest
index construction time. This is because that it only needs to
enumerate all star structures in each data graph without any complex
index. Although MSQ-Index is stored in a succinct form, its construction
time is shorter than GSimJoin and Mixed. For the large dataset
PubChem-25M, it can be built done in 1 hour.

\subsection{Filter Performance}

In this section, we evaluate the query performance of all tested
methods on the datasets AIDS and S100K.E30.D50.L5 and PubChem-25M
on two metrics: \# of candidates passed the filtering and overall
processing time. For overall processing, we further divide it into
two parts: indexing processing time and candidate verification time.

We fix the datasets and vary the edit distance threshold $\tau$
from 1 to 5 to evaluate the filter capability and response time.
Figure~\ref{fig:Fig09} shows the the average candidate size and
total response time (i.e., the filtering time plus the verification
time) of different methods for the fifty query graphs. Note that,
we combine the heuristic estimate function $h(x)$ in~\cite{ZhaoXLW2012}
into the software provided by Riesen et al.~\cite{RiesenEB2013}
to compute the exact graph edit distance in the verifcation phase
for C-Star, Mixed and MSQ-Index, except for GSimJoin has implemented
it in their executable binary file.

Regarding the candidate size, we can know that our method has the
smallest candidate size in most case. GSimJoin and C-Star do not
perform well because that both tree structures and paths have much
more overlapping. In S100K.E30.D50.L5, Mixed performs the best
when~$\tau \leq 3$ and our method has a close candidate size with
it. For the large dataset PubChem-25M, only our method can properly
run since only it can be built done in our environment.

For the response time of C-Star (denoted by ``C''), GSimJoin
(denoted by ``G''), Mixed (denoted by ``M'') and MSQ-Index
(denoted by ``S''), we know that C-Star consumes the longest
filtering time because it needs to construct a minimum weighted
bipartite graph between each data graph and the query graph. Even
though GSimJoin shows a better filtering time in AIDS, it produces
a large candidate set than MSQ-Index, making the total response
time large than MSQ-Index. Compared with Mixed, MSQ-Index can
achieve 1.6x speedup on AIDS and 3.8x speedup on S100K.E30.D50.L5
on the average. In addition, although MSQ-Index are compressed
for efficient storage, it can provide good filtering efficiency
especially when $\tau$ is small, such as the total filtering time
of MSQ-Index is less than 5s in PubChem-25M when $\tau = 1$.

\subsection{Scalability}

In this section, we evaluate the scalability performance of C-Star,
GSimJoin, Mixed and MSQ-Index on the real and synthetic datasets.

\subsubsection{Varying $|V_h|$}

We vary the query graph size from 10 to 60, and fix the size of
PubChem-25M be 5M and $\tau = 3$, respectively, to evaluate the
effect of the query graph size on the query performance. Figure~\ref{fig:Fig11}
shows the distribution of graphs in $G$, where x-axis is the graph
size, i.e., the number of vertices and y-axis is the number of graphs
of the same size in the dataset. Figure~\ref{fig:Fig12} shows the
average candidate size and total filtering time for the fifty
query graphs, respectively.

\begin{figure}[ht]
\centering
	\begin{minipage}{0.48\linewidth}
		\centerline{\includegraphics[width=1\textwidth]{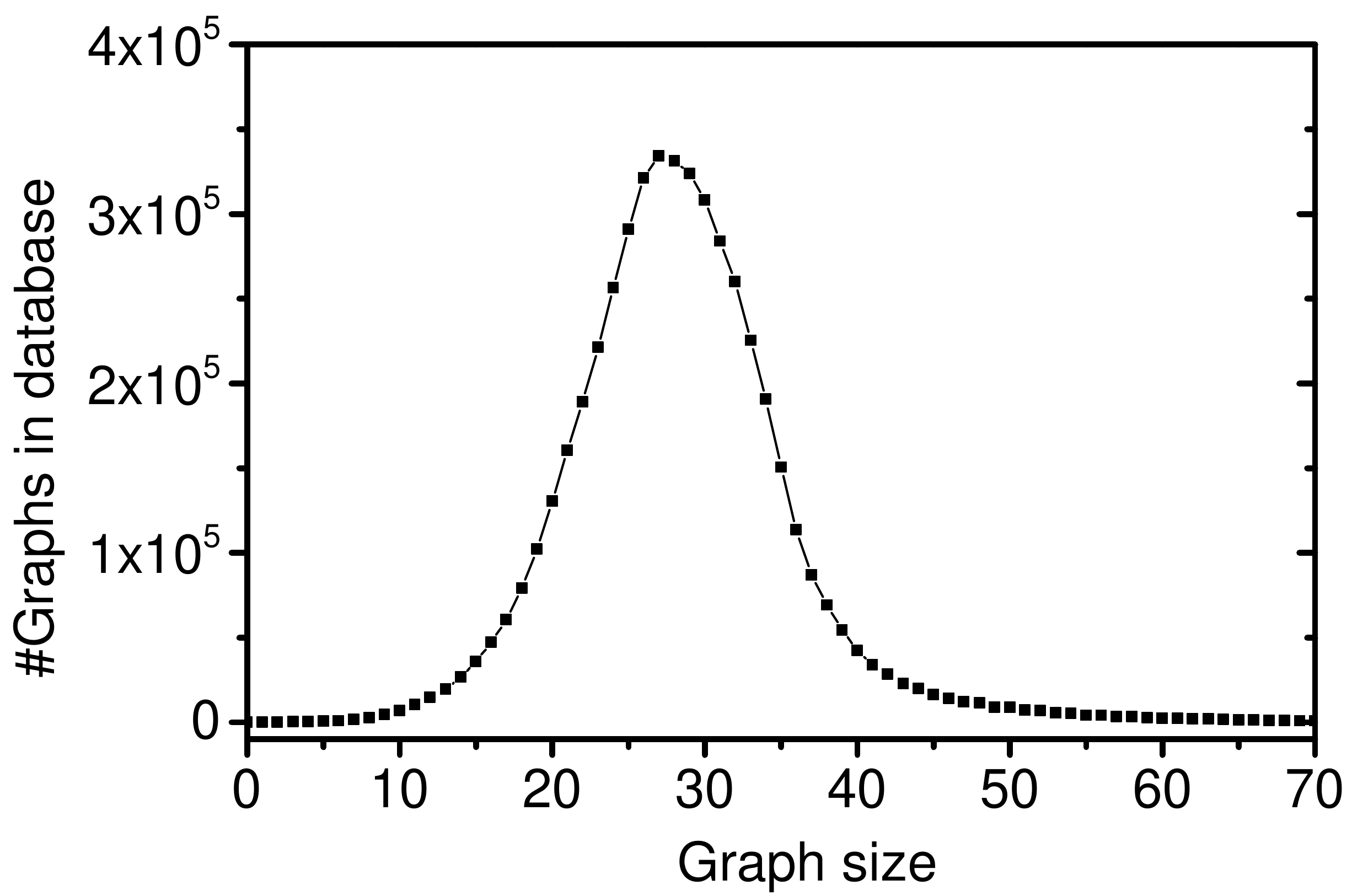}}
	\end{minipage}
    \caption{Graph distribution.}
    \label{fig:Fig11}
\end{figure}

By Figure~\ref{fig:Fig11}, we know that the distribution of
data graphs in $G$ is close to a normal distribution and the number
of graphs whose size near~30 is relatively large. Thus, the average
candidate size of all tested (excepting GSimJoin for the memory error)
methods first increase and then decrease, and achieves the maximum
when the query graph size is~30.

By Figure~\ref{fig:Fig12}(b), we know that MSQ-Index has the shortest
filtering time. Compared with Mixed, MSQ-Index can achieve 8--40x
speedup when the query graph size is less than 20 or more than 50.
The reason is that the number of data graphs whose size near 20 or
50 in the dataset is relatively small by Figure~\ref{fig:Fig11},
resulting in the query region $Q_h$ containing few data graphs.

\begin{figure}[htbp]
\centering
	\begin{minipage}{0.47\linewidth}
		\centerline{\includegraphics[width=1\textwidth]{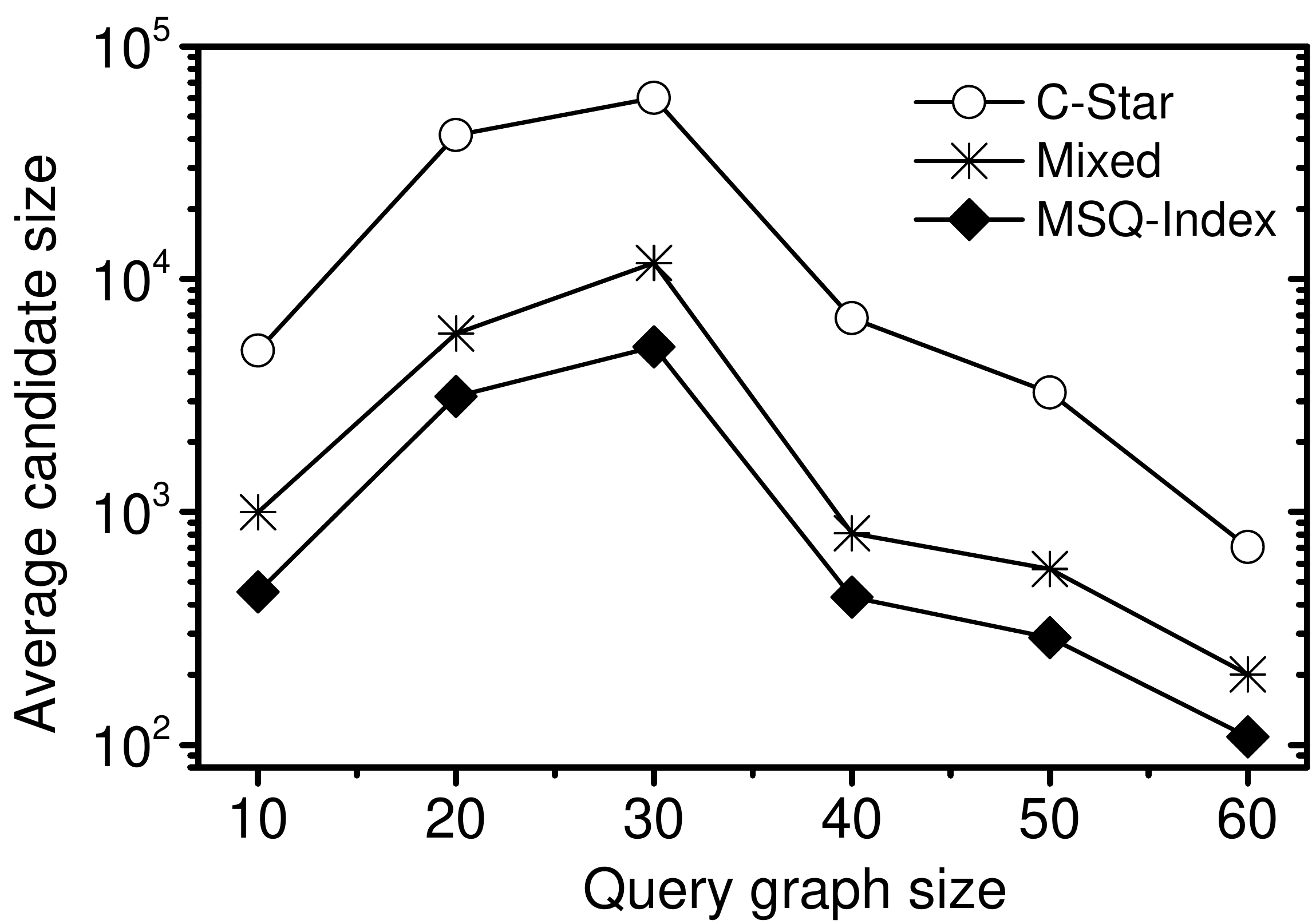}}
		\centerline{(a) Average candidate size}
	\end{minipage}
	\quad
	\begin{minipage}{0.47\linewidth}
		\centerline{\includegraphics[width=1\textwidth]{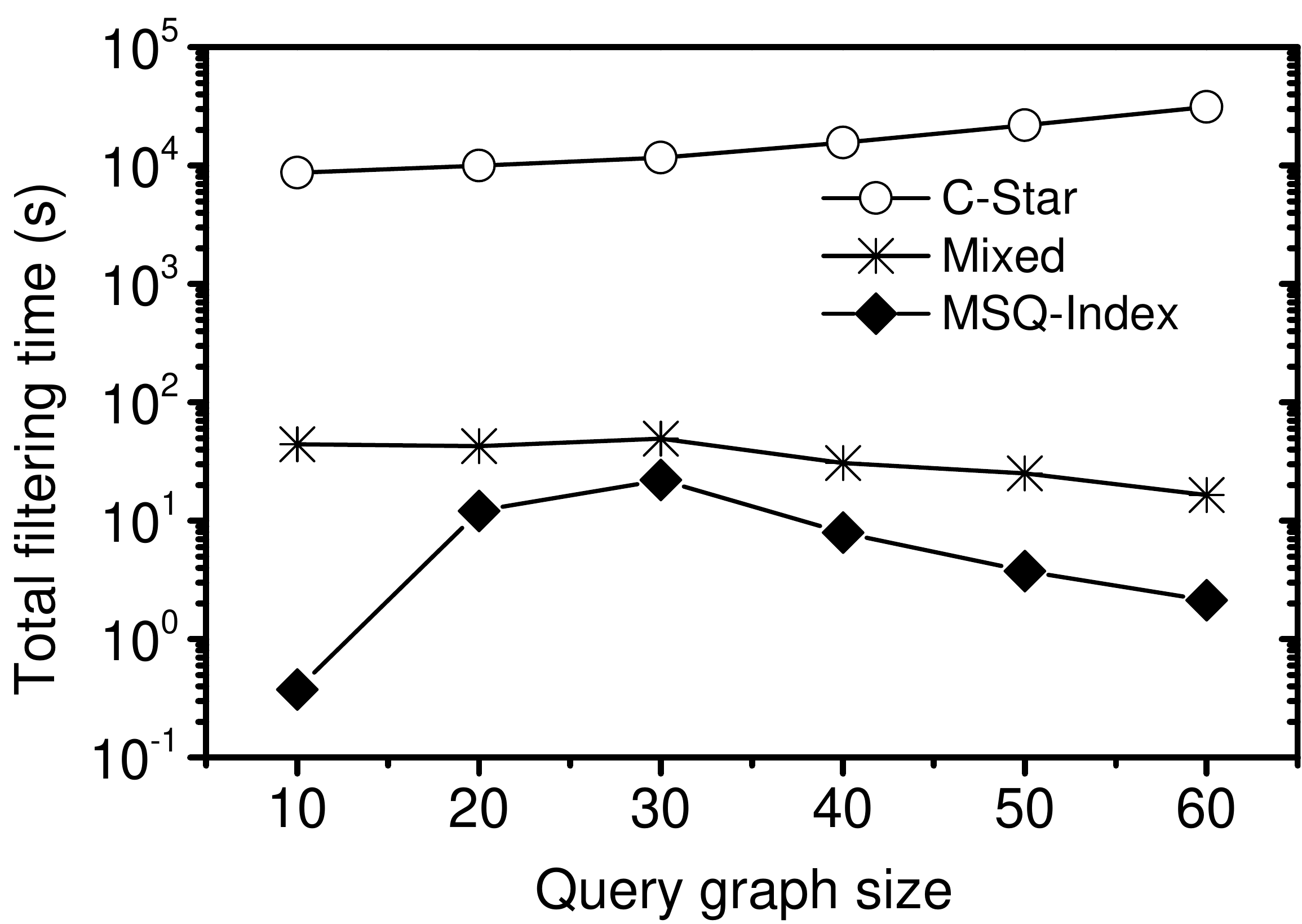}}
		\centerline{(b) Total filtering time}
	\end{minipage}
    \caption{Scalability $vs. \ |V_h|$.}
    \label{fig:Fig12}
\end{figure}

\subsubsection{Varying $|G|$}

We fix $\tau = 5$ and vary the size of PubChem-25M from 500K (kilo)
to 25M (million) to evaluate the effect of the dataset size.
Figure~\ref{fig:Fig10} shows the average candidate size and the
total response time for the fifty graphs. Among all tested methods,
MSQ-Index has the smallest candidate size and the shortest response
time. When the dataset size is 10M, GSimJoin and Mixed cannot
properly run for the memory error, and the verification time of
C-Star is longer than 48 hours, making all of them not be suitable
for such large dataset. Only MSQ-Index can easily scale to cope
with it.

\begin{figure}[htbp]
\centering
    \begin{minipage}{0.47\linewidth}
		\centerline{\includegraphics[width=1\textwidth]{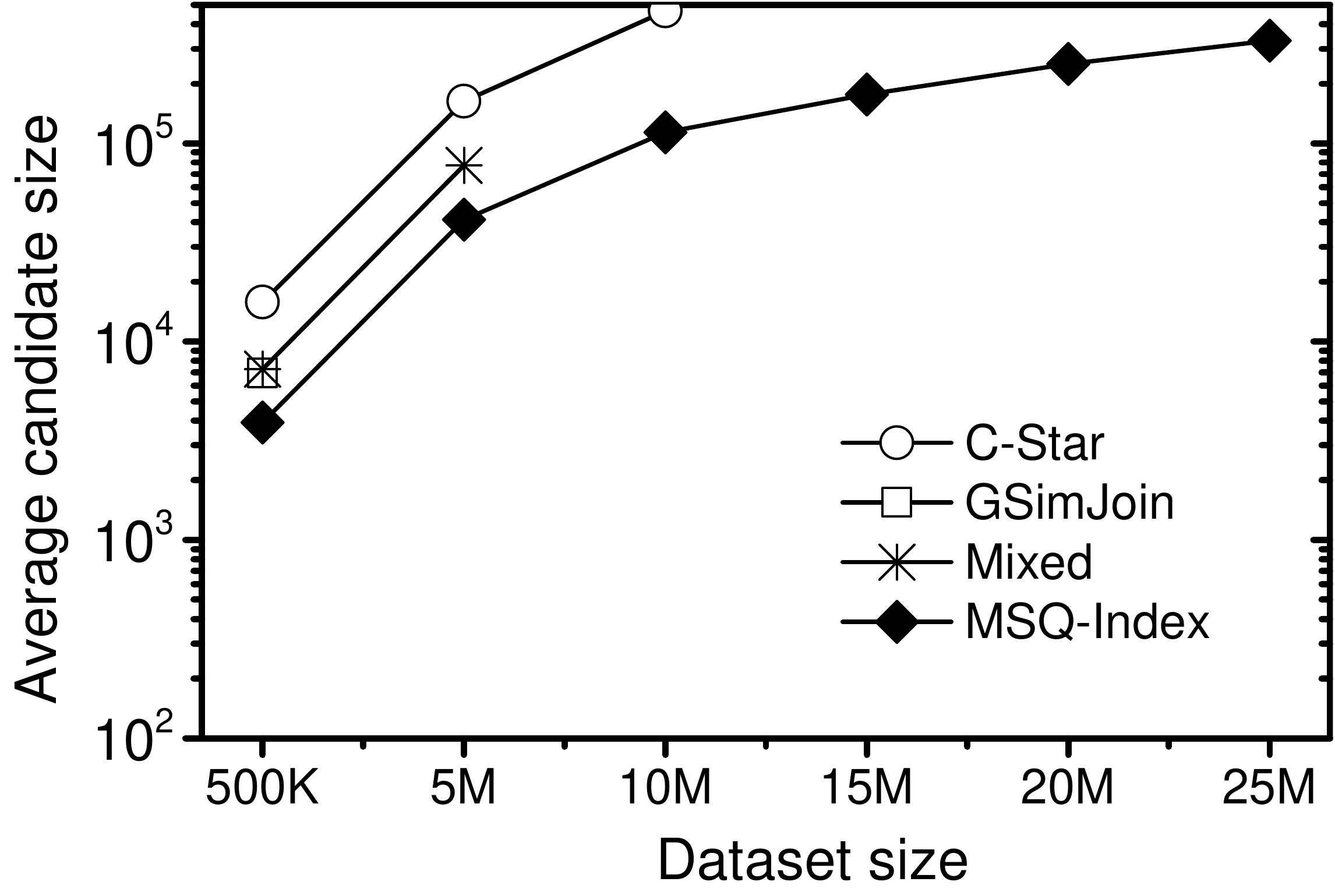}}
		\centerline{(a) Average candidate size}
	\end{minipage}
	\quad
	\begin{minipage}{0.47\linewidth}
		\centerline{\includegraphics[width=1\textwidth]{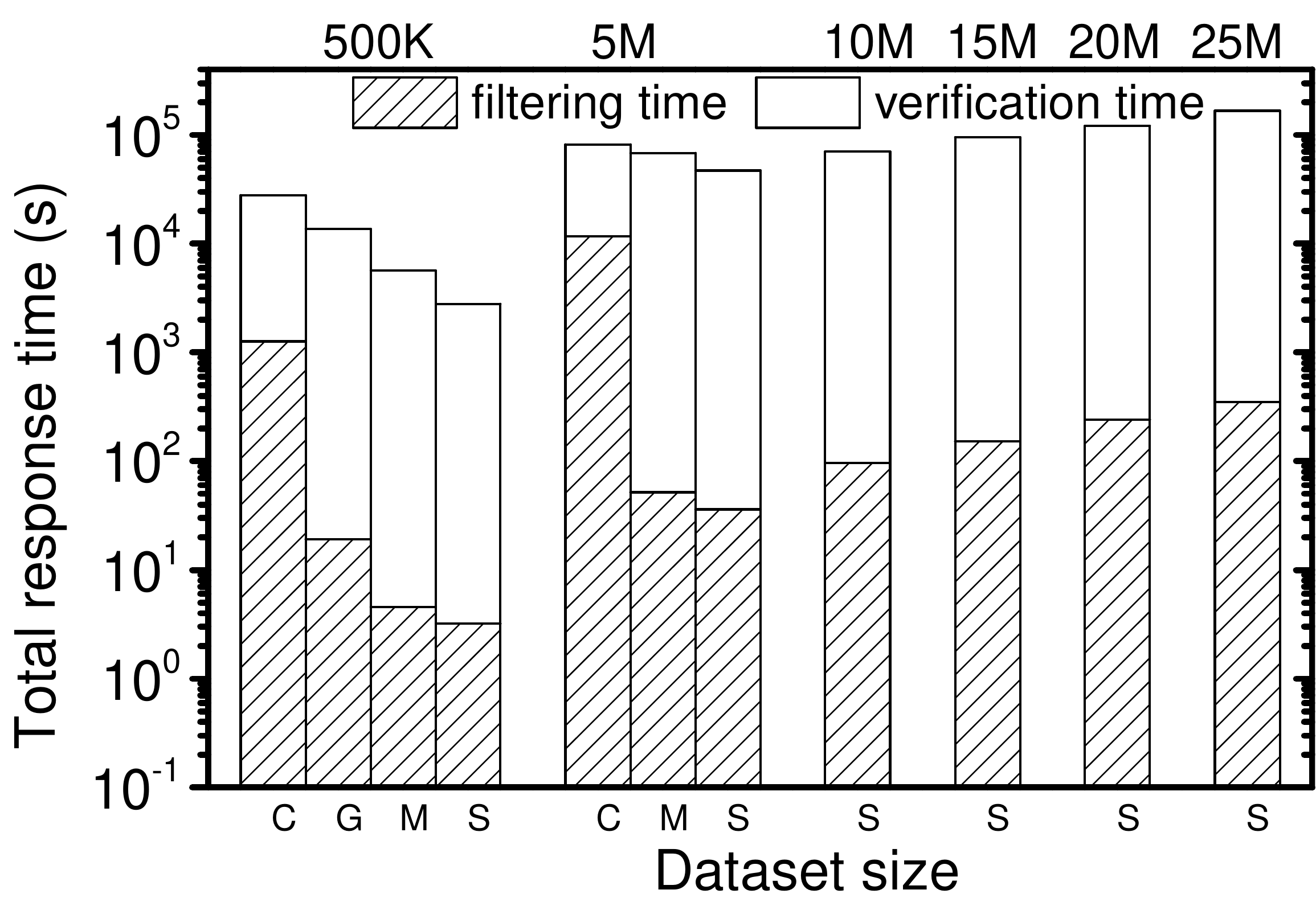}}
		\centerline{(b) Total response time}
	\end{minipage}
	\caption{Scalability $vs. \ |G|$.}
    \label{fig:Fig10}
\end{figure}

\subsubsection{Varying $|\Sigma_{V}|$}

We fix $\tau = 5$, the dataset size be 100K, the average graph
density $\rho= 50\%$, the number of edges in each data graph be 30,
respectively, and then produce a group of synthetic datasets to
evaluate the effect of the number of labels. Figure~\ref{fig:Fig13}
shows the average candidate size of all tested methods. By
Figure~\ref{fig:Fig13}, we know that the average candidate size
decreases as the number of vertex labels increases. This is
because that more information can be used to filter.

\begin{figure}[ht]
\centering
    \begin{minipage}{0.47\linewidth}
		\includegraphics[width=1\textwidth]{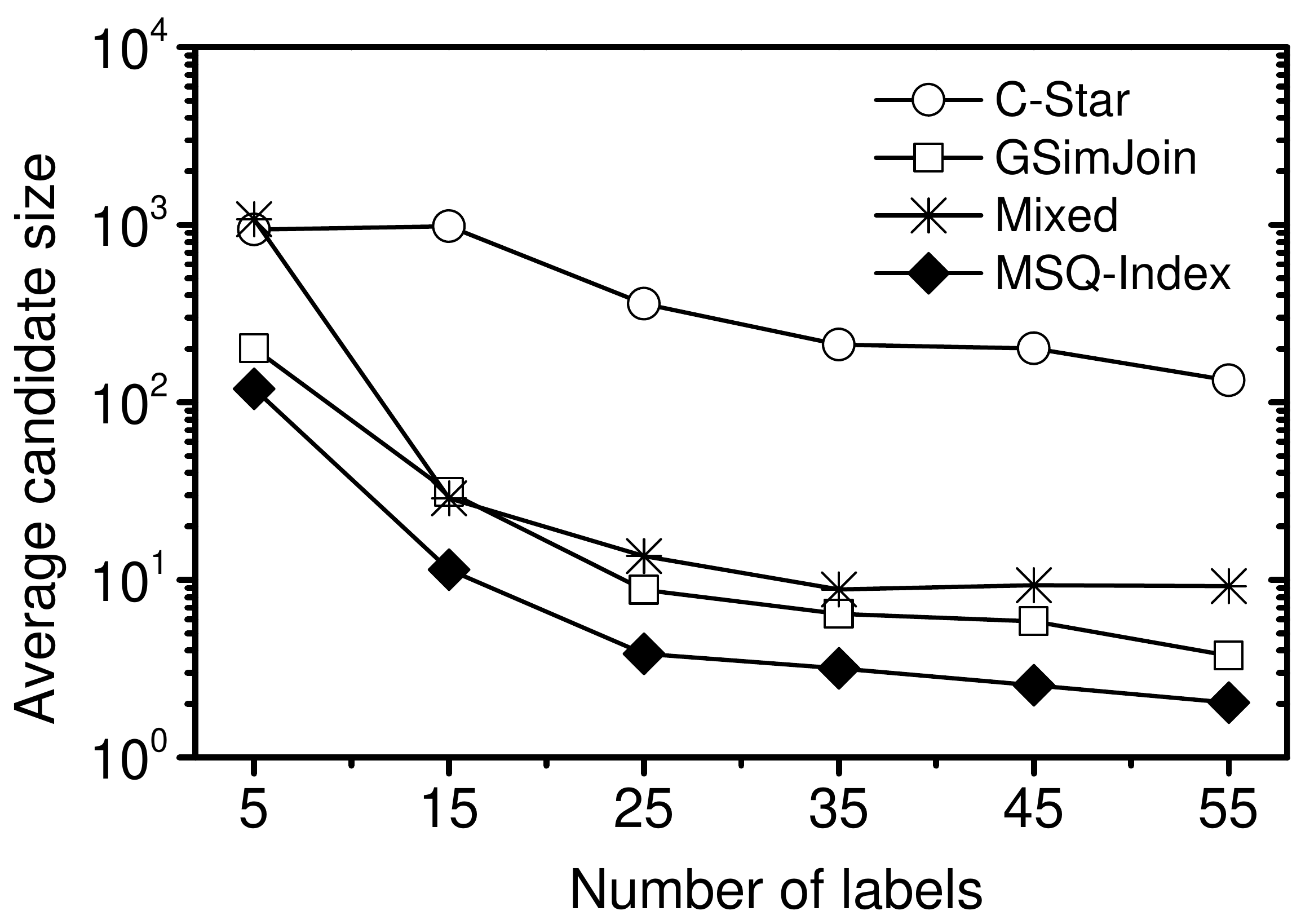}		
	    \caption{\scriptsize{Scalability $vs. \ |\Sigma_V| $}}
        \label{fig:Fig13}
    \end{minipage}
    \quad
    \begin{minipage}{0.47\linewidth}
		\includegraphics[width=1\textwidth]{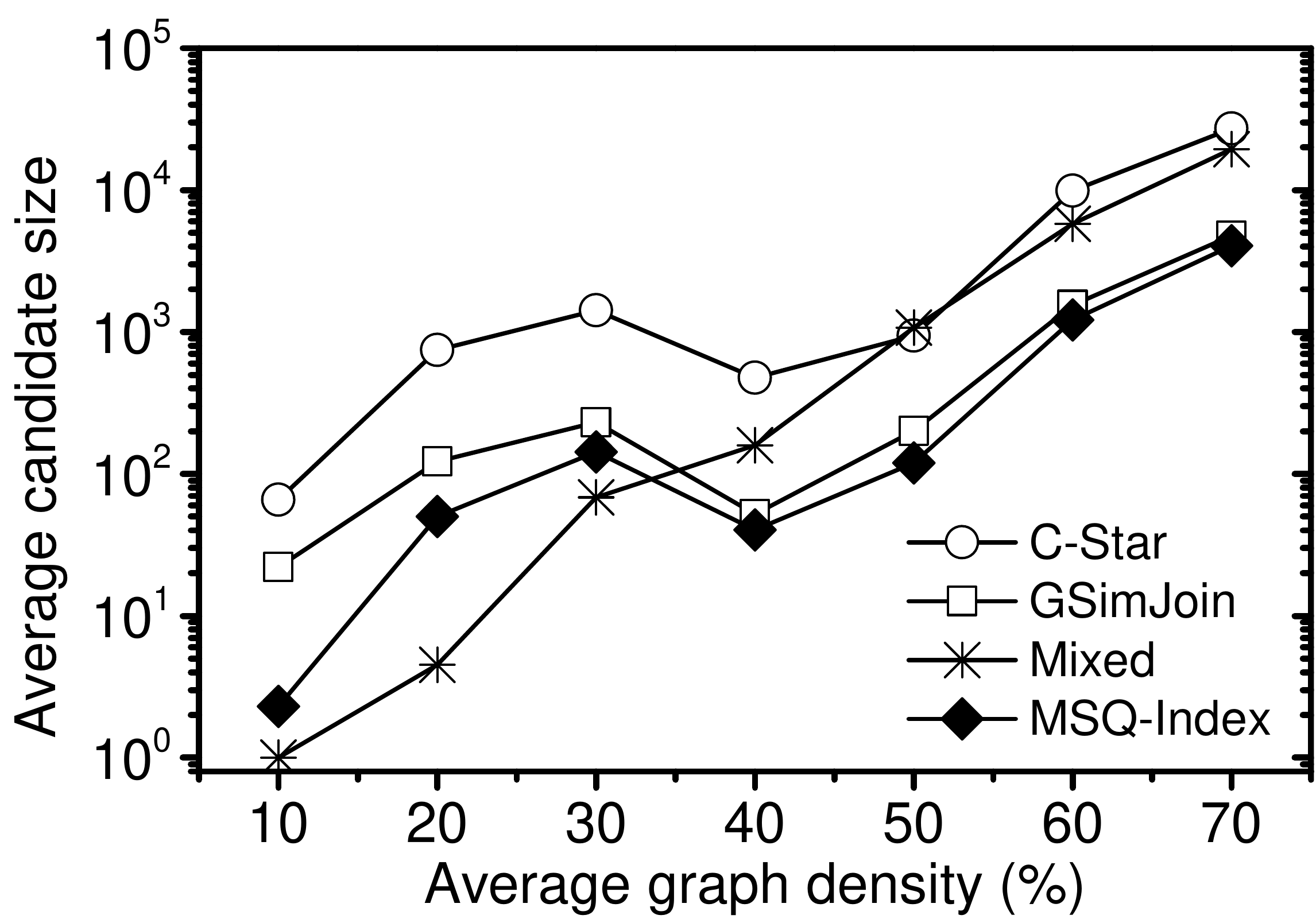}
        \caption{\scriptsize{Scalability $vs. \ \rho$.}}
        \label{fig:Fig14}	
	\end{minipage}
\end{figure}

\subsubsection{Varying $\rho$}

We fix $ \tau = 5$, the dataset size be 100K, the number of edges
and vertex labels in each data graph be 30 and 5, respectively,
and then produce a group of synthetic datasets to evaluate the
effect of the average density. Figure~\ref{fig:Fig14} shows the
average candidate size of all tested methods. It shows that the
candidate size of all tested methods increases as $\rho$ increases
when $\rho \geq 40\%$. This is because that all tested methods
only using the local structures have a weak filter ability for
the the density graphs.

\section{Related Works}
\label{sec:relatedWorks}

Recently, graph similarity search has received considerable
attention. $\kappa$-AT~\cite{WangWYY2012} and GSimJoin~\cite{ZhaoXLW2012}
are two major $q$-$\id{gram}$ counting filters. In $\kappa$-AT,
a $q$-$\id{gram}$ is defined as a tree consisting of a vertex $v$
and the paths whose length no longer than $\kappa$ starting from $v$.
However, GSimJoin considered the simple path whose length is $p$
as a $q$-$\id{gram}$. The principle of the $q$-$\id{gram}$ counting
filter is stated as follows: if $ged(g, h) \leq \tau$, graphs $g$
and $h$ must share at least $\idrm{max}\{|Q(g)|-\gamma_g \cdot \tau, |Q(h)|-\gamma_h \cdot \tau\}$
common $q$-$\id{grams}$, where $Q(g)$ and $Q(h)$ denote the multisets
of $q$-$\id{grams}$ in $g$ and $h$, respectively, $\gamma_g$ and $\gamma_h$
are the maximum number of $q$-$\id{grams}$ that can be affected by
an edit operation, respectively. C-Star~\cite{ZengTWFZ2009} and
Mixed~\cite{ZhengZLWZ2013,ZhengZLWZ2015} are two mapping distance-based
$\idrm{filters}$. The lower bounds are $L_S(g,h) = \frac{s_m (g,h)}{max\{4,max\{d_g,d_h\}+1\}}$ and $L_B(g, h) = \frac{b_m (g,h)}{2}$,
respectively, where $s_m(g, h)$ and $b_m(g, h)$ are the mapping
distances derived based on the minimum weighted bipartite graphs
between the star and branch structures of $g$ and $h$, respectively,
and $d_g$ and $d_h$ are the respective maximum degrees in $g$
and $h$. SEGOS~\cite{WangDTYJ2012} introduced a two-level index
structure to speed up the filtering process, which has the same
filter ability with C-star. Pars~\cite{ZhaoXLLZ2013} divided each
data graph $g$ into $\tau + 1$ non-overlapping substructures, and
pruned the graph $g$ if there exists no substructure that is subgraph
isomorphic to $h$. The above methods show different performance on
different databases and we can hardly prove the merits of them
theoretically~\cite{GoudaA2015}.

In the verification phase, $A^*$ algorithm~\cite{HartNB1968} is widely
used to compute the exact graph edit distance. Zhao et.al~\cite{ZhaoXLW2012}
and Gouda et.al~\cite{GoudaH2016} designed different heuristic estimate
functions to improve $A^*$. Note that, we only focus on the filtering
phase in this paper.

When the database contains millions of graphs, many existing approaches
cannot properly run. gWT~\cite{TabeiT2011} utilized the Weisfeiler-Lehman (WL)
kernel~\cite{ShervashidzeB2009} function to compute the similarity
between two graphs and constructed a wavelet tree~\cite{GrossiGV2003}
to speed up the query processing. Chen et.al~\cite{ChenXBXC2014} built
the index structure on a hadoop~\cite{DeanG2008} cluster to search on
the large database. Unlike previous methods, we propose the first
succinct index structure for this problem for efficient storage.
Our index can scale to cope with the large dataset of millions of graphs.

\section{Conclusions and Future Work}
\label{sec:conclusion}

We present an space-efficient index structure for the graph similarity
search problem, whose encoded sequence~$\id{S_X}$ requires
$|\id{\varPsi_X}|\log b_X^{m} + |\id{\varPsi_X}|$ bits, where~$X$
denotes~$D$ or $L$. Our index structure incorporates succinct data
structures and hybrid encoding to significantly reduce the index
space usage while at the same time keeping fast query performance.
Each entry in $\id{\varPsi_X}$ requires about between 4 and 6 bits
on our data, which is much smaller than that used to represent an
entry in the compared indexing methods in this paper. However,
there is still room for improvement on this space bound of $\id{S_X}$.
The design of a representation of the $q$-$\id{gram}$ tree that
achieves the entropy-compressed space bound while still preserving
query efficiency is left as a future work.

\balance

\section{Acknowledgments}
The authors would like to thank Weiguo Zheng and Lei Zhou for providing
their source files, and thank Xiang Zhao and Xuemin Lin for providing
their executable files.This work is supported in part by China NSF
grants 61173025 and 61373044, and US NSF grant CCF-1017623. Hongwei Huo
is the corresponding author.


\end{document}